\definecolor{quantumviolet}{HTML}{53257F}
\definecolor{quantumgray}{HTML}{555555}
\let\counterwithout\relax
\let\counterwithin\relax
\newcounter{parentnumber}
\newtheoremstyle{red}{}{}{\itshape}{}{\color{red!80!black}\bfseries}{.}{ }{}
\definecolor{darkred}{rgb}{0.57,0,0.12}
\let\nc\newcommand
\nc{\note}[1]{{\color{blue!90!black} #1}}
\nc{\noteb}[1]{{\color{red!80!black}\textbf{#1}}}
\nc{\incoh}[1]{{\mathcal{C}^{(#1-1)}}}
\nc{\produc}[1]{{\mathcal{P}^{(#1)}}}
\nc{\cmin}{c_{\min}}
\nc{\CG}{C_{\text{GME}}}
\nc{\gram}[1]{G^{(#1)}}
\nc{\pen}{^{-1}}
\DeclareMathOperator{\Tr}{Tr}
\DeclareMathOperator{\NN}{NN}
\DeclareMathOperator{\supp}{supp}
\DeclareMathOperator{\sspan}{span}
\DeclareMathOperator{\aff}{aff}
\let\Re\relax
\DeclareMathOperator{\Re}{Re}
\nc{\psic}{\psi^{c}}
\nc{\two}[1]{\underline{2^{d-#1}}}
\nc{\Hanc}{\mathcal{H}_{\text{anc}}}
\nc{\psianc}{\psi_{\text{anc}}}
\nc{\lampow}{\lambda^{1/d}}
\nc{\norm}[2]{\left\lVert#1\right\rVert_{\,#2}}
\nc{\proj}[1]{\ket{#1}\!\bra{#1}}
\nc{\pro}[1]{#1 #1^\dagger}
\nc{\lnorm}[2]{\left\lVert#1\right\rVert_{\ell_{#2}}}
\nc{\VV}{V^{\,\C}_\Sp}
\nc{\R}{\mathcal{R}}
\nc{\W}{\mathcal{W}}
\nc{\T}{\mathcal{T}}
\nc{\B}{\mathcal{B}}
\nc{\C}{\mathcal{C}}
\nc{\U}{\mathcal{U}}
\nc{\E}{\mathcal{E}}
\nc{\EE}{\mathscr{E}}
\nc{\K}{\mathcal{K}}
\nc{\V}{\mathcal{V}}
\nc{\X}{\mathcal{X}}
\nc{\F}{\mathbb{F}}
\nc{\D}{\mathcal{D}}
\nc{\Y}{\mathcal{Y}}
\nc{\M}{\mathcal{M}}
\nc{\N}{\mathcal{N}}
\nc{\I}{\mathcal{I}}
\nc{\Q}{\mathbb{Q}}
\nc{\RR}{\mathbb{R}}
\nc{\CC}{\mathbb{C}}
\nc{\HH}{\mathbb{H}}
\nc{\MM}{\mathbb{M}}
\nc{\DD}{\mathbb{D}}
\renewcommand{\NN}{\mathbb{N}}
\renewcommand{\O}{\mathbb{O}}
\nc{\J}{\mathcal{J}}
\let\FF\F
\let\OO\O
\newcommand\floor[1]{\left\lfloor#1\right\rfloor}
\let\succc\succ
\nc{\mleq}{\preceq}
\nc{\mgeq}{\succeq}
\nc{\mle}{\prec}
\nc{\mge}{\succc}
\renewcommand{\succ}{{\mathrm{succ}}}
\nc{\CPTP}{{\mathrm{CPTP}}}
\nc{\CP}{{\mathrm{CP}}}
\nc{\CPTN}{{\mathrm{CPTN}}}
\nc{\ext}[1]{\operatorname{ext}\left(#1\right)}
\renewcommand{\*}{\textup{*}}
\renewcommand{\>}{\right\rangle}
\renewcommand{\bar}{\;\rule{0pt}{9.5pt}\right|\;}
\nc{\lset}{\left\{\left.}
\nc{\rset}{\right\}}
\DeclareMathOperator{\cone}{cone}
\nc{\ve}{\varepsilon}
\nc{\cbraket}[1]{\left|\braket{#1}\right|}
\nc{\id}{\mathbbm{1}}
\nc{\idc}{\mathrm{id}}
\nc{\mnorm}[1]{\norm{#1}{[m]}}
\nc{\knorm}[1]{\norm{#1}{(k)}}
\renewenvironment{boxed}[1]%
	{\expandafter\ifstrequal\expandafter{#1}{orange}{\begin{tcolorbox}[colback=red!15,colframe=orange!15,breakable,enhanced]}{\begin{tcolorbox}[colback=white,colframe=gray!10,breakable,enhanced]}}%
	{\end{tcolorbox}}
\nc{\red}[1]{{\color{red!50!black} #1}}
{\begin{boxed}{white}\vspace{-\baselineskip}\begin{equation}}%
{\end{equation}\end{boxed}}
\theoremstyle{plain}
\newtheorem{theorem}{Theorem}
\newtheorem{proposition}[theorem]{Proposition}
\newtheorem{corollary}[theorem]{Corollary}
\newtheorem{lemma}[theorem]{Lemma}
\theoremstyle{definition}
\newtheorem*{remark}{Remark}
\theoremstyle{red}
\let\oldproofname\proofname
\renewcommand{\proofname}{\rm\bf{\oldproofname}}
\nc{\lsetr}{\left\{\,}
\nc{\rsetr}{\right.\right\}}
\nc{\barr}{\;\rule{0pt}{9.5pt}\left|\;}
\nc{\ketbra}[2]{\ket{#1}\!\bra{#2}}
\nc{\prob}{\mathrm{prob}}
\nc{\eqt}[1]{\stackrel{\mathclap{\mbox{\scriptsize #1}}}{=}}
\nc{\wt}{\widetilde}
\nc{\Rmax}{R_{\max}}
\nc{\Pmax}{R_{\max}}
\nc{\Qmin}{Q_{\min}}
\nc{\Qmax}{V_{\FF}}
\nc{\Qmaxa}{V_{\aff(\FF)}}
\nc{\Rs}{R_{s}}
\nc{\RW}{\Omega_\F}
\nc{\din}{d_\mathrm{in}}
\nc{\dout}{d_\mathrm{out}}
\nc{\gbar}{,\;}
\nc{\Hh}{\wt{H}}
\nc{\HP}{\Xi}
\nc{\bb}{{\bullet\bullet}}
\nc{\phim}{{\phi_{\star}}}
\def\l@f@section{%
 \addpenalty{\@secpenalty}%
 \addvspace{0.4em plus\p@}%
}%
\newcommand\xxrightarrow[2][]{\mathrel{%
  \setbox2=\hbox{\stackon{\scriptstyle#1}{\scriptstyle#2}}%
  \stackunder[0pt]{%
    \xrightarrow{\makebox[\dimexpr\wd2\relax]{$\scriptstyle#2$}}%
  }{%
   \scriptstyle#1%
  }%
}}
\newcommand{\tends}{\xrightarrow{\vphantom{.}\smash{{\raisebox{-1.8pt}{\scriptsize{$n\!\to\!\infty$}}}}}}
\newcommand{\tendsk}{\xrightarrow{\vphantom{.}\smash{{\raisebox{-1.8pt}{\scriptsize{$k\!\to\!\infty$}}}}}}
\newcommand{\transf}{\xxrightarrow[\vphantom{a}\smash{{\raisebox{0pt}{\ssmall{$\OO$}}}}]{}}
\def\frontmatter@title@above{\noindent}
\def\frontmatter@title@format{\normalfont\sffamily\huge\noindent\hyphenpenalty=5000}
\def\frontmatter@preabstractspace{1.5em}
\def\frontmatter@authorformat{\def\@makefnmark{\relax}{}\vspace*{1.5em}\sffamily\raggedright\large}
\def\frontmatter@above@affiliation{\vspace*{1.5em}}
\def\frontmatter@affiliationfont{\normalfont\sffamily\color{quantumgray}\footnotesize}
\def\frontmatter@title@produce{%
 \begingroup
  \frontmatter@title@above
  \frontmatter@title@format
  \parbox{\linewidth}{\@title}
  \unskip
  \phantomsection\expandafter\@argswap@val\expandafter{\@title}{\addcontentsline{toc}{title}}%
  \@ifx{\@title@aux\@title@aux@cleared}{}{%
   \expandafter\frontmatter@footnote\expandafter{\@title@aux}%
  }%
  \par
  \frontmatter@title@below
 \endgroup
}%
\begin{document}

 \title{\texorpdfstring{\href{https://quantum-journal.org/?s=Tight\%20constraints\%20on\%20probabilistic\%20convertibility\%20of\%20quantum\%20states&reason=title-click}{\color{quantumviolet}Tight constraints on probabilistic convertibility of quantum states}}{Tight constraints on probabilistic convertibility of quantum states}}
 
 \author{\href{https://orcid.org/0000-0001-7225-071X}{\color{black}Bartosz Regula}}
 \email{bartosz.regula@gmail.com}
\affiliation{\parbox{\linewidth}{Department of Physics, Graduate School of Science, The University of Tokyo, Bunkyo-ku, Tokyo 113-0033, Japan}}


\begin{abstract}
We develop two general approaches to characterising the manipulation of quantum states by means of probabilistic protocols constrained by the limitations of some quantum resource theory.

First, we give a general necessary condition for the existence of a physical transformation between quantum states, obtained using a recently introduced resource monotone based on the Hilbert projective metric.
In all affine quantum resource theories (e.g.\ coherence, asymmetry, imaginarity) as well as in entanglement distillation, we show that the monotone provides a necessary and sufficient condition for one-shot resource convertibility under resource--non-generating operations, and hence no better restrictions on all probabilistic protocols are possible. We use the monotone to establish improved bounds on the performance of both one-shot and many-copy probabilistic resource distillation protocols.

Complementing this approach, we introduce a general method for bounding achievable probabilities in resource transformations under resource--non-generating maps through a family of convex optimisation problems. We show it to tightly characterise single-shot probabilistic distillation in broad types of resource theories, allowing an exact analysis of the trade-offs between the probabilities and errors in distilling maximally resourceful states.
We demonstrate the usefulness of both of our approaches in the study of quantum entanglement distillation.
\end{abstract}

\begin{adjustwidth*}{.8cm}{.8cm}
 \maketitle
\end{adjustwidth*}


\vspace*{-1em}
 \tableofcontents

\section{Introduction}

Establishing precise conditions for the existence of a physical transformation between two quantum states under different practical constraints has attracted a large body of work --- from the early contributions by Alberti and Uhlmann~\cite{alberti_1980}, to the characterisation of entanglement transformations by Nielsen~\cite{nielsen_1999} and Vidal~\cite{vidal_1999-1}, to more recent efforts to generalise such conditions in various ways~\cite{chefles_2004,buscemi_2012,reeb_2011,heinosaari_2012,horodecki_2013,gour_2015,alhambra_2016,buscemi_2017,gour_2017,gour_2018,takagi_2019,liu_2019,buscemi_2019,dallarno_2020,regula_2020,zhou_2020}.
Many problems of this type can be understood as instances of the question of resource convertibility within the formalism of quantum resource theories~\cite{horodecki_2013-3,chitambar_2019}. However, most works that dealt with resource transformations, especially in the context of general quantum resources, focused only on deterministic transformations~\cite{brandao_2015,gour_2017,takagi_2019,liu_2019,regula_2020,fang_2020,gonda_2019,zhou_2020,hsieh_2020,kuroiwa_2020,ferrari_2020,regula_2021-1,fang_2022}. This places strong limitations on the practical utility of such approaches.

In particular, probabilistic protocols --- representing processes that depend on the outcome of some measurements, and thus have a certain probability of failing to perform the desired transformation --- are ubiquitous in the exploitation of quantum resources in practice. They underlie, for instance, the majority of the commonly used schemes for the distillation of entanglement~\cite{bennett_1996,horodecki_2009} or magic (non-stabiliser) states~\cite{bravyi_2005,campbell_2017}. There are strong operational motivations to prefer such protocols to deterministic quantum channels: it is known that probabilistic transformations can be significantly more powerful than deterministic ones within the same resource-theoretic constraints~\cite{lo_2001,vidal_1999-1,dur_2000,horodecki_1999-1}. This fact is often used in practice to `gamble' with quantum resources~\cite{bennett_1996,lo_2001,horodecki_1999-1,rozpedek_2018,fang_2018}: although some desired performance thresholds might not be achievable deterministically, one can often take a risk by sacrificing some success probability in the hopes of obtaining a better performance stochastically.
In some settings, this gap becomes quite extreme, as is the case in the manipulation of pure multipartite entangled states, where deterministic transformations are almost never possible~\cite{devicente_2013,gour_2017-1,sauerwein_2018}, but probabilistic manipulation is nevertheless feasible~\cite{gour_2017-1,sauerwein_2018}.
Therefore, constraining only deterministic protocols does not provide complete information about our capability to manipulate a given resource in practice --- if no-go theorems that restrict deterministic conversion are easy to circumvent by simply allowing a small chance of failure, can they really be considered to place a meaningful limitation on practical transformation schemes? 
In order to understand the ultimate limits of physically realisable transformations of a given resource, we argue that it is necessary to know the extent of the advantage that probabilistic resource manipulation can provide. The issue then is that most of the previous approaches were not suited to such a problem, and no systematic way of constraining the performance of 
probabilistic transformation protocols was known.
Here we aim to bridge this gap and establish methods which allow for a comprehensive characterisation of probabilistic convertibility between quantum states in general quantum resource theories.

We approach the problem of understanding probabilistic convertibility in two different ways. First, we aim to establish limitations on \emph{all} possible probabilistic protocols --- no matter how small their probability of success is. Despite no general bounds of this type being previously known, we show that the problem can be cast in a straightforward mathematical framework that allows us to not only obtain very broadly applicable restrictions, but also prove that they are the best ones possible under some assumption on the given resource theory.
And although such a constraint on all transformations might seem like overkill, we show that it leads to non-trivial bounds that improve on and extend previously known restrictions in several cases. 
Nevertheless, should more fine-grained constraints be desired, we consider also a complementary approach that allows for a deeper understanding of the trade-offs between the errors incurred in transformation protocols and the probabilities with which they can be achieved.

Specifically, we first study the \emph{projective robustness} $\RW$, a resource monotone based on the Hilbert projective metric~\cite{bushell_1973,reeb_2011} that was recently introduced in~\cite{regula_2022}. We provide rigorous proofs of the properties and applications of this quantity in constraining one-shot resource transformations, expanding the results originally announced in~\cite{regula_2022}. We show in particular that $\RW$ is monotonic under all resource transformations, deterministic or probabilistic, which means that it provides a necessary condition for the existence of the most general physical conversion protocol between any two quantum states. 
Under a suitably general type of transformations known as resource--non-generating operations, we establish also a converse of this result, showing that the projective robustness together with a closely related monotone called the \emph{free projective robustness} $\RW^\FF$ can give sufficient conditions for resource convertibility. The latter result applies in particular to all affine resource theories --- a broad class that includes the theories of quantum coherence, asymmetry, athermality, or imaginarity --- as well as in the study of bi- and multi-partite entanglement distillation. Focusing on the task of probabilistic distillation (purification) of quantum resources, we show how $\RW$ can be used to establish strong limitations on the error achievable in such manipulation, which become tight in relevant cases and yield significant improvements over previous results of this type. The properties of $\RW$ allow us to extend such bounds to many-copy distillation protocols, constraining the overheads required in probabilistic resource distillation. We also use $\RW$ to show that in any convex resource theory there exist `isotropic-like' quantum states whose fidelity with a pure state cannot be increased by any one-shot transformation protocol.

Our results also shed light on recent no-go theorems for probabilistic resource manipulation in Ref.~\cite{fang_2020}: we demonstrate that $\RW$ leads to a significant improvement over the quantitative bounds on distillation errors shown in that work, and we prove that the restrictions derived there do not actually establish any trade-offs between probability of success and transformation accuracy.

The second approach of this work is concerned with the question: when a probabilistic transformation between given states \emph{is} possible, what is the best achievable probability of success?
To address this, we introduce a family of convex optimisation problems that can precisely delineate the trade-offs between transformation errors and the probability of successful conversion in general quantum resource theories. We obtain upper and lower bounds on the maximal achievable probability of resource transformations, as well as bounds on the least error achievable with given probability. We further show that, for several types of resources, our bounds become tight in the task of resource distillation, where one aims to purify a given state to a maximally resourceful one. This allows us to obtain an exact quantitative characterisation of both exact and approximate one-shot probabilistic resource distillation.

Throughout this work, we exemplify our results by applying them to distillation of quantum entanglement, showing how our methods can both extend previously known insights as well as provide new ones.

This paper is structured as follows. 

\begin{itemize}
  \item In Section~\ref{sec:prelim}, we overview the formalism of quantum resource theories, probabilistic resource conversion, and resource monotones, defining the notation and framework of the paper.
  \item Section~\ref{sec:projective_robustness} is concerned with defining the projective robustness monotones and establishing their basic properties. Here we present our main results in the form of general no-go theorems for resource transformations (Theorem~\ref{thm:nogo_monotonicity}), as well as establish conditions for their tightness (Theorems~\ref{thm:nogo_affine_full} and~\ref{thm:nogo_sufficient_fulldim_full}).
  \item In Section~\ref{sec:prob_dist_RW} we apply the projective robustness to resource distillation, establishing a general lower bound for the error of any physical distillation protocol (Theorem~\ref{thm:prob_error_full}). We derive corresponding upper bounds (Theorem~\ref{thm:RW_dist_achiev}) that are tight in relevant cases, and compare our methods with previous state-of-the-art results.\\[.75\baselineskip](Sections~\ref{sec:projective_robustness}--\ref{sec:prob_dist_RW} thus also serve as a technical companion to and an extension of Ref.~\cite{regula_2022}, where the projective robustness was first announced.)
  \item Section~\ref{sec:tradeoff} introduces our second approach, which allows for the study of trade-offs between transformation error and achievable probability. Here we give in particular tight quantitative constraints governing the probabilities and errors in probabilistic distillation protocols (Theorem~\ref{thm:prob_ve_bounds}). 
\end{itemize}

\section{Preliminaries}\label{sec:prelim}

Our discussion will pertain to self-adjoint operators acting on finite-dimensional Hilbert spaces. We will be working in the real Hilbert space obtained by endowing the vector space of such operators with the Hilbert--Schmidt inner product $\<A, B\> = \Tr A B$. We use $\RR_+$ ($\RR_{++}$) to denote the non-negative (positive) reals, $\HH$ to denote all self-adjoint operators, and $\HH_+$ to denote positive semidefinite operators. We use $\DD$ to denote the set of density operators (quantum states), i.e.\ positive semidefinite operators of unit trace.
Following conventions common in optimisation theory, we take $\inf \emptyset = \infty$ and we will allow basic algebraic operations involving infinity, such as $c + \infty = \infty$ and $c \infty = \infty$ for any $c > 0$. Quantities of the form $1/0$ will be understood as the corresponding limit over positive reals, i.e.\ $0^{-1} = \infty$ and $\infty^{-1}= 0$.

\subsection{Resource theories and resource--non-generating operations}

In a broad sense, the question that we concern ourselves with in this work is the (im)possibility of achieving the transformation from a state $\rho$ to another state $\rho'$, with some physical constraints imposed on the physical processes available to us. Such constraints can be naturally characterised in the formalism of quantum resource theories~\cite{horodecki_2013-3,chitambar_2019}. 
In this work, we always understand quantum resource theories to mean theories of \emph{state} manipulation, although more general formulations of this formalism can also be made~\cite{devetak_2008,coecke_2016,delrio_2015,takagi_2019,liu_2020,gour_2019-1}.

We thus assume that a certain set of quantum states $\FF$ is designated as the \emph{free states}, dividing the state space into states which carry no resources ($\rho \in \FF$) and states which are deemed resourceful ($\rho \notin \FF$).
Throughout this manuscript, we make the basic assumption that the set $\FF$ is topologically closed; some of our results will also require that $\FF$ be a convex set, which is also a natural assumption in most settings, as it simply means that probabilistically mixing two free states cannot produce a resourceful state. 
When studying transformations of states between different spaces, we will implicitly understand that each space has its own associated set of free states $\FF$.

In addition to the free states, a basic ingredient of any resource theory are the \emph{free operations} $\OO$ --- these are the transformations available to us in the given restricted setting, the convertibility of states under which we wish to understand.
The choice of the free operations can be motivated by physical considerations, representing the operations which are particularly `easy' or `cheap' to implement without expending any resources. There can, however, be other motivations to study the conversion of resources under various types of free operations. For instance, it may be the case that enlarging the allowed operations even slightly, e.g.\ by supplying some additional resources for free, can allow one to overcome the limitations that more restricted operations suffer from~\cite{eggeling_2001,audenaert_2003,ishizaka_2004-1,brandao_2010,brandao_2015,berta_2022,faist_2015}.
In other contexts, it might be the case that natural choices of easily implementable operations are actually extremely weak and trivialise the given theory, making almost all transformations impossible~\cite{gour_2017-1,sauerwein_2018,chitambar_2016,lami_2019-1,lami_2020-1}. Depending on the resource theory, it might not even be clear what a suitable choice of physical free operations should be~\cite{chitambar_2016}. In such cases, it is justified to look at transformations under various types of operations, and in particular to seek to understand general properties and restrictions shared by all feasible choices of `free' channels.

The approach we take in this work is to consider all \textbf{\emph{resource--non-generating operations}}, and hereafter we use $\OO$ to denote such maps. They constitute the largest set of free operations which is consistent with $\FF$ being the set of free states, in the sense that resourceful states cannot be created for free. Focusing for now on deterministic transformations (i.e.\ quantum channels), we define a channel $\E$ to be resource non-generating if $\E(\sigma) \in \FF$ for all free $\sigma \in \FF$.

These transformations have already found success in the description of resources such as quantum entanglement~\cite{brandao_2010,contreras-tejada_2019,lami_2021-1}, quantum coherence~\cite{chitambar_2016}, or quantum thermodynamics~\cite{faist_2015,faist_2018}, as well as in general resource-theoretic frameworks~\cite{chitambar_2019,liu_2019,regula_2020}. There are several reasons to study the transformations of resources under such operations. First, their crucial property is that any no-go result shown for the resource--non-generating operations $\OO$ will necessarily apply to all other physical types of free operations, by virtue of $\OO$ being the most permissive free transformations. Importantly, such restrictions can be considered to be robust: if a restriction is shown only for a subset $\wt\O \subsetneq \OO$, then it might be the case that some other, larger type of free operations can transcend it; however, a restriction valid for all transformations $\OO$ cannot be overcome by \emph{any} free operations. In this sense, resource--non-generating operations are perfectly suited to characterise the ultimate limitations of a given theory, applicable regardless of the specific choice of free operations.

An important reason for our choice of resource--non-generating operations is simply that they can be defined in \emph{any} resource theory: we do not need to assume almost anything about the structure of the considered resources, yet we will be able to make meaningful physical statements about their convertibility. From a technical standpoint, resource--non-generating operations are straightforward to describe mathematically, allowing for a simplified description of the operational aspects of theories where the more `natural' choice of free operations may be very difficult to describe. A trade-off for this mathematical simplicity is that such transformations might not have a form which is easily implementable in practice, meaning that they might require additional resources to be physically realised~\cite{faist_2015,chitambar_2019}. Nevertheless, they can always be understood as an `upper bound' on what is achievable in practice within the given resource-theoretic setting.

\interfootnotelinepenalty=10000

\subsection{Probabilistic resource manipulation}

A general probabilistic protocol can be represented by a quantum instrument~\cite{davies_1970,ozawa_1984} $\{\E_i\}$, that is, a collection of maps which are completely positive and such that the overall transformation $\sum_i \E_i$ is trace preserving. In this setting, $\Tr \E_i(\rho)$ can be understood as the probability corresponding to the $i$th outcome of the instrument, and thus the probability that the state $\rho$ is transformed into~$\frac{\E_i(\rho)}{\Tr \E_i(\rho)}$.

In the context of resource transformations, deterministic conversion between $\rho$ and $\rho'$ is equivalent to the existence of a quantum channel $\E\in\OO$ such that $\E(\rho) = \rho'$. The problem is more involved in the case of probabilistic conversion. Since we are concerned with the transformation between two fixed states, $\rho$ and $\rho'$, we can simply coarse-grain the outcomes of the instrument into two: either the conversion succeeded, or it failed. For such an instrument $\{\E_i\}_{i=0}^1$ to be considered resource non-generating, we then impose that each constituent map $\E_i$ must be resource non-generating, but only in a probabilistic sense: we require that for all $\sigma \in \FF$, it holds that $\E_i(\sigma) = [\Tr \E_i(\sigma)]\, \sigma'$ for some $\sigma' \in \FF$.\footnote{A slightly different way of defining free probabilistic protocols that is sometimes encountered in the study of specific quantum resources is to treat the instrument $\{\E_i\}$ as a single trace-preserving map $\wt\E (\cdot) = \sum_i \E_i(\cdot) \otimes \proj{i}$ and impose that the overall channel $\wt\E$ belongs to $\OO$, where the resource theory has been extended to include the ancillary classical system. In all of the settings considered explicitly here and in~\cite{regula_2022}--- in particular, in the resource theories of entanglement, coherence, and magic --- the two definitions are fully equivalent.
However, the definition based on the `flagged' channel $\wt\E$ can be different from ours e.g.\ in quantum thermodynamics, where pure classical states do not come for free. The definition employed in this work, which only uses the instrument $\{\E_i\}$, has the advantage that it can be applied to general resource theories, without having to explicitly consider how the ancillary classical states are handled.}

In fact, we will not need to consider instruments explicitly, and it will suffice to study the action of sub-normalised (probabilistic) free operations $\E$. These are completely positive and trace--non-increasing maps which are resource non--generating in the above sense. The equivalence between transformations under such maps and more general instruments is thanks to the fact that any probabilistic map can be understood as being part of a larger free instrument: given any sub-normalised free operation $\E$, we can define $\E'(\cdot) \coloneqq [\Tr(\cdot)-\Tr\E(\cdot)] \sigma'$ for some fixed $\sigma' \in \FF$, and the instrument $\{\E, \E'\}$ constitutes a valid free probabilistic instrument.

We thus use $\OO$ to denote both deterministic and probabilistic resource--non-generating maps:
\begin{equation}\begin{aligned}
  \OO = \big\{  \E : \HH \to \HH' \;\big|\;& \E \text{ completely positive and trace non-increasing},\\
   &\forall \sigma \in \FF,\, \exists \sigma' \in \FF,\; p\in [0,1] \text{ s.t.\ } \E(\sigma) = p \sigma' \big\}.
\end{aligned}\end{equation}
We then say that the transformation from $\rho$ to $\rho'$ is possible with probability $p$ if there exists an operation $\E \in \OO$ such that $\E(\rho) = p \rho'$. Here, the output space of the map can differ from its input space, and we assume that a corresponding set $\FF$ is defined in both spaces.

However, this is still not the most general way in which a probabilistic conversion can be realised. There exist cases where the transformation $\E(\rho) = p \rho'$ is impossible with any non-zero probability $p$, but the state $\rho'$ can nevertheless be approached arbitrarily closely~\cite{horodecki_1999-1}. In particular, it is possible that there exists a sequence of operations $(\E_n)_n \in \OO$ such that $\Tr \E_n(\rho) \tends 0$ but
\begin{equation}\begin{aligned}\label{eq:prob_asymp}
 \frac{\E_n(\rho)}{\Tr \E_n(\rho)} \tends \rho'.
\end{aligned}\end{equation}
That is, the transformation from $\rho$ to $\rho'$ might only be possible in the asymptotic limit where the probability of success vanishes as the fidelity approaches 1. We hereafter use the notation $\rho \transf \rho'$ to denote the existence of a free probabilistic protocol transforming $\rho$ to $\rho'$, whether it takes the form of a single operation $\E \in \OO$ or a sequence of free operations satisfying \eqref{eq:prob_asymp}. More precisely,
\begin{equation}\begin{aligned}
  \rho \transf \rho' \;\iff\; \rho' \,\in\, \operatorname{cl} \lset \frac{\E(\rho)}{\Tr \E(\rho)} \bar \E \in \OO, \; \Tr \E(\rho) > 0 \rset,
\end{aligned}\end{equation}
where cl denotes closure. This form represents the most general physical transformation of a quantum resource allowed in the framework of this work.

\subsection{Resource monotones}

A resource monotone (or resource measure) $M_\FF : \HH_+ \to \RR_+ \cup \{\infty\}$ is any function which is monotonic under deterministic free operations, that is, $M_\FF(\rho) \geq M_\FF(\E(\rho))$ for any channel $\E \in \OO$~\cite{vedral_1997,chitambar_2019}. The crucial consequence of this definition is that any monotone can be used to certify the impossibility of deterministic resource conversion: if $M_\FF(\rho) < M_\FF(\rho')$, then there cannot exist a free channel which transforms $\rho$ to $\rho'$ deterministically. When it comes to probabilistic transformations, a desirable feature of a resource monotone is its strong monotonicity~\cite{vedral_1998,vidal_2000}, which states that the monotone should, on average, decrease under the action of a free probabilistic instrument~$\{\E_i\}$:
\begin{equation}\begin{aligned}\label{eq:strong_monotonicity}
  M_\FF(\rho) \geq \sum_{i} \Tr \E_i(\rho)\, M_\FF\left( \frac{\E_i(\rho)}{ \Tr \E_i(\rho)} \right).
\end{aligned}\end{equation}
Such strong monotones can then be used to bound transformation probabilities of quantum resources. Specifically, assume that there exists a probabilistic operation $\E \in \OO$ such that $\E(\rho) = p \rho'$ with some probability $p>0$. Eq.~\eqref{eq:strong_monotonicity} then gives
\begin{equation}\begin{aligned}\label{eq:strong_monot_prob}
  p \leq \frac{M_\FF(\rho)}{M_\FF(\rho')},
\end{aligned}\end{equation}
and so any strong monotone provides a restriction on probabilistic transformations. However, most known monotones can never be used to completely rule out the possibility of probabilistic resource conversion: this would require showing that the transformation cannot be realised with \emph{any} probability $p$ (or equivalently that $p$ must be $0$), but this cannot be concluded from Eq.~\eqref{eq:strong_monot_prob} for finite and non-zero values of $M_\FF$. Even in the cases when tight restrictions on resource conversion can be obtained in such a setting, they typically require the optimisation over infinitely many monotones~\cite{vidal_2000}, potentially limiting the practical applicability of such approaches.

Let us now introduce two resource measures that serve as inspiration for the concepts used in this work. The \emph{(generalised) robustness}~\cite{vidal_1999,datta_2009} is defined as
\begin{equation}\begin{aligned}
  R_\FF(\rho) \coloneqq \inf \lset \lambda \bar \rho \leq \lambda \sigma, \; \sigma \in \FF \rset,
\end{aligned}\end{equation}
where the inequality is with respect to the positive semidefinite cone, in the sense that $A \leq B \iff B - A \in \HH_+$. This monotone found a number of operational uses --- it quantifies the advantages of resources in channel discrimination problems~\cite{takagi_2019-2}, and can be applied to benchmark the performance of resource distillation tasks~\cite{regula_2020,regula_2021-1}. A quantity which is, in a sense, dual to the robustness is the \emph{resource weight}~\cite{lewenstein_1998}
\begin{equation}\begin{aligned}
  W_\FF(\rho) \coloneqq \sup \lset \nu \bar \rho \geq \nu \sigma,\; \sigma \in \FF \rset.
\end{aligned}\end{equation}
Note that this quantity is technically an antitone rather than a monotone --- it satisfies $W_\FF(\rho) \leq W_\FF(\E(\rho))$ for free channels $\E$. 
The resource weight can also be used to study resource advantages in a different type of channel discrimination problems~\cite{uola_2020-1,ducuara_2020}, and it has recently been applied to establish strong bounds on deterministic resource distillation errors and overheads~\cite{regula_2021-1,fang_2022}. The similarity between the two measures is made precise by noticing that they can be both expressed in terms of the max-relative entropy $D_{\max}$~\cite{datta_2009}, which we define in its non-logarithmic form as
\begin{equation}\begin{aligned}
  \Rmax(\rho \| \sigma) \coloneqq 2^{D_{\max}(\rho\|\sigma)} = \inf \lset \lambda \bar \rho \leq \lambda \sigma \rset.
\end{aligned}\end{equation}
We then have
\begin{equation}\begin{aligned}
  R_\FF(\rho) = \min_{\sigma \in \FF} \Rmax(\rho \| \sigma),\qquad W_\FF(\rho) = \left[\min_{\sigma \in \FF} \Rmax(\sigma \| \rho)\right]^{-1}\!.
\end{aligned}\end{equation}

We also introduce a variant of the max-relative entropy $\Rmax$ as
\begin{equation}\begin{aligned}
  \Rmax^\FF (\rho \| \sigma) \coloneqq \inf \lset \lambda \bar \rho \leq_{\FF} \lambda \sigma\rset,
\end{aligned}\end{equation}
where now $\leq_\FF$ denotes inequality with respect to the cone generated by the free set $\FF$; precisely, $A \leq_\FF B \iff B - A \in \cone(\FF)$, where
\begin{equation}\begin{aligned}
  \cone(\FF) \coloneqq \lset \lambda \sigma \bar \lambda \in \RR_+, \; \sigma \in \FF \rset.
\end{aligned}\end{equation}
This quantity can be used to define another monotone known as the \emph{standard robustness} (or \emph{free robustness}) $R^\FF_\FF$~\cite{vidal_1999}, given by\footnote{We note that there are different notational conventions when it comes to the robustness and weight monotones: the original definitions of the generalised and standard robustness would correspond to $R_\FF(\rho)-1$ and $R_\FF^\FF(\rho)-1$ in our notation, respectively, while the weight measure is often defined as $1 - W_\FF(\rho)$. We chose slightly different definitions in order to more easily relate the measures with the max-relative entropy and to make their generalisations clearer.}
\begin{equation}\begin{aligned}
  R^\FF_\FF(\rho) \coloneqq \min_{\sigma \in \FF} \Rmax^\FF (\rho \| \sigma).
\end{aligned}\end{equation}
Let us remark here that the robustness measures introduced here are not necessarily finite without additional assumptions on the theory in consideration. It can be seen that $R_\FF(\rho) < \infty$ for all $\rho$ if and only if $\FF$ contains a state of full rank, which is satisfied in most theories of practical interest. For the standard robustness, it holds that $R_\FF^\FF(\rho) < \infty$ if and only if $\rho \in \sspan(\FF)$, meaning that this monotone will typically find use in resource theories such that $\sspan(\FF) = \HH$, referred to as full-dimensional theories.

We note also that robustness and weight quantifiers can be naturally defined not only for quantum states, but also for unnormalised positive semidefinite operators; in such cases, we get that $R_\FF(\lambda \rho) = \lambda R_\FF(\rho)$ for any $\lambda \in \mathbb{R}_{++}$, and analogously for $W_\FF$ and $R^\FF_\FF$.




\section{Projective robustness}\label{sec:projective_robustness}

Our main object of study in this section will be the \textbf{\textit{projective robustness}} $\RW$~\cite{regula_2022}, defined as
\begin{equation}\begin{aligned}
  \RW(\rho) \coloneqq& \min_{\sigma \in \FF}\, \Rmax(\rho \| \sigma) \, \Rmax(\sigma \| \rho)\\
  =& \inf \lset \frac{\lambda}{\nu} \bar \nu \sigma \leq \rho \leq \lambda \sigma,\; \sigma \in \FF \rset.
\end{aligned}\end{equation}
The motivation for the introduction of this quantity is twofold. On the one hand, the robustness and weight monotones are very useful in constraining deterministic resource transformations~\cite{brandao_2015,regula_2020,regula_2021-1,fang_2022}, but their applicability in probabilistic cases is significantly limited~\cite{regula_2021-1,fang_2022}; the idea is then to exploit the complementary character of $R_\FF$ and $W_\FF$ to introduce a monotone which \emph{combines} the properties of the two measures in order to establish much broader and tighter limitations. On the other hand, the projective robustness can be thought of as a minimisation of the Hilbert projective metric $\ln \Rmax(\rho \| \sigma) \, \Rmax(\sigma \| \rho)$~\cite{bushell_1973,reeb_2011}, a quantity which has already been used to study transformations of pairs of quantum states~\cite{reeb_2011,buscemi_2017}, but whose applications in more general resource-theoretic settings have not been explored.

Analogously, the \textbf{\textit{free projective robustness}}~\cite{regula_2022} is given by
\begin{equation}\begin{aligned}\label{eq:proj_free_def}
  \RW^\FF(\rho) \coloneqq& \min_{\sigma \in \FF}\, R_{\max}^\FF(\rho \| \sigma) \, \Rmax(\sigma \| \rho)\\
  =& \inf \lset \frac{\lambda}{\nu} \bar \nu \sigma \leq \rho \leq_\FF \lambda \sigma,\; \sigma \in \FF \rset.
\end{aligned}\end{equation}
Notice that this quantity combines $\Rmax$, which is an optimisation with respect to the positive semidefinite cone $\HH_+$, with $\Rmax^\FF$, which is defined using $\cone(\FF)$. It is not necessary to use $R_{\max}^\FF$ for both terms, and indeed this choice would trivialise the optimisation for all convex resource theories. The inclusion $\cone(\FF) \subseteq \HH_+$ entails that $\RW^\FF(\rho) \geq \RW(\rho)$ in general.

A geometric interpretation of the definition of the two measures is provided in Figure~\ref{fig:hilbert}.

\begin{figure}[t]
\centering
\includegraphics[width=10cm]{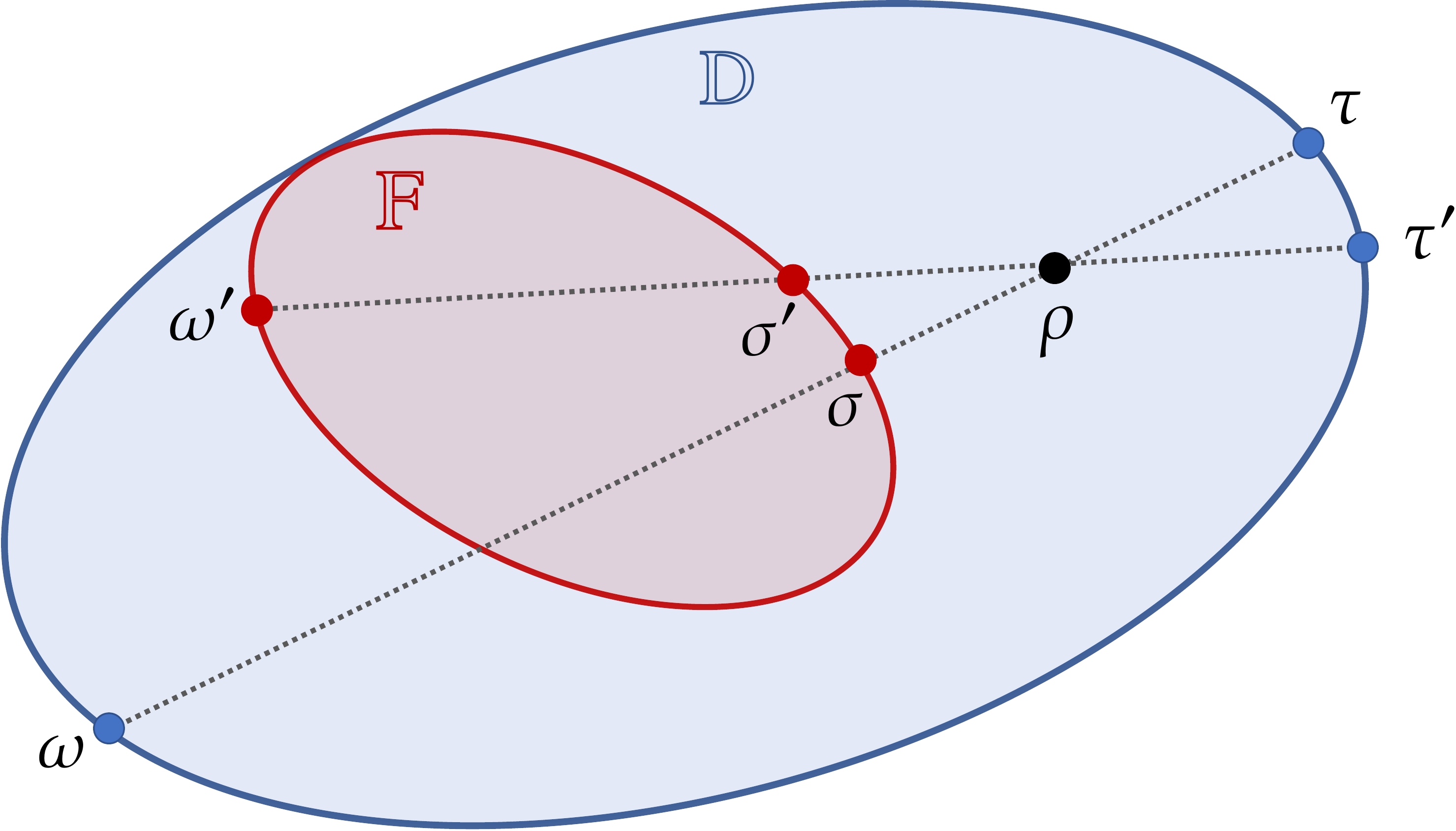}
\caption{\textbf{Geometric definition of the projective robustness measures.} Using the known geometric interpretation of the original Hilbert's projective metric~\cite{bushell_1973,kohlberg_1982}, the quantity $\RW(\rho)$ can be understood as the minimal cross-ratio of the points $\tau$, $\rho$, $\sigma$, and $\omega$ within the set of all density operators $\DD$, with $\sigma$ being a free state. Specifically, $\RW(\rho)$ is the minimal value of $\displaystyle \frac{\norm{\rho - \omega}{}}{\norm{\protect\vphantom{\rho}\sigma - \omega}{}}\frac{\norm{\sigma- \tau\protect\vphantom{\rho}}{}}{\norm{\rho - \tau}{}}$ such that $\tau, \omega \in \DD$, $\sigma \in \FF$, and the four states are all colinear as shown in the Figure. Here, the choice of the norm $\norm{\cdot}{}$ is arbitrary. This can be seen by noticing that for any feasible $\sigma$ such that $\nu \sigma \leq \rho \leq \lambda \sigma$ we can equivalently write $\rho + (\lambda - 1) \omega = \lambda \sigma$ and $\rho - (1-\nu) \tau = \nu \sigma$ for some states $\tau,\omega$; this then gives $\rho - \omega = \lambda (\sigma - \omega)$ and $\rho - \tau = \nu (\sigma - \tau)$, from which the cross-ratio form follows.
\newline The free projective robutness $\RW^\FF(\rho)$ is defined analogously as the least value of $\displaystyle \frac{\norm{\rho - \omega'}{}}{\norm{\protect\vphantom{\rho}\sigma' - \omega'}{}}\frac{\norm{\sigma' - \tau'\protect\vphantom{\rho}}{}}{\norm{\rho - \tau'}{}}$ where now both $\sigma'$ and $\omega'$ are constrained to be free states. It is not difficult to see that any optimal choice of such a decomposition will have all points $\tau, \sigma, \omega$ or $\tau', \sigma', \omega'$ lying on the boundary of the corresponding set ($\DD$ or $\FF$), since otherwise a better decomposition could be chosen along the same line.
}
\label{fig:hilbert}
\end{figure}

\subsection{Properties}

We begin with a study of the basic properties of the projective robustness $\RW$. For clarity, we for now only assume that $\FF$ is a topologically closed subset of density operators. 

\begin{boxed}{white}
\begin{theorem}\label{thm:proj_properties}
The projective robustness $\RW(\rho)$ of any state $\rho$ satisfies the following properties.
\begin{enumerate}[(i)]
\item\label{itm:supp} $\RW$ takes a finite value if and only if there exists a state $\sigma \in \FF$ such that $\supp \sigma = \supp \rho$. 
\item\label{itm:invariance} It is invariant under scaling, that is, $\RW(\lambda \rho) = \RW(\rho)$ for any $\lambda \in \RR_{++}$.
\item\label{itm:dual} $\RW$ can be computed as the optimal value of the optimisation problem
\begin{align}\label{eq:thm_convex_progA}
  \RW(\rho) &= \inf \lset \gamma \in \RR \bar \rho \leq \wt\sigma \leq \gamma \rho,\; \wt\sigma \in \cone(\FF) \rset,
  \end{align}
where the infimum is achieved as long as it is finite. When $\FF$ is a convex set, this is a conic linear optimisation problem, and it admits an equivalent dual formulation as
\begin{align}
  \RW(\rho) &= \sup \lset \< A, \rho \> \bar \< B, \rho \> = 1 ,\; B - A \in \cone(\FF)\*,\; A, B \in \HH_+ \rset\label{eq:thm_convex_progB}\\
  &= \sup \lset \frac{\< A, \rho \>}{\< B, \rho \>} \bar \frac{\< A, \sigma \>}{\< B, \sigma \>} \leq 1 \; \forall \sigma \in \FF,\; A, B \in \HH_+ \rset.\label{eq:thm_convex_progC}
\end{align}
Here, $\cone(\FF) = \lset \lambda \sigma \bar \lambda \in \RR_+, \; \sigma \in \FF \rset$ is the cone induced by the set $\FF$, and $\cone(\FF)\* = \lset X \in \HH \bar \< X, \sigma \> \geq 0 \; \forall \sigma \in \FF \rset$ is its dual cone.
\item\label{itm:faithful} $\RW$ is faithful, that is, for any state $\rho$ it holds that $\RW(\rho) = 1 \iff \rho \in \FF$.
\item\label{itm:submult} If the set of free states is closed under tensor product, in the sense that $\sigma_1, \sigma_2 \in \FF \Rightarrow \sigma_1 \otimes \sigma_2 \in \FF$, then $\RW$ is sub-multiplicative: for any states $\rho$ and $\omega$, it holds that
\begin{equation}\begin{aligned}
  \RW(\rho \otimes \omega) \leq \RW(\rho) \, \RW(\omega).
\end{aligned}\end{equation}
Alternatively, the weaker variant of sub-multiplicativity $\RW(\rho^{\otimes n}) \leq \RW(\rho)^n$ holds whenever $\sigma \in \FF \Rightarrow \sigma^{\otimes n} \in \FF \; \forall n \in \mathbb{N}$.
\item\label{itm:quasiconvex} When $\FF$ is a convex set, $\RW$ is quasiconvex: for any $t \in [0,1]$, it holds that 
\begin{equation}\begin{aligned}
  \RW(t \rho + (1-t) \omega) \leq \max \{ \RW(\rho) ,\, \RW(\omega) \}.
\end{aligned}\end{equation}
\item\label{itm:lsc} $\RW$ is lower semicontinuous, that is, $\displaystyle \RW(\rho) \leq \liminf_{n\to\infty}\, \RW(\rho_n)$ for any sequence $(\rho_n)_n$ of states converging to $\rho$. 
\item\label{itm:bounds} It can be bounded as
\begin{align}
  R_\FF(\rho) \, W_\FF(\rho)^{-1} \leq \;&  \RW(\rho) \leq R_\FF(\rho) \, \Rmax(\sigma^\star_R \| \rho)\\
  & \hphantom{\RW(\rho)} \leq R_\FF(\rho) \, \lambda_{\min}(\rho)^{-1},\nonumber\\
&   \RW(\rho) \leq W_\FF(\rho)^{-1} \, \Rmax(\rho \| \sigma^\star_W),
\end{align}
where $\lambda_{\min}(\rho)$ denotes the smallest eigenvalue of $\rho$, $\sigma^\star_R$ is an optimal state such that $R_\FF(\rho) = \Rmax(\rho \| \sigma^\star_R)$, and analogously $\sigma^\star_W$ is an optimal state such that $W_\FF(\rho) = \Rmax(\sigma^\star_W \| \rho)^{-1}$, whenever such states exist.
\end{enumerate}
\end{theorem}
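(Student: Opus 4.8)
The plan is to first establish the reformulation~\eqref{eq:thm_convex_progA}, which will serve as the main technical tool, and then derive the remaining items from it. For~\eqref{eq:thm_convex_progA} I would argue by direct substitution: from a feasible triple in the definition of $\RW$ with $\nu\sigma \leq \rho \leq \lambda\sigma$, $\sigma\in\FF$, the operator $\wt\sigma := \lambda\sigma \in \cone(\FF)$ obeys $\rho \leq \wt\sigma$ and $\wt\sigma = \lambda\sigma \leq (\lambda/\nu)\rho$, so $\gamma = \lambda/\nu$ is feasible in~\eqref{eq:thm_convex_progA}; conversely, from $\rho \leq \wt\sigma \leq \gamma\rho$ with $\wt\sigma = \lambda\sigma$, $\sigma\in\FF$, one recovers $(\lambda/\gamma)\sigma \leq \rho \leq \lambda\sigma$ with ratio $\gamma$, so the two optimal values coincide. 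For attainment when $\RW(\rho)<\infty$, I would first note that $\cone(\FF)$ is closed --- since $\FF$ is a closed subset of the compact set $\DD$ and $0\notin\FF$, any convergent $\lambda_n\sigma_n\to X$ with $\sigma_n\in\FF$ has $\lambda_n=\Tr\lambda_n\sigma_n\to\Tr X$, whence $X=0$ or $\sigma_n\to X/\Tr X\in\FF$ --- and then take a minimising sequence $\wt\sigma_n\in\cone(\FF)$ with $\rho\leq\wt\sigma_n\leq\gamma_n\rho$, $\gamma_n\to\RW(\rho)$; these $\wt\sigma_n$ lie in a compact set, so a subsequential limit is optimal.

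Assume now $\FF$ is convex, so $\cone(\FF)$ is a closed convex cone and~\eqref{eq:thm_convex_progA} is a conic linear program. I would obtain~\eqref{eq:thm_convex_progB} by Lagrangian duality: introducing PSD multipliers $A$ for $\wt\sigma-\rho\succeq 0$ and $B$ for $\gamma\rho-\wt\sigma\succeq 0$ while keeping $\wt\sigma\in\cone(\FF)$ explicit, the Lagrangian equals $\gamma(1-\<B,\rho\>)+\<A,\rho\>+\<B-A,\wt\sigma\>$; minimising over $\gamma\in\RR$ forces $\<B,\rho\>=1$, minimising over $\wt\sigma\in\cone(\FF)$ forces $B-A\in\cone(\FF)\*$, and what remains is the dual objective $\<A,\rho\>$. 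Weak duality is then immediate. The point needing care is \emph{strong} duality and dual attainment, for which I would verify Slater's condition while working on the subspace $\supp\rho$ (on which $\rho$ is full rank, and to which every feasible $\wt\sigma$ is confined since $\wt\sigma\leq\gamma\rho$): taking an optimal $\sigma^\star\in\FF$ with $\supp\sigma^\star=\supp\rho$ --- which exists by~\ref{itm:supp} as $\RW(\rho)<\infty$ --- and $\nu\sigma^\star\leq\rho\leq\lambda\sigma^\star$ with $\nu>0$, the operator $(\lambda+\ve)\sigma^\star$ is strictly feasible for a suitable $\gamma$. Finally,~\eqref{eq:thm_convex_progC} follows from~\eqref{eq:thm_convex_progB} by rescaling $(A,B)$ by $\<B,\rho\>$ and rewriting $B-A\in\cone(\FF)\*$ as $\<A,\sigma\>\leq\<B,\sigma\>$ for all $\sigma\in\FF$, with the stated conventions handling the degenerate cases $\<B,\rho\>=0$, $\<B,\sigma\>=0$.

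The remaining items are routine consequences. Item~\ref{itm:supp} follows from $\Rmax(\rho\|\sigma)<\infty\iff\supp\rho\subseteq\supp\sigma$ and $\Rmax(\sigma\|\rho)<\infty\iff\supp\sigma\subseteq\supp\rho$. Item~\ref{itm:invariance} is immediate from $\Rmax(\lambda\rho\|\sigma)=\lambda\Rmax(\rho\|\sigma)$ and $\Rmax(\sigma\|\lambda\rho)=\lambda^{-1}\Rmax(\sigma\|\rho)$. For~\ref{itm:faithful}: $\Rmax(\rho\|\sigma)\geq 1$ and $\Rmax(\sigma\|\rho)\geq 1$ for any states (take traces), so $\RW(\rho)\geq 1$; if $\RW(\rho)=1$, the attainment from~\ref{itm:dual} gives $\wt\sigma\in\cone(\FF)$ with $\rho\leq\wt\sigma\leq\rho$, hence $\wt\sigma=\rho\in\cone(\FF)$, and $\Tr\rho=1$ yields $\rho\in\FF$; the converse takes $\sigma=\rho$. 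Items~\ref{itm:submult} and~\ref{itm:quasiconvex} follow by applying tensor products, resp.\ convex combinations, to the inequalities $\rho\leq\wt\sigma\leq\gamma\rho$ of~\eqref{eq:thm_convex_progA}: both operations preserve this form and keep $\wt\sigma$ in $\cone(\FF)$ under the stated hypotheses on $\FF$ (using that tensor products of PSD operators are PSD, resp.\ that $\cone(\FF)$ is a convex cone). Item~\ref{itm:lsc} reuses the compactness argument: from $\rho_n\to\rho$ with $L:=\liminf_n\RW(\rho_n)<\infty$, extract along a subsequence $\wt\sigma_n\in\cone(\FF)$ with $\rho_n\leq\wt\sigma_n\leq\RW(\rho_n)\rho_n$, pass to a convergent subsubsequence $\wt\sigma_n\to\wt\sigma\in\cone(\FF)$, and take limits to get $\rho\leq\wt\sigma\leq L\rho$. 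For the bounds~\ref{itm:bounds}: the lower bound is the elementary fact that the minimum over $\sigma$ of the product $\Rmax(\rho\|\sigma)\Rmax(\sigma\|\rho)$ is at least the product of the separate minima, namely $R_\FF(\rho)$ and $W_\FF(\rho)^{-1}$; the upper bounds come from evaluating the definition of $\RW$ at the distinguished free states $\sigma^\star_R$ and $\sigma^\star_W$, together with the estimate $\Rmax(\sigma\|\rho)\leq\lambda_{\min}(\rho)^{-1}$ valid for every state $\sigma$ (as $\sigma\leq\id\leq\lambda_{\min}(\rho)^{-1}\rho$ when $\rho$ has full rank, and trivially otherwise).

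I expect the only genuinely delicate point to be the strong duality and dual attainment in~\ref{itm:dual}: the primal reformulation and weak duality are elementary, but excluding a duality gap requires the Slater argument, which in turn forces one to work on $\supp\rho$ to cope with rank deficiency of $\rho$ --- and analogous care with vanishing denominators is needed in descending from~\eqref{eq:thm_convex_progB} to the ratio form~\eqref{eq:thm_convex_progC}.
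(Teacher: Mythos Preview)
Your argument is correct and tracks the paper's proof closely; the one substantive difference is how you establish strong duality in item~\ref{itm:dual}. The paper verifies Slater's condition on the \emph{dual} side, observing that $B=\id$, $A=\ve\id$ for $\ve\in(0,1)$ is strictly feasible in~\eqref{eq:thm_convex_progB} for every $\rho$; this immediately gives primal--dual equality regardless of whether $\RW(\rho)$ is finite and without any reduction to $\supp\rho$. Your route---Slater on the primal after restricting to $\supp\rho$---works when $\RW(\rho)<\infty$, but leaves the infinite case to be handled separately (one must still argue that the dual is then unbounded), and requires checking that the dual of the support-restricted primal coincides with~\eqref{eq:thm_convex_progB} and that $(\lambda+\ve)\sigma^\star$ lies in the relative interior needed for Slater. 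The paper's approach sidesteps these subtleties at no cost, since primal attainment (the only thing dual Slater would not directly give) is already established by the compactness argument you also use. Note also that the theorem does not assert dual attainment (the dual is a $\sup$, and later proofs work with optimal sequences $(A_n,B_n)$), so you need not worry about that.
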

\end{boxed}

Let us comment on the unusual property~\eqref{itm:supp} of $\RW$: it always diverges for states which do not share support with a free state. There are two immediate consequences of this fact: (1) as long as $\FF$ contains a state of full rank (which is the case in virtually all theories of interest), then $\RW(\rho) < \infty$ for all full-rank states; (2) for any pure state $\psi \notin \FF$, we have that $\RW(\psi) = \infty$. Although the latter point might make the applicability of $\RW$ seem limited, we will see in Section~\ref{sec:prob_dist_RW} that it can be easily circumvented by allowing a small error in the conversion, making $\RW$ useful also in the study of transformations involving pure states.
\begin{proof}
\begin{enumerate}[(i)]
\item A necessary and sufficient condition for $\Rmax(\rho \| \sigma) < \infty$ can be seen to be $\supp \rho \subseteq \supp \sigma$~\cite{douglas_1966}. Applying the same reasoning to $\Rmax(\sigma \| \rho)$ then means that the supports must be equal.

\item Follows from the fact that $\Rmax(\lambda \rho \| \mu \sigma) = \lambda^{-1} \mu \Rmax(\rho \| \sigma)$, which can be easily verified from the definition of $\Rmax$.

\item From the positivity of $\Rmax(\rho\|\sigma)$ for any states $\rho$ and $\sigma$, we can see that
\begin{align}
  \min_{\sigma \in \F} \Pmax(\rho\| \sigma) \, \Pmax(\sigma \| \rho) &= \min_{\sigma \in \FF} \left[ \inf \lset \lambda \bar \rho \leq \lambda \sigma \rset \inf \lset \mu \bar \sigma \leq \mu \rho \rset \right]\\
  &= \inf \lset \lambda \mu \bar \rho \leq \lambda \sigma,\; \sigma \leq \mu \rho,\; \sigma \in \F \rset\label{eq:dmaxdmax}.
\end{align}
Observe that any feasible solution to the problem 
\begin{equation}\begin{aligned}
\inf \lset \gamma \bar \rho \leq \wt\sigma \leq \gamma \rho,\; \wt\sigma \in \cone(\FF) \rset
\end{aligned}\end{equation}
gives a feasible solution to Eq.~\eqref{eq:dmaxdmax} as
\begin{equation}\begin{aligned}\label{eq:proj_sdp}
  \sigma = \frac{\wt\sigma}{\Tr \wt\sigma},\quad \lambda = \Tr \wt\sigma,\quad \mu = \frac{\gamma}{\Tr \wt\sigma}
\end{aligned}\end{equation}
with objective function value $\lambda \mu = \gamma$. Conversely, any feasible solution $\{\sigma,\lambda,\mu\}$ to Eq.~\eqref{eq:dmaxdmax} gives a feasible solution to Eq.~\eqref{eq:proj_sdp} as $\wt\sigma = \lambda \sigma$, $\gamma = \lambda \mu$. The two problems are therefore equivalent.

To show that the infimum is achieved when $\RW(\rho) < \infty$, take any optimal sequence $(\gamma_n)_n \in \RR_+$ such that $\gamma_n \tends \RW(\rho)$ and $\rho \leq \wt\sigma_n \leq \gamma_n \rho$. Since $\norm{\wt\sigma_n}{\infty} \leq \norm{\gamma_n \rho}{\infty} \leq \gamma_n$, the sequence $(\wt\sigma_n)_n$ is bounded, and so by the Bolzano–Weierstrass theorem it has a convergent subsequence $(\wt\sigma_{n_k})_k \tendsk \wt\sigma$. The closedness of $\cone(\FF)$ (itself a consequence of the assumed closedness of $\FF$) ensures that $\wt\sigma \in \cone(\FF)$. But then $(\wt\sigma_{n_k} - \rho)_k$ and $(\gamma_{n_k} \rho - \wt\sigma_{n_k})_k$ also converge as sums of convergent sequences, and by the closedness of $\HH_+$ they must converge to a positive semidefinite operator. Hence $\rho \leq \wt\sigma \leq \RW(\rho) \rho$ as desired.

The dual form (Eq.~\eqref{eq:thm_convex_progB}) is obtained through standard Lagrange duality arguments~\cite{ponstein_2004}. The fact that strong duality (equality between the primal and the dual) holds can be ensured by Slater's theorem~\cite[Thm.~28.2]{rockafellar_1970}, applicable since the choice of $B = \id$, $A = \ve \id$ for $\ve \in (0,1)$ is strongly feasible for the dual.

Eq.~\eqref{eq:thm_convex_progC} follows since any feasible solution to this program can be rescaled as $A \mapsto A / \< B, \rho \>$, $B \mapsto B / \< B, \rho \>$ to give a feasible solution to the dual, and vice versa. 
Here, we implicitly constrain ourselves to $B$ such that $\< B, \rho \> \neq 0$ and $\<B ,\sigma \> \neq 0 \; \forall \sigma \in \FF$ to make the expressions well defined; this can always be ensured by perturbing $B$ by a small multiple of the identity.

\item If $\rho \in \FF$, then $\RW(\rho) \leq \Rmax(\rho\|\rho)^2 = 1$, and this is the least value that $\RW$ can take for any state. Conversely, if $\RW(\rho) = 1$, then by~\eqref{itm:dual} we have that there exists a state $\sigma \in \FF$ such that $\sigma \leq \rho \leq \sigma$, which is only possible if $\rho = \sigma$.

\item If either of $\RW(\rho)$ or $\RW(\omega)$ is infinite, then the relation is trivial, so assume otherwise. Let  $\sigma \in \FF$ be an optimal state such that $\rho \leq \lambda \sigma$ and $\sigma \leq \mu \rho$ with $\lambda \mu = \RW(\rho)$, and analogously let $\sigma' \in \FF$ be such that $\omega \leq \lambda' \sigma'$ and $\sigma' \leq \mu' \omega$ with $\lambda'\mu' = \RW(\omega)$, which all exist by~\eqref{itm:dual}. Then
\begin{equation}\begin{aligned}
0 \leq \rho \otimes ( \lambda' \sigma' - \omega) + (\lambda \sigma - \rho) \otimes \lambda' \sigma' = (\lambda \sigma) \otimes (\lambda' \sigma') - \rho \otimes \omega
\end{aligned}\end{equation}
and analogously $(\lambda \sigma) \otimes (\lambda' \sigma') \leq (\lambda \mu \rho) \otimes (\lambda' \mu' \omega)$, which shows that $\sigma \otimes \sigma'$ is a feasible solution for the projective robustness of $\rho \otimes \omega$, yielding $\RW(\rho \otimes \omega) \leq \lambda \mu \lambda' \mu' = \RW(\rho) \, \RW(\omega)$. Weak sub-multiplicativity can be shown analogously, first choosing $\omega = \rho$ and then by induction on~$n$.

\item As in~\eqref{itm:submult}, we take $\sigma \in \FF$ such that $\rho \leq \lambda \sigma$ and $\sigma \leq \mu \rho$, and $\sigma' \in \FF$ such that $\omega \leq \lambda' \sigma'$ and $\sigma' \leq \mu' \omega$. We then have that
\begin{equation}\begin{aligned}
  t \rho + (1-t) \omega &\leq t \lambda \sigma + (1-t) \lambda' \sigma' = \left[t \lambda + (1-t) \lambda'\right] \frac{t \lambda \sigma + (1-t) \lambda' \sigma'}{t \lambda + (1-t) \lambda'}
\end{aligned}\end{equation}
where $\displaystyle \frac{t \lambda \sigma + (1-t) \lambda' \sigma'}{t \lambda + (1-t) \lambda'} \in \FF$ by convexity of $\FF$. Now,
\begin{equation}\begin{aligned}
  \frac{t \lambda \sigma + (1-t) \lambda' \sigma'}{t \lambda + (1-t) \lambda'} &\leq \frac{t \lambda \mu \rho + (1-t) \lambda' \mu' \omega}{t \lambda + (1-t) \lambda'} \leq \frac{\max\{ \lambda \mu, \lambda'\mu' \} \left[ t \rho + (1-t) \omega\right]}{t \lambda + (1-t) \lambda'}.
\end{aligned}\end{equation}
Using this state as a feasible solution for $\RW$ gives
\begin{equation}\begin{aligned}
  \RW(t \rho + (1-t) \omega) &\leq \left[t \lambda + (1-t) \lambda'\right] \frac{\max\{ \lambda \mu, \lambda'\mu' \}}{t \lambda + (1-t) \lambda'} = \max \{ \lambda \mu, \lambda' \mu' \}
\end{aligned}\end{equation}
as was to be shown.

\item Establishing lower semicontinuity is equivalent to showing that the sublevel sets $\lset \omega \bar \RW(\omega) \leq \gamma \rset$ are closed for all $\gamma \in \RR$~\cite[Thm.~7.1]{rockafellar_1970}. Consider then a sequence $\omega_n \tends \omega$ such that $\RW(\omega_n) \leq \gamma \; \forall n$ for some $\gamma$. By~\eqref{itm:dual}, this entails that there exists $\wt\sigma_n \in \cone(\FF)$ such that $\omega_n \leq \wt\sigma_n \leq \gamma \omega_n$ for each $n$. Since $(\wt\sigma_n)_n$ then forms a bounded sequence, we can assume that it converges as $\wt\sigma_n \tends \wt\sigma$, up to passing to a subsequence.
 The closedness of $\cone(\FF)$ ensures that $\wt\sigma \in \cone(\FF)$. The convergent sequences $(\wt\sigma_n - \omega_n)_n$ and $(\gamma \omega_n - \wt\sigma_n)_n$ then must converge to positive semidefinite operators by the closedness of the positive semidefinite cone. This gives $\omega \leq \wt\sigma \leq \gamma \omega$, showing that the sublevel set of $\RW$ is closed as desired. Since $\gamma$ was arbitrary, the desired statement follows.

\item The lower bound is obtained by noting that
\begin{equation}\begin{aligned}
  \RW(\rho) &= \min_{\sigma \in \FF}\, \Rmax(\rho \| \sigma) \, \Rmax(\sigma \| \rho)\\
  &\geq \left[ \min_{\sigma \in \FF} \Rmax(\rho \| \sigma) \right] \left[ \min_{\sigma \in \FF} \Rmax(\sigma \| \rho) \right]\\
  &= R_\FF(\rho) \, W_\FF(\rho)^{-1}.
\end{aligned}\end{equation}
The upper bounds follow by using $\sigma^\star_R$ and $\sigma^\star_W$ as feasible solutions in the definition of $\RW$. The relation with the smallest eigenvalue follows from the fact that
\begin{equation}\begin{aligned}
  \RW(\rho) &\leq R_\FF(\rho) \, \Rmax(\sigma^\star_R \| \rho)\\
  &\leq R_\FF(\rho) \, \max_{\omega \in \DD} \Rmax(\omega \| \rho)\\
  &= R_\FF(\rho)\,\lambda_{\min}(\rho)^{-1},
\end{aligned}\end{equation}
where we relaxed the optimisation to all density matrices $\DD$ and used Lemma~\ref{lem:weight_lambda} below.
\end{enumerate}
\end{proof}

We employed the following auxiliary lemma, first shown in \cite{haapasalo_2014}.
\begin{lemma}\label{lem:weight_lambda}
For any state $\rho$, it holds that $\displaystyle \max_{\omega \in \DD} \Rmax(\omega \| \rho) = \lambda_{\min}(\rho)^{-1}$.
\end{lemma}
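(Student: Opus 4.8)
The plan is to prove the two matching inequalities separately, after first disposing of the degenerate case in which $\rho$ is not of full rank. If $\lambda_{\min}(\rho) = 0$, I would pick a unit vector $\ket{v}$ with $\rho\ket{v} = 0$; then $\proj{v} \leq \lambda\rho$ would force $1 = \braket{v|v} \leq \lambda\braket{v|\rho|v} = 0$, which is impossible for finite $\lambda$, so $\Rmax(\proj{v}\|\rho) = \infty$. Since $\proj{v}\in\DD$, the left-hand side equals $\infty$, and this matches $\lambda_{\min}(\rho)^{-1} = 0^{-1} = \infty$ under the conventions fixed in the preliminaries. Hence it remains to treat the case where $\rho$ is invertible.

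For the upper bound I would observe that every $\omega\in\DD$ satisfies $\omega\leq\id$, while the spectral theorem gives $\rho\geq\lambda_{\min}(\rho)\id$, i.e.\ $\id\leq\lambda_{\min}(\rho)^{-1}\rho$. Chaining these yields $\omega\leq\lambda_{\min}(\rho)^{-1}\rho$ for every $\omega\in\DD$, so $\Rmax(\omega\|\rho)\leq\lambda_{\min}(\rho)^{-1}$ uniformly, whence $\max_{\omega\in\DD}\Rmax(\omega\|\rho)\leq\lambda_{\min}(\rho)^{-1}$.

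For the matching lower bound --- in fact, achievability --- I would take $\ket{v}$ to be a normalized eigenvector of $\rho$ for the eigenvalue $\lambda_{\min}(\rho)$ and evaluate $\Rmax(\proj{v}\|\rho)$ directly. Since $\rho$ is invertible, the condition $\proj{v}\leq\lambda\rho$ is equivalent, after conjugating by $\rho^{-1/2}$, to $\rho^{-1/2}\proj{v}\rho^{-1/2}\leq\lambda\id$; the left-hand side is the rank-one operator $\proj{w}$ with $\ket{w}=\rho^{-1/2}\ket{v}$, whose operator norm is $\braket{w|w}=\braket{v|\rho^{-1}|v}$. Therefore $\Rmax(\proj{v}\|\rho)=\braket{v|\rho^{-1}|v}$, and because $\ket{v}$ is a minimal eigenvector this equals $\lambda_{\min}(\rho)^{-1}$. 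Together with the upper bound this proves the claim.

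I do not anticipate any genuine obstacle here; the only points requiring a moment's care are the non-full-rank case, where both sides are correctly infinite under the stated conventions, and the elementary identity $\Rmax(\proj{v}\|\rho)=\braket{v|\rho^{-1}|v}$ for invertible $\rho$, which follows from the one-line conjugation argument above. An alternative packaging: since $\omega\mapsto\Rmax(\omega\|\rho)$ is convex, its maximum over $\DD$ is attained at an extreme point, i.e.\ a pure state $\proj{v}$, and then maximizing $\braket{v|\rho^{-1}|v}$ over unit vectors gives $\norm{\rho^{-1}}{\infty}=\lambda_{\min}(\rho)^{-1}$ directly, which also makes the upper bound automatic.
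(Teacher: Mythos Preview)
Your proof is correct and follows essentially the same route as the paper: handle the rank-deficient case by picking $\omega$ supported on $\ker\rho$, chain $\omega\leq\id\leq\lambda_{\min}(\rho)^{-1}\rho$ for the upper bound, and achieve equality with the pure state on a minimal eigenvector. Your conjugation argument for $\Rmax(\proj{v}\|\rho)=\braket{v|\rho^{-1}|v}$ merely spells out what the paper leaves as ``the bound is saturated by this choice''; the alternative convexity-plus-$\norm{\rho^{-1}}{\infty}$ packaging you mention is a nice shortcut but not a materially different proof.
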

\begin{proof}
If $\rho$ is not full rank, we can choose $\omega$ supported on $\ker(\rho)$ so that $\Rmax(\omega\|\rho) = \infty = \lambda_{\min}(\rho)^{-1}$. Assuming then that $\rho$ is full rank, for any $\omega \in \DD$ it holds that
\begin{equation}\begin{aligned}
  \omega &\leq \lambda_{\max}(\omega) \,\id \\
  &\leq \lambda_{\max}(\omega) \lambda_{\min}(\rho)^{-1} \,\rho
\end{aligned}\end{equation}
where we used that $\lambda_{\max}(\omega) = \min \lset \lambda \bar \omega \leq \lambda \id \rset$ and  $\lambda_{\min}(\rho) = \max \lset \lambda \bar \rho \geq \lambda \id \rset$. This constitutes a feasible solution to $\Rmax(\omega \| \rho)$, yielding
\begin{equation}\begin{aligned}
  \Rmax(\omega \| \rho) \leq \lambda_{\max}(\omega) \lambda_{\min}(\rho)^{-1} \leq \lambda_{\min}(\rho)^{-1}.
\end{aligned}\end{equation}
The proof is concluded by noticing that the bound is saturated by the choice of $\omega = \proj\psi$ with $\ket\psi$ being an eigenvector of $\rho$ corresponding to $\lambda_{\min}(\rho)$.
\end{proof}

The quantity $\RW^\FF$, defined in Eq.~\eqref{eq:proj_free_def}, obeys a similar set of properties. The proof is obtained in complete analogy with the above.
\begin{boxed}{white}
\begin{corollary}\label{cor:projf_properties}
The free projective robustness $\RW^\FF(\rho)$ satisfies the following properties.
\begin{enumerate}[(i)]
\item It is invariant under scaling, that is, $\RW^\FF(\lambda \rho) = \RW^\FF(\rho)$ for any $\lambda > 0$.
\item When $\FF$ is a convex set, $\RW^\FF$ can be computed as the optimal value of a conic linear optimisation problem:
\begin{align}\label{eq:thm_convex_progAfree}
  \RW^\FF(\rho) &= \inf \lset \gamma \in \RR \bar \rho \leq_\FF \wt\sigma \leq \gamma \rho,\; \wt\sigma \in \cone(\FF) \rset\\
  &= \sup \lset \< A, \rho \> \bar \< B, \rho \> = 1 ,\; B - A \in \cone(\FF)\*,\; A \in \cone(\FF)\*,\; B \in \HH_+ \rset\label{eq:thm_convex_progBfree}\\
  &= \sup \lset \frac{\< A, \rho \>}{\< B, \rho \>} \bar \frac{\< A, \sigma \>}{\< B, \sigma \>} \leq 1 \; \forall \sigma \in \FF,\; A \in \cone(\FF)\*,\; B \in \HH_+ \rset.\label{eq:thm_convex_progCfree}
\end{align}
Furthermore, the infimum in \eqref{eq:thm_convex_progAfree} is achieved as long as it is finite.
\item $\RW^\FF$ is faithful, that is, $\RW^\FF(\rho) = 1 \iff \rho \in \FF$.
\item When $\FF$ is a convex set, $\RW^\FF$ is quasiconvex: $\RW^\FF(t \rho + (1-t) \omega) \leq \max \{ \RW^\FF(\rho) ,\, \RW^\FF(\omega) \}$ for any $t \in [0,1]$.
\item $\RW^\FF$ is lower semicontinuous, that is, $\displaystyle \RW^\FF(\rho) \leq \liminf_{n\to\infty}\, \RW^\FF(\rho_n)$ for any sequence $(\rho_n)_n$ which converges to $\rho$.
\item It can be bounded as
\begin{align}
  R^\FF_\FF(\rho) \, W_\FF(\rho)^{-1} \leq \;&  \RW^\FF(\rho) \leq R_\FF^\FF(\rho) \, \Rmax(\sigma^\star_R \| \rho),\\
  & \hphantom{\RW^\FF(\rho)} \leq R_\FF^\FF(\rho) \, \lambda_{\min}(\rho)^{-1}\nonumber\\
&   \RW^\FF(\rho) \leq W_\FF(\rho)^{-1} \, \Rmax^\FF(\rho \| \sigma^\star_W),
\end{align}
where $\sigma^\star_R$ is an optimal state such that $R^\FF_\FF(\rho) = \Rmax^\FF(\rho \| \sigma^\star_R)$, and analogously $\sigma^\star_W$ is an optimal state such that $W_\FF(\rho) = \Rmax(\sigma^\star_W \| \rho)^{-1}$, whenever such states exist.
\end{enumerate}
\end{corollary}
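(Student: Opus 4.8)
The plan is to prove Corollary~\ref{cor:projf_properties} by mirroring the proof of Theorem~\ref{thm:proj_properties} line by line, everywhere replacing the factor $\Rmax(\rho\|\sigma)$ by $\Rmax^\FF(\rho\|\sigma)$ and, correspondingly, the positive-semidefinite relation $\rho \leq \lambda\sigma$ by the cone relation $\rho \leq_\FF \lambda\sigma$. Since $\cone(\FF) \subseteq \HH_+$ is again a closed cone (and a convex one when $\FF$ is convex), essentially all of the structural arguments carry over verbatim, and one only needs to flag the handful of places where the change of cone matters.

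For part~(i), scaling invariance follows from $\Rmax^\FF(\lambda\rho\|\mu\sigma) = \lambda^{-1}\mu\,\Rmax^\FF(\rho\|\sigma)$, immediate from the definition, exactly as in Theorem~\ref{thm:proj_properties}\eqref{itm:invariance}. For part~(ii), the reduction of $\min_{\sigma\in\FF} \Rmax^\FF(\rho\|\sigma)\,\Rmax(\sigma\|\rho)$ to the conic program \eqref{eq:thm_convex_progAfree} is the same change of variables $\wt\sigma = \lambda\sigma$, $\gamma = \lambda\mu$ used in Theorem~\ref{thm:proj_properties}\eqref{itm:dual}; attainment of the infimum when finite uses the same compactness argument, since the constraint $\wt\sigma \leq \gamma\rho$ (still with respect to $\HH_+$) bounds $\norm{\wt\sigma_n}{\infty} \leq \gamma_n$, a convergent subsequence has its limit in $\cone(\FF)$ by closedness, and the limits of $\wt\sigma_n - \rho \in \cone(\FF)$ and $\gamma_n\rho - \wt\sigma_n \in \HH_+$ remain in the respective closed cones. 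For the dual \eqref{eq:thm_convex_progBfree}, the only difference from \eqref{eq:thm_convex_progB} is that the Lagrange multiplier conjugate to the constraint $\rho \leq_\FF \wt\sigma$ now lives in the dual cone $\cone(\FF)\*$ rather than in $\HH_+$, which is precisely the origin of the extra condition $A \in \cone(\FF)\*$; strong duality still follows from Slater's theorem, since $B = \id$, $A = \ve\id$ with $\ve\in(0,1)$ is strictly feasible and $\id \in \cone(\FF)\*$ because $\<\id,\sigma\> = 1 > 0$ for every $\sigma\in\FF$. The rescaling step passing from \eqref{eq:thm_convex_progBfree} to \eqref{eq:thm_convex_progCfree} is identical.

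Part~(iii) follows the proof of Theorem~\ref{thm:proj_properties}\eqref{itm:faithful}, with the additional observation that $\Rmax^\FF(\rho\|\rho) = 1$ for every state $\rho$ (since $(\lambda-1)\rho \in \cone(\FF)$ already for $\lambda = 1$), so $\rho\in\FF$ gives $\RW^\FF(\rho) \leq 1$, while $\RW^\FF(\rho) \geq \RW(\rho) \geq 1$ always; conversely $\RW^\FF(\rho) = 1$ forces, via the attained conic program, some $\wt\sigma\in\cone(\FF)$ with $\rho \leq_\FF \wt\sigma \leq \rho$, hence $\rho = \wt\sigma\in\cone(\FF)$, and normalisation gives $\rho\in\FF$. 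Parts~(iv) and~(v) copy the proofs of Theorem~\ref{thm:proj_properties}\eqref{itm:quasiconvex} and~\eqref{itm:lsc}: convexity of $\FF$ enters, exactly as before, only to guarantee that $\cone(\FF)$ is closed under addition, so that $t(\lambda\sigma - \rho) + (1-t)(\lambda'\sigma'-\omega)\in\cone(\FF)$ in the quasiconvexity estimate and that the sublevel-set limits of $\wt\sigma_n - \omega_n$ stay in $\cone(\FF)$. Finally, part~(vi) reproduces Theorem~\ref{thm:proj_properties}\eqref{itm:bounds}: the lower bound is $\min$ of a product $\geq$ product of $\min$s, the upper bounds come from using $\sigma^\star_R$ and $\sigma^\star_W$ as feasible points in the definition of $\RW^\FF$, and the eigenvalue bound again invokes Lemma~\ref{lem:weight_lambda} after relaxing the optimisation over $\FF$ to one over all of $\DD$.

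The main obstacle — such as it is — is purely bookkeeping in part~(ii): one must track which primal constraint is paired with which cone so that the dual variable $A$ correctly ends up in $\cone(\FF)\*$ while $B$ remains in $\HH_+$, and confirm that the Slater point used for $\RW$ survives the shrinking of the cone. Everything else is a faithful transcription of the corresponding argument for $\RW$.
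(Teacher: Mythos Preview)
Your proposal is correct and follows exactly the approach the paper itself takes: the paper's entire proof of this corollary is the single sentence ``The proof is obtained in complete analogy with the above,'' and you have faithfully spelled out what that analogy entails, correctly identifying the one nontrivial bookkeeping change (the dual variable $A$ now lands in $\cone(\FF)^*$) and verifying that the Slater point survives. One tiny quibble: lower semicontinuity in (v) does not require convexity of $\FF$ --- only closedness of $\cone(\FF)$ is used to pass limits through the constraint $\wt\sigma_n - \omega_n \in \cone(\FF)$ --- so your remark lumping (iv) and (v) together under the convexity umbrella is slightly imprecise, though harmless.
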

\end{boxed}

One could also establish the condition under which $\RW^\FF$ is finite, although it does not take a straightforward form as in the case of $\RW$. Specifically, we have that $\RW^\FF(\rho)$ is finite if and only if there exists a state $\sigma \in \FF$ such that $\supp \sigma \subseteq \supp \rho$ and $\rho + \wt\tau \propto \sigma$ for some $\wt\tau \in \cone(\FF)$.
For our purposes, it will suffice to make note of the fact that $\RW^\FF(\rho)\geq\RW(\rho)$, meaning that diverging projective robustness implies that the free projective robustness also must diverge.

\subsubsection*{Computability}

We have seen in Theorem~\ref{thm:proj_properties} and Corollary~\ref{cor:projf_properties} that, in all convex resource theories, $\RW$ and $\RW^\FF$ are both given as the optimal values of convex optimisation problems.
As long as the constraint $\sigma \in \cone(\FF)$ (or, equivalently, $X \in \cone(\FF)\*$) can be expressed using linear matrix inequalities, the two optimisation problems are computable as semidefinite programs (SDP).

An example is the resource theory of coherence, where $\FF$ denotes the set of all diagonal (incoherent) states in a given basis and
\begin{equation}\begin{aligned}
  \RW(\rho) = \inf \lset \gamma \bar \rho \leq \wt{\sigma} \leq \gamma \rho,\; \wt\sigma \geq 0,\; \wt\sigma = \Delta(\wt\sigma) \rset
\end{aligned}\end{equation}
with $\Delta(\cdot) \coloneqq \sum_i \proj{i} \cdot \proj{i}$ denoting the completely dephasing channel (diagonal map) in the incoherent orthonormal basis $\{\ket{i}\}$. A very similar expression can be obtained also, for instance, in the theory of imaginarity, where the free states $\FF$ are those that only have real coefficients in a given basis --- here, we simply replace $\Delta$ with the map $X \mapsto \Re X$. An SDP expression is also immediate when the set $\FF$ consists only of a single state --- e.g.\ in the theories of athermality (thermodynamics) and purity --- or when $\FF$ is a polytope, that is, a convex combination of a finite number of states, which is the case e.g.\ in the resource theory of magic.

Another important case is the resource theory of non-positive partial transpose, in which case $\FF$ consists of bipartite states whose partial transpose is positive, and we have the SDP expressions
\begin{equation}\begin{aligned}
  \RW(\rho) &= \inf \lset \gamma \bar \rho \leq \wt{\sigma} \leq \gamma \rho,\; \wt\sigma \geq 0,\; \wt\sigma^\Gamma \geq 0 \rset\\
  \RW^\FF(\rho) &=  \inf \lset \gamma \bar \rho \leq \wt\sigma \leq \gamma \rho,\; \rho^\Gamma \leq \wt\sigma^\Gamma,\; \wt\sigma \geq 0,\; \wt\sigma^\Gamma \geq 0 \rset
\end{aligned}\end{equation}
with $(\cdot)^\Gamma$ denoting partial transposition in any chosen basis.

\subsection{Necessary condition for probabilistic transformations}

The usefulness of the projective robustness measures stems from their exceptionally strong monotonicity, which can be used to establish much stronger limitations than commonly encountered monotones. Recall that, in probabilistic transformations of resources, monotonicity is typically required to hold on average. $\RW$ and $\RW^\FF$, however, cannot be increased even if we postselect on a particular outcome of a probabilistic transformation.

\begin{theorem}[\cite{regula_2022}]\label{thm:nogo_monotonicity}
If there exists a probabilistic free transformation $\rho \transf \rho'$, then
\begin{equation}\begin{aligned}
  \RW(\rho) \geq \RW(\rho')
\end{aligned}\end{equation}
and
\begin{equation}\begin{aligned}
  \RW^\FF(\rho) \geq \RW^\FF(\rho').
\end{aligned}\end{equation}
\end{theorem}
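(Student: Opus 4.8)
The plan is to reduce both inequalities to the case of a single sub-normalised operation and then extend to the asymptotic regime via lower semicontinuity. I would work throughout with the reformulation $\RW(\rho) = \inf\lset \gamma \bar \rho \leq \wt\sigma \leq \gamma\rho,\ \wt\sigma\in\cone(\FF)\rset$ from Theorem~\ref{thm:proj_properties}\,\eqref{itm:dual} (and its analogue in Corollary~\ref{cor:projf_properties} for $\RW^\FF$), because in this form the scaling invariance of property~\eqref{itm:invariance} is built in and the success probability drops out for free. The two structural inputs I need are: (a) every $\E\in\OO$ is completely positive, hence positive, hence order-preserving for the positive semidefinite cone, $A \leq B \Rightarrow \E(A) \leq \E(B)$; and (b) the probabilistic resource--non-generating condition, which says $\E(\sigma)\in\cone(\FF)$ for all $\sigma\in\FF$, upgrades by linearity of $\E$ and the fact that $\cone(\FF)$ is a cone to $\E(\cone(\FF))\subseteq\cone(\FF)$, so that $\E$ also preserves the order $\leq_\FF$.

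First I would handle a single $\E\in\OO$ with $p \coloneqq \Tr\E(\rho) > 0$, writing $\rho'' \coloneqq \E(\rho)/p$. If $\RW(\rho) = \infty$ there is nothing to prove, so assume it is finite; then by Theorem~\ref{thm:proj_properties}\,\eqref{itm:dual} the infimum is attained, yielding $\wt\sigma\in\cone(\FF)$ with $\rho \leq \wt\sigma \leq \RW(\rho)\rho$. Applying $\E$ and using (a)--(b) gives $\E(\rho) \leq \E(\wt\sigma) \leq \RW(\rho)\,\E(\rho)$ with $\E(\wt\sigma)\in\cone(\FF)$; dividing by $p$ and using that $\cone(\FF)$ is closed under positive rescaling shows $\tfrac{1}{p}\E(\wt\sigma)$ is feasible for $\RW(\rho'')$ with objective value $\RW(\rho)$, whence $\RW(\rho'') \leq \RW(\rho)$. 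The identical computation, now invoking that $\E$ respects $\leq_\FF$ on the relevant side of the sandwich, gives $\RW^\FF(\rho'') \leq \RW^\FF(\rho)$ via Corollary~\ref{cor:projf_properties}.

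Then I would pass to the general transformation. By definition $\rho \transf \rho'$ means $\rho'$ is a limit of states $\rho_n'' = \E_n(\rho)/\Tr\E_n(\rho)$ with $\E_n\in\OO$ and $\Tr\E_n(\rho) > 0$. The single-operation step gives $\RW(\rho_n'') \leq \RW(\rho)$ for every $n$, and lower semicontinuity of $\RW$ (Theorem~\ref{thm:proj_properties}\,\eqref{itm:lsc}) closes the argument: $\RW(\rho') \leq \liminf_n \RW(\rho_n'') \leq \RW(\rho)$, with the analogous chain for $\RW^\FF$ using Corollary~\ref{cor:projf_properties}.

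I do not expect a genuine obstacle, but two points deserve care, and these are what I would write out in detail. The first is input (b): one must check that $\E(\wt\sigma)\in\cone(\FF)$ follows from the \emph{probabilistic} notion of resource non-generation used in the preliminaries — i.e.\ from ``$\E(\sigma) = p\,\sigma'$ with $\sigma'\in\FF$'' together with linearity — with no appeal to trace preservation. The second is that $\E(\wt\sigma)$ cannot vanish when $\Tr\E(\rho) > 0$ (otherwise the sandwich inequality would force $\E(\rho) = 0$), so the division by $p$ and the resulting feasibility claim are legitimate.
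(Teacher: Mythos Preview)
Your proposal is correct and follows essentially the same approach as the paper: apply a free operation to a feasible sandwich $\rho \leq \wt\sigma \leq \RW(\rho)\,\rho$, use positivity of $\E$ and the fact that $\E(\cone(\FF))\subseteq\cone(\FF)$ to obtain a feasible sandwich for the output, and close with lower semicontinuity. The only cosmetic difference is that the paper works with the normalised state $\sigma\in\FF$ and tracks the factors $\Tr\E_n(\sigma)$ and $\Tr\E_n(\rho)$ explicitly, whereas you work directly in the cone formulation so those factors cancel automatically.
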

\begin{proof}
If there is no $\sigma \in \FF$ such that $\supp \rho = \supp \sigma$, then $\RW(\rho) = \infty$ and the result is trivial, so we shall assume otherwise. Then let $\sigma \in \FF$ be a state such that $\rho \leq \lambda \sigma$ and $\sigma \leq \mu \rho$ with $\RW(\rho) = \lambda\mu$. Assume that there exists a sequence $(\E_n)_n \in \OO$ realising the transformation $\rho \transf \rho'$. Notice that it necessarily holds that $\Tr \E_n(\sigma) > 0 \; \forall n$, since $\Tr \E_n(\sigma) \geq \frac{1}{\lambda}\Tr \E_n(\rho)$. We then have
\begin{equation}\begin{aligned}\label{eq:proof_mono1}
  \frac{\E_n(\rho)}{\Tr \E_n(\rho)} &\leq \lambda \frac{\Tr \E_n(\sigma)}{\Tr \E_n(\rho)} \frac{\E_n(\sigma)}{\Tr \E_n(\sigma)}
\end{aligned}\end{equation}
and
\begin{equation}\begin{aligned}\label{eq:proof_mono2}
  \frac{\E_n(\sigma)}{\Tr \E_n(\sigma)} &\leq \mu \frac{\Tr \E_n(\rho)}{\Tr \E_n(\sigma)} \frac{\E_n(\rho)}{\Tr \E_n(\rho)}
\end{aligned}\end{equation}
using the positivity of $\E_n$. 
Since each $\E_n$ is a free operation, $\frac{\E_n(\sigma)}{\Tr \E_n(\sigma)} \in \FF$. 
This gives that
\begin{equation}\begin{aligned}
  \RW\!\left(\frac{\E_n(\rho)}{\Tr \E_n(\rho)}\right) \leq \left[\lambda \frac{\Tr \E_n(\sigma)}{\Tr \E_n(\rho)}\right]\! \left[\mu \frac{\Tr \E_n(\rho)}{\Tr \E_n(\sigma)} \right] \!= \lambda \mu = \RW(\rho).
\end{aligned}\end{equation}
As $\frac{\E_n(\rho)}{\Tr \E_n(\rho)} \tends \rho'$ by hypothesis, lower semicontinuity of $\RW$ (Theorem~\ref{thm:proj_properties}\eqref{itm:lsc}) implies that $\RW(\rho') \leq \RW(\rho)$ as desired.

The proof for $\RW^\FF$ is completely analogous, using the fact that any $\E_n \in \OO$ preserves the free cone $\cone(\FF)$.
\end{proof}
This yields a general no-go result: it is impossible to transform any state $\rho$ into a state $\rho'$ such that $\RW(\rho) < \RW(\rho')$ or $\RW^\FF(\rho) < \RW^\FF(\rho')$, even probabilistically.

As an immediate consequence, we see that no state $\rho$ with $\RW(\rho) < \infty$ can be converted into a state $\rho'$ with $\RW(\rho') = \infty$, where the latter type includes in particular all resourceful pure states. 
Furthermore, the submultiplicativity of $\RW$ ensures that $\RW(\rho^{\otimes n})$ is finite for all $n$, which means that the transformation $\rho^{\otimes n} \transf \rho'$ is impossible for any number of copies of $\rho$. In resource theories where $\FF$ contains a state of full rank, this precludes the transformation of any full-rank state into a non-free pure state, recovering a result of Ref.~\cite{fang_2020}. However, the result here is strictly stronger, as it extends to any transformation where the target state has a diverging projective robustness, which goes beyond pure states.
 For instance, in the theory of quantum entanglement, rank-deficient states $\rho'$ will have $\RW(\rho') = \infty$ unless their support can be spanned by product vectors. Our findings thus extend previous no-go results that dealt with the impossibility of entanglement purification~\cite{kent_1998,jane_2002,horodecki_1999-1,horodecki_2006-1,regula_2019-2}.
Similarly, for quantum coherence, it holds that $\RW(\rho') = \infty$ whenever the support of $\rho'$ cannot be spanned by incoherent vectors, which also allows us to generalise known restrictions on coherence distillation~\cite{fang_2018,wu_2020}.

We stress that Theorem~\ref{thm:nogo_monotonicity} does not require the convexity of the set $\FF$ and thus holds also for more general types of quantum resource theories, with closedness of $\FF$ being the sole requirement. A weaker version of the result can be shown even without assuming closedness: in this case, we would no longer be able to rely on the lower semicontinuity of $\RW$, but $\RW$ would remain a monotone for all probabilistic protocols that succeed with a non-zero probability $p$ (i.e., non-asymptotic ones).

We also note that we did not actually use the \emph{complete} positivity of operations in $\OO$; the statement of Theorem~\ref{thm:nogo_monotonicity} applies also when $\OO$ consists of maps that are merely positive.

\subsection{Sufficient condition for probabilistic transformations}

In order to show that the non-increase of the projective robustness can serve as a \emph{sufficient} condition for resource convertibility, we will need to consider different types of resource theories separately.

Hereafter in the manuscript, we make the assumption that $\FF$ is convex; as seen in Theorem~\ref{thm:proj_properties}, this is required to make $\RW$ and $\RW^\FF$ convex optimisation problems, the strong duality of which will be a crucial ingredient of our proofs. We define the following classes of quantum resource theories~\cite{gour_2017,regula_2020}:
\begin{enumerate}[(i)]
\item \emph{Affine resource theories} are such that the set $\FF$ is the intersection of some affine subspace with the set of density matrices. In other words, $\FF = \aff(\FF) \cap \DD$, where $\aff$ denotes the affine hull
\begin{equation}\begin{aligned}
  \aff(\FF) = \lset \sum_i c_i \sigma_i \bar \sigma_i \in \FF,\; c_i \in \RR,\; \sum_i c_i = 1 \rset.
\end{aligned}\end{equation}
Examples include the theories of quantum coherence~\cite{baumgratz_2014}, asymmetry~\cite{gour_2008}, thermodynamics~\cite{horodecki_2013}, or imaginarity~\cite{hickey_2018,wu_2021}.
\item \emph{Full-dimensional resource theories} are those in which $\sspan(\FF) = \HH$; in other words, $\FF$ has a non-zero volume as a subset of $\DD$. Examples are the theories of quantum entanglement~\cite{horodecki_2009}, non-positive partial transpose~\cite{horodecki_2009}, or magic (non-stabiliserness)~\cite{veitch_2014,howard_2017}.
\end{enumerate}
The two classes together encompass most known quantum resource theories of relevance.

\subsubsection*{Affine resource theories}

We begin with a characterisation of affine resource theories.

\begin{boxed}{white}
{%
\begin{theorem}\label{thm:nogo_affine_full}
In any affine resource theory, there exists a resource--non-generating probabilistic transformation $\rho \transf \rho'$ if any of the following conditions is satisfied:
\begin{enumerate}[(i)]
\item $\infty > \RW(\rho) \geq \RW(\rho')$,
\item $\infty = \RW(\rho)$ and $R_\F(\rho') < \infty$,
\item $R_\F(\rho) = \infty$.
\end{enumerate}
\end{theorem}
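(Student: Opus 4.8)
I treat the three sufficient conditions separately; (iii) is an elementary construction, (i) is the substantive case, and (ii) reduces to it.

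\textbf{Condition (iii).} Choose a free state $\sigma^\circ\in\relint(\FF)$ --- this exists by convexity of $\FF$ and satisfies $\supp\tau\subseteq\supp\sigma^\circ$ for all $\tau\in\FF$. Since $R_\FF(\rho)=\infty$, no free state dominates $\rho$, so $\supp\rho\not\subseteq\supp\sigma^\circ$; hence the projector $\Pi$ onto $\ker\sigma^\circ$ satisfies $\Tr(\Pi\rho)>0$ while $\Tr(\Pi\sigma)=0$ for every $\sigma\in\FF$. The map $\E(X):=\Tr(\Pi X)\,\rho'$ is completely positive, trace non-increasing (as $0\leq\Pi\leq\id$), and resource non-generating --- it sends every free state to $0$, which is permitted with success probability zero --- while $\E(\rho)=\Tr(\Pi\rho)\,\rho'$ is a strictly positive multiple of $\rho'$. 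This gives $\rho\transf\rho'$, in fact with nonzero probability, for an arbitrary target $\rho'$.

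\textbf{Condition (i).} Let $\sigma,\sigma'\in\FF$ be optimal in the programs computing $\RW(\rho)$ and $\RW(\rho')$ (attained by Theorem~\ref{thm:proj_properties}\eqref{itm:dual}), so that $\supp\sigma=\supp\rho$, $\supp\sigma'=\supp\rho'$, and $\nu\sigma\leq\rho\leq\lambda\sigma$, $\nu'\sigma'\leq\rho'\leq\lambda'\sigma'$ with $\lambda/\nu=\RW(\rho)\geq\RW(\rho')=\lambda'/\nu'$. Passing to the whitened operators $\widehat{\rho}:=\sigma^{-1/2}\rho\,\sigma^{-1/2}$ and $\widehat{\rho'}:=\sigma'^{-1/2}\rho'\,\sigma'^{-1/2}$, the hypothesis says precisely that $\lambda_{\max}(\widehat{\rho})/\lambda_{\min}(\widehat{\rho})\geq\lambda_{\max}(\widehat{\rho'})/\lambda_{\min}(\widehat{\rho'})$. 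One can then build a completely positive map $\Phi$ acting between the eigenbases of $\widehat{\rho}$ and $\widehat{\rho'}$ with $\Phi(\id_{\supp\rho})\propto\id_{\supp\rho'}$ and $\Phi(\widehat{\rho})\propto\widehat{\rho'}$ --- essentially a classical channel that averages the eigenvalues of $\widehat{\rho}$ onto those of $\widehat{\rho'}$, which is feasible exactly because these condition numbers are ordered --- and put $\E(X):=\sigma'^{1/2}\,\Phi(\sigma^{-1/2}X\,\sigma^{-1/2})\,\sigma'^{1/2}$. This completely positive map obeys $\E(\sigma)\propto\sigma'\in\FF$ and $\E(\rho)\propto\rho'$.

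\textbf{The obstacle.} The difficulty is that such an $\E$, built from the single state $\sigma$, is guaranteed to be resource non-generating only \emph{at} $\sigma$, not on all of $\FF$. Affineness is essential to repair this: from $\FF=\aff(\FF)\cap\DD$ one gets $\cone(\FF)=\HH_+\cap\sspan(\FF)$, so resource non-generation becomes the finitely many linear conditions $\E^\dagger(X_k)\perp\aff(\FF)$ over a basis $\{X_k\}$ of $\aff(\FF)^{\perp}$. Consequently the existence of the required probabilistic operation is a conic feasibility problem, which I would prove feasible by duality: a dual infeasibility certificate, combined with strong duality of the conic program for $\RW$ (Theorem~\ref{thm:proj_properties}\eqref{itm:dual}), would force the strict inequality $\RW(\rho)<\RW(\rho')$, contradicting (i). Carrying out this dualisation --- and matching the dual optimum to the projective robustness --- is the technical core of the proof and where I expect the bulk of the effort.

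\textbf{Condition (ii).} If $R_\FF(\rho)=\infty$ this is covered by (iii); otherwise $\supp\rho\subseteq\supp\sigma^\circ$, and likewise $\supp\rho'\subseteq\supp\sigma^\circ$ since $R_\FF(\rho')<\infty$. For $\epsilon\in(0,1]$ put $\rho'_\epsilon:=(1-\epsilon)\rho'+\epsilon\sigma^\circ$; then $\supp\rho'_\epsilon=\supp\sigma^\circ$, whence $\RW(\rho'_\epsilon)<\infty$ by Theorem~\ref{thm:proj_properties}\eqref{itm:supp}. Since $\RW(\rho)=\infty\geq\RW(\rho'_\epsilon)$, the argument of case (i) applies --- using $\sigma^\circ$ as the support-dominating free state on the source side, with the condition-number inequality becoming vacuous --- and yields $\rho\transf\rho'_\epsilon$ for every $\epsilon>0$. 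Letting $\epsilon\to0^+$ and invoking the closure in the definition of $\transf$ gives $\rho\transf\rho'$.
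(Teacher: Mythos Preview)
Your treatment of (iii) is fine and essentially matches the paper's.

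For (i), however, there is a genuine gap that you yourself flag but do not close. Your whitening construction $\E(X)=\sigma'^{1/2}\Phi(\sigma^{-1/2}X\sigma^{-1/2})\sigma'^{1/2}$ is resource non-generating only at the single free state $\sigma$; for any other $\tau\in\FF$ there is no reason for $\sigma'^{1/2}\Phi(\sigma^{-1/2}\tau\sigma^{-1/2})\sigma'^{1/2}$ to lie in $\cone(\FF)$. You propose to repair this by a conic duality argument (``a dual infeasibility certificate \dots\ would force $\RW(\rho)<\RW(\rho')$''), but you do not carry it out, and it is not at all clear that the dual of the feasibility program for the existence of $\E\in\OO$ with $\E(\rho)\propto\rho'$ collapses to the projective robustness --- that program is over Choi matrices with many constraints (complete positivity, trace non-increase, resource non-generation on \emph{all} of $\FF$), not over pairs $(A,B)$ of PSD witnesses. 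This is the heart of the theorem and it is missing.

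The paper's approach bypasses your obstacle entirely. The key step is a preparatory lemma showing that in affine theories the dual of $\RW(\rho)$ can be written with an \emph{equality} constraint $\<B-A,\sigma\>=0$ for all $\sigma\in\FF$ (rather than the inequality $\<B-A,\sigma\>\geq 0$). Given optimal dual sequences $(A_n,B_n)$ with this property, the paper builds the measure-and-prepare map
\[
\E_n(X)=\<B_n,X\>(\lambda\sigma'-\rho'')+\<A_n,X\>(\rho''-\tfrac{1}{\mu}\sigma'),
\]
where $\rho''$ is a rescaling of $\rho'$. Because $\<A_n,\pi\>=\<B_n,\pi\>$ for \emph{every} $\pi\in\FF$, one gets $\E_n(\pi)\propto\sigma'$ for all free $\pi$ simultaneously --- resource non-generation on the whole of $\FF$ comes for free from the dual equality, with no further feasibility argument needed. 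The same lemma handles (ii) directly via the dual of $\RW(\rho)^{-1}$, so no approximation $\rho'_\epsilon$ is required; note also that your reduction of (ii) to (i) invokes case (i) with $\RW(\rho)=\infty$, which violates the hypothesis of (i).
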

}
\end{boxed}

Let us clarify this result first. As long as $R_\FF(\rho') < \infty$ holds in the given affine resource theory, then conditions (i) and (ii) together with Theorem~\ref{thm:nogo_monotonicity} imply that
\begin{equation}\begin{aligned}
  \rho \transf \rho' \;\iff\; \RW(\rho) \geq \RW(\rho'),
\end{aligned}\end{equation}
as announced in \cite{regula_2022}. As we remarked before, the condition $R_\FF(\rho') < \infty$ is actually satisfied for \emph{all} states in virtually every theory of interest, and we only consider it here explicitly to account for pathological cases.

An important aspect of the above result is that the transformation $\rho \transf \rho'$ might only be possible with asymptotically vanishing probability. Should this be undesirable, one can also study the problem of when there exists a transformation which succeeds with a non-zero probability. We will give a concise sufficient condition in Lemma~\ref{lem:achiev_omega} shortly: the existence of a transformation with a non-zero probability of success can be guaranteed whenever $\rho$ is full rank or a pure state. 

We begin with a useful lemma based on the methods of~\cite{regula_2020}.
\begin{lemma}\label{lem:affine}
In any affine resource theory, the projective robustness as well as its reciprocal can be expressed as
\begin{equation}\begin{aligned}
  \Omega_{\FF}(\rho) &= \sup \lset \< A, \rho \> \bar \< B, \rho \> = 1 ,\; \< B - A, \sigma \> = 0 \; \forall \sigma \in \FF,\; A, B \geq 0 \rset,\\
  \Omega_{\FF}(\rho)^{-1} &= \inf \lset \< A, \rho \> \bar \< B, \rho \> = 1 ,\; \< B - A, \sigma \> = 0 \; \forall \sigma \in \FF,\; A, B \geq 0 \rset.
\end{aligned}\end{equation}
\end{lemma}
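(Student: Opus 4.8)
Looking at the statement of Lemma~\ref{lem:affine}, the plan is to start from the dual formulation of $\RW$ already established in Theorem~\ref{thm:proj_properties}\eqref{itm:dual}, namely
\[
  \RW(\rho) = \sup \lset \< A, \rho \> \bar \< B, \rho \> = 1 ,\; B - A \in \cone(\FF)\*,\; A, B \in \HH_+ \rset,
\]
and observe that in an affine resource theory the self-dual structure forces the cone constraint $B - A \in \cone(\FF)\*$ to collapse into the linear equality $\< B - A, \sigma \> = 0$ for all $\sigma \in \FF$. The point is that $\cone(\FF)\* = \lset X \bar \< X, \sigma \> \geq 0 \; \forall \sigma \in \FF \rset$, and when $\FF = \aff(\FF) \cap \DD$, the cone $\cone(\FF)$ generates the affine subspace $\aff(\FF)$ in the sense that $\sspan(\cone(\FF)) = \sspan(\FF)$ and, crucially, $\cone(\FF)$ is the intersection of $\HH_+$ with a linear subspace translated appropriately --- more precisely, one shows that an operator $X$ lies in $\cone(\FF)\*$ with the property needed here iff $\< X, \sigma \>$ has a fixed sign on a set whose affine hull is $\aff(\FF)$; combining the inequality $\< X, \sigma_1\> \geq 0$ with $\< X, \sigma_2 \> \geq 0$ for $\sigma_1, \sigma_2 \in \FF$ and using that affine combinations $c\sigma_1 + (1-c)\sigma_2$ with $c$ outside $[0,1]$ remain in $\cone(\FF)$ (after rescaling to unit trace, if they are positive) forces $\< X, \sigma \>$ to be constant, hence zero once we also know $X = B-A$ with both terms trace-bounded. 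The cleanest way to phrase this: fix any $\sigma_0 \in \FF$; since $\FF$ is affine, for any $\sigma \in \FF$ and any $t \in \RR$ the operator $\sigma_0 + t(\sigma - \sigma_0)$ has unit trace and lies in $\aff(\FF)$, and for $t$ in a neighbourhood of $0$ it is positive semidefinite, hence in $\cone(\FF)$; applying $\< B-A, \cdot \> \geq 0$ to such operators and letting $t$ range over both signs gives $\< B - A, \sigma - \sigma_0 \> = 0$, and then $\< B-A, \sigma_0\> = 0$ too.

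Next I would handle the reciprocal. The natural route is to use the scaling invariance and the primal form $\RW(\rho) = \inf \lset \gamma \bar \rho \leq \wt\sigma \leq \gamma \rho, \; \wt\sigma \in \cone(\FF) \rset$: here $\RW(\rho)^{-1}$ should be expressible by the ``flipped'' program $\sup \lset \delta \bar \delta \rho \leq \wt\sigma \leq \rho, \; \wt\sigma \in \cone(\FF)\rset$, which is exactly the same family of $\wt\sigma$'s with $\delta = 1/\gamma$ after relabelling $\wt\sigma \mapsto \wt\sigma/\gamma$. Dualising this second conic program by the identical Lagrangian/Slater argument used in Theorem~\ref{thm:proj_properties}\eqref{itm:dual} --- Slater's condition again holds via $B = \id$, $A = \varepsilon\id$ --- and then applying the same affine-collapse of the cone constraint yields the claimed infimum expression for $\RW(\rho)^{-1}$. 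Alternatively, and perhaps more slickly, one observes that the two stated optimisation problems are literally related by swapping the roles of $A$ and $B$: the constraint set $\{\< B, \rho \> = 1, \< B - A, \sigma \> = 0 \;\forall \sigma, A, B \geq 0\}$ is symmetric under $A \leftrightarrow B$ except for the normalisation, and one checks that $\sup \< A, \rho\>$ over the normalisation $\<B,\rho\>=1$ equals $\big(\inf \<B,\rho\>$ over $\<A,\rho\>=1\big)^{-1}$ by the standard rescaling $A \mapsto A/\<A,\rho\>$, $B\mapsto B/\<A,\rho\>$, so the second identity follows from the first purely formally.

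I expect the main obstacle to be the careful justification of the affine-collapse step, i.e.\ that $B - A \in \cone(\FF)\*$ together with $A, B$ being \emph{bounded} positive operators (so that $\RW$ is finite only when a genuine feasible pair exists) really does reduce to $\< B - A, \sigma \> = 0$ for all $\sigma \in \FF$ rather than merely $\geq 0$. The subtlety is that $\cone(\FF)\*$ on its own only gives the one-sided inequality; what upgrades it to an equality is that we are also free to \emph{replace} $(A, B)$ in the dual by $(A, B - X)$ or $(A - X, B)$ for a suitable perturbation when $\< B - A, \sigma\>$ is strictly positive somewhere, improving the objective --- in other words, optimality of the dual solution forces $\< B-A, \sigma\>$ to vanish on the optimiser, and since this must hold for the supremum it holds as an effective constraint. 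One has to argue this does not change the value: concretely, if $\< B - A, \sigma_0 \> > 0$ for some $\sigma_0 \in \FF$, then since $\aff(\FF)$ is a proper affine subspace one can shift $A$ along directions in $\aff(\FF)^\perp$ (or add a small multiple of an operator supported appropriately) to increase $\< A, \rho\>$ while keeping feasibility --- here I would lean on the explicit structure used in the proof of Theorem~\ref{thm:nogo_affine_full} and the methods of Ref.~\cite{regula_2020} that the lemma cites, since they already developed exactly this affine-duality machinery. Modulo that, everything else is a routine re-derivation of strong duality for a conic linear program whose Slater point has been identified.
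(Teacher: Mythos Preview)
The ``affine collapse'' step has a genuine gap. Your local argument does not work: if $\sigma_0 + t(\sigma - \sigma_0) \in \FF$ only for $t$ in a bounded neighbourhood of $0$, then the linear function $t \mapsto \langle B-A, \sigma_0\rangle + t\langle B-A, \sigma-\sigma_0\rangle$ being nonnegative there does \emph{not} force its slope to vanish (take e.g.\ $1+t$ on $(-1,1)$). The constraint $B - A \in \cone(\FF)^*$ is genuinely only one-sided, because $\cone(\FF)$ contains only the positive-semidefinite members of $\aff(\FF)$. Your optimality-based fallback is also incomplete as stated: shifting $A$ along $\aff(\FF)^\perp$ leaves $\langle B-A, \sigma\rangle$ unchanged for $\sigma \in \FF$, so it cannot drive the slack to zero, and you have not explained how to modify $(A,B)$ to achieve the equality constraint while preserving $A, B \geq 0$ and not decreasing $\langle A, \rho\rangle$.

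The paper sidesteps this by relaxing at the \emph{primal} level first: it replaces $\cone(\FF)$ by $\aff(\FF)$ in the primal for $\RW$, noting that this does not change the value because any $X \in \aff(\FF)$ with $\rho \leq \lambda X$ is automatically positive, hence in $\FF = \aff(\FF) \cap \DD$ after normalisation. The dual constraint then becomes $B - A \in \aff(\FF)^*$, and since $\aff(\FF)$ contains affine combinations with arbitrary real coefficients, the argument you were reaching for now works for all $t \in \RR$ and shows that $\langle B-A, \cdot\rangle$ must be a nonnegative constant $k$ on $\FF$; replacing $A$ by $A + k\id$ then forces $k = 0$ at the optimum. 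Your $A \leftrightarrow B$ symmetry argument for the reciprocal is, on the other hand, correct and actually neater than the paper's treatment (which re-derives the dual of $\RW^{-1}$ from scratch), so once the first identity is properly established that part goes through.
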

\begin{proof}
Let us define $\Omega_{\aff(\FF)}$ as the projective robustness defined with respect to the set $\aff(\FF)$, that is,
\begin{equation}\begin{aligned}\label{eq:proj_aff}
  \Omega_{\aff(\FF)} (\rho) = \inf_{ Z \in \aff(\FF)} \Rmax(\rho \| Z) \, \Rmax(Z \| \rho).
\end{aligned}\end{equation}
Importantly, for density operators, this makes no quantitative difference: any feasible $X \in \aff(\FF)$ which satisfies $\rho \leq \lambda X$ is necessarily positive, which means that all feasible $X$ actually belong to $\FF$, and so $\Omega_{\aff(\FF)} (\rho) = \RW(\rho)$. However, by taking the dual of the program \eqref{eq:proj_aff} as in the proof of Theorem~\ref{thm:proj_properties}, we get
\begin{equation}\begin{aligned}
  \Omega_{\aff(\FF)} (\rho) &= \sup \lset \< A, \rho \> \bar \< B, \rho \> = 1 ,\; B - A \in \aff(\FF)\*,\; A, B \geq 0 \rset.
\end{aligned}\end{equation}
The dual cone $\aff(\FF)\* \coloneqq \lset X \bar \< X, Z \> \geq 0 \; \forall Z \in \aff(\FF) \rset$ can be shown to consist of operators $X$ such that $\<X, Z \>$ is constant on the set $\aff(\FF)$, that is, $\< X, Z \> = k$ for all $Z \in \aff(\FF)$ for some $k \in \RR_+$. This can be seen as follows. Assume that $X \in \aff(\FF)\*$ but $X$ is not constant on $\aff(\FF)$, that is, there exists $Z, Z' \in \aff(\FF)$ such that $\< X, Z \> = k$ but $\<X , Z' \> = k'$ with $k > k'$. Let us then define $\wt{Z} = t Z' + (1-t) Z$ for some $t \in \RR$. Clearly, any such $\wt{Z}$ belongs to $\aff(\FF)$ by definition. But, choosing $t > \frac{k}{k - k'}$, we get
\begin{equation}\begin{aligned}
  \< \wt{Z}, X \> &= t k' + (1-t) k\\
  &= t (k' - k) + k\\
  &< \frac{k}{k - k'} (k'-k) + k\\
  &= 0,
\end{aligned}\end{equation}
which contradicts the assumption that $X \in \aff(\FF)\*$. Thus, we have that
\begin{equation}\begin{aligned}
  \Omega_{\aff(\FF)} (\rho) &= \sup \lset \< A, \rho \> \bar \< B, \rho \> = 1 ,\; \exists k \in \RR_+ \text{ s.t.\ } \< B - A, Z \> = k \; \forall Z \in \aff(\FF),\; A, B \geq 0 \rset.
\end{aligned}\end{equation}
Notice first that it suffices to impose that $ \< B - A, \sigma \> = k \; \forall \sigma \in \FF$, since this is equivalent to $\< B - A, Z \> = k \; \forall Z \in \aff(\FF)$. 
Now, let $A, B$ be feasible solutions such that $\< B - A, \sigma \> = k \; \forall \sigma \in \FF$ for some $k>0$. Then $A' = A + k\id$ and $B$ are also feasible, but with a strictly larger objective function value. Therefore, any optimal choice of $A,B$ (or an optimal sequence of such operators) must satisfy $k=0$.
Putting everything together, we have shown that, for any state $\rho$,
\begin{equation}\begin{aligned}
  \Omega_{\FF}(\rho) = \Omega_{\aff(\FF)} (\rho) &= \sup \lset \< A, \rho \> \bar \< B, \rho \> = 1 ,\; \< B - A, \sigma \> = 0 \; \forall \sigma \in \FF,\; A, B \geq 0 \rset
\end{aligned}\end{equation}
which is precisely the desired form.

For the reciprocal, we notice that $\Rmax(\rho\|\sigma)^{-1} = \sup \lset \lambda \bar \lambda \rho \leq \sigma \rset$. This allows us to express
\begin{equation}\begin{aligned}
  \Omega_{\aff(\FF)} (\rho)^{-1} &= \sup \lset \lambda \mu \bar \lambda \rho \leq X,\; \mu X \leq \rho,\; X \in \aff(\FF) \rset\\
  &= \inf \lset \< A, \rho \> \bar \< B, \rho \> = 1 ,\;  A - B \in \aff(\FF)\*,\; A, B \geq 0 \rset.
\end{aligned}\end{equation}
An analogous series of arguments leads us to the expression in the statement of the Lemma.
\end{proof}

The main proof can be now shown by generalising an approach of Ref.~\cite{reeb_2011} (cf.~\cite{buscemi_2017}).

\begin{proof}[\textbf{\emph{Proof of Theorem~\ref{thm:nogo_affine_full}}}]
(i) Assume that $\Omega_\FF(\rho) > 1$, since otherwise both states are free and the result is trivial. By hypothesis, we have that there exists a state $\sigma \in \FF$ such that $\rho \leq \lambda \sigma$ and $\sigma \leq \mu \rho$ with $\lambda \mu = \RW(\rho)$, and analogously there exists $\sigma' \in \FF$ such that $\rho' \leq \lambda' \sigma'$ and $\sigma' \leq \mu' \rho'$ with $\lambda'\mu' = \RW(\rho')$.  
Since $\lambda\mu \geq \lambda'\mu'$ by assumption, we can always choose a positive factor $k$ such that $k \rho' \leq \lambda \sigma'$ and $\sigma' \leq \mu k \rho'$. To see this, notice that this is trivial if $\lambda \geq \lambda'$ and $\mu \geq \mu'$; if $\mu < \mu'$, then we take $k = \frac{\lambda}{\lambda'}$; and in the case that $\lambda < \lambda'$, it suffices to take $k = \frac{\mu'}{\mu}$.
Denoting $\rho'' \coloneqq k \rho'$, define the sequence of maps
\begin{equation}\begin{aligned}\label{eq:measure_prep_1}
  \E_n(X) \coloneqq \< B_n, X \> \left(\lambda \sigma' - \rho''\right) + \<A_n , X \> \left( \rho'' - \frac{1}{\mu} \sigma'\right),
\end{aligned}\end{equation}
where $(A_n)_n$ and $(B_n)_n$ are optimal sequences of feasible operators for $\RW(\rho)$ in Lemma~\ref{lem:affine}, that is, ones such that $\<A_n, \rho \>\tends \lambda\mu$. 
Because the Choi operator of such a map is given by
\begin{equation}\begin{aligned}
  J_{\E_n} = B_n \otimes \left(\lambda \sigma' - \rho''\right) + A_n \otimes \left( \rho'' - \frac{1}{\mu} \sigma'\right),
\end{aligned}\end{equation}
the positive semidefiniteness of all involved operators ensures that $J_{\E_n} \geq 0$ and thus that the map $\E_n$ is completely positive~\cite{choi_1975}.
For any $\pi \in \FF$, the map furthermore satisfies
  \begin{equation}\begin{aligned}
    \E_n(\pi) = \left(\lambda - \frac{1}{\mu}\right) \< B_n, \pi \> \sigma' \propto \sigma' \in \FF,
  \end{aligned}\end{equation}
  which means that it is a free operation.
Crucially, it holds that
\begin{equation}\begin{aligned}
  \E_n(\rho) = \underbrace{\frac{1}{\mu}}_{\in (0,1]} \underbrace{\vphantom{\frac{1}{\mu}}\left( \lambda \mu - \< A_n, \rho \>\right)}_{\xrightarrow{n\to\infty}\, 0}\sigma' + \underbrace{\vphantom{\frac{1}{\mu}}\left(\< A_n, \rho \> - 1\right)}_{\xrightarrow{n\to\infty} \,\lambda \mu - 1} \rho''.
\end{aligned}\end{equation}
That is, in the limit $n\to\infty$, the sequence of the output operators $\E_n(\rho)$ converges (up to normalisation) to the target state $\rho'$, thus realising the desired free transformation. To complete the proof, we need to ensure that the maps $\E_n$ are valid probabilistic operations by rescaling them to be trace--non-increasing maps as $\E_n' \coloneqq \E_n / \left[ \max_{\omega \in \DD} \Tr \E_n(\omega) \right]$. The sequence of probabilistic free operations $(\E'_n)_n$ is precisely our desired protocol which achieves the transformation $\rho \transf \rho'$.

\begin{remark}Noticing that $\Tr \E_n(\rho) \tends k (\lambda \mu - 1)$, which is non-zero for any $\rho \notin \FF$, one can wonder whether there actually exists a possibility that the protocol only succeeds asymptotically, in the sense of vanishing probability of success as in Eq.~\eqref{eq:prob_asymp}. That is indeed the case, and the culprit is the renormalisation coefficient
\begin{equation}\begin{aligned}
  \max_{\omega \in \DD} \Tr \E_n(\omega) = \norm{(\lambda - k) B_n + \left(k - \frac1\mu\right) A_n}{\infty}
\end{aligned}\end{equation}
which is used to rescale the maps $\E_n$ to be trace non-increasing.
In general, we have no guarantee that this quantity is bounded, and indeed it might be the case that both $\norm{A_n}{\infty}$ and $\norm{B_n}{\infty}$ diverge to infinity as $n \to \infty$; such divergent cases are precisely when we get $\Tr \E'_n(\rho) \tends 0$ after renormalising. On the other hand, when the supremum in the dual program for $\RW(\rho)$ is achieved by some operators $A$ and $B$, then we can choose
\begin{equation}\begin{aligned}
  \E(X) \coloneqq \< B, X \> \left(\lambda \sigma' - \rho''\right) + \<A , X \> \left( \rho'' - \frac{1}{\mu} \sigma'\right)
\end{aligned}\end{equation}
as an optimal operation which satisfies $\E \in \OO$ and $\E(\rho) \propto \rho'$. This is the case, for instance, when there exists optimal sequences $(A_n)_n$ and $(B_n)_n$ which are both bounded: choosing $(n_k)_k$ as a sequence of indices such that both of the subsequences $(A_{n_k})_k$ and $(B_{n_k})_k$ converge, it suffices to take $A$ and $B$ defined as the limits of such subsequences. We will return to this problem in Lemma~\ref{lem:achiev_omega}.\end{remark}

(ii) For the case $\RW(\rho)=\infty$ and $R_\FF(\rho') \eqqcolon \lambda' <\infty$, we will find it easier to work with the reciprocal function $\RW(\rho)^{-1}$, as we have $\RW(\rho)^{-1} = 0$ and the optimal value of the dual optimisation problem is finite. Using the dual form of $\RW(\rho)^{-1}$ in Lemma~\ref{lem:affine}, there exists sequences $(A_n)_n, (B_n)_n$ such that $\< A_n, \rho \> \tends 0$, $\< B_n, \rho \> = 1 \; \forall n$, and $\< A_n , \sigma \> = \< B_n, \sigma \> \; \forall \sigma \in \FF, n \in \NN$. This allows us to construct the maps
\begin{equation}\begin{aligned}
  \E_n(X) \coloneqq \< A_n, X \> (\lambda' \sigma' - \rho') + \< B_n, X \> \rho'
\end{aligned}\end{equation}
where $\sigma' \in \FF$ is a state such that $\rho' \leq \lambda' \sigma'$, which exists by assumption. Clearly, $\E_n(\rho) \tends \rho'$ and $\E_n(\sigma) = \< B_n, \sigma \> \lambda' \sigma'$ for any $\sigma \in \FF$. Renormalising the operations to be trace non-increasing, we get the desired transformation.

As before, when there exists a choice of $A$, $B$ optimal for $\RW(\rho)^{-1}$, then it suffices to take
\begin{equation}\begin{aligned}
  \E(X) \coloneqq \< A, X \> (\lambda' \sigma' - \rho') + \< B, X \> \rho'.
\end{aligned}\end{equation}

(iii) The case of $R_\FF(\rho) = \infty$ implies that $\rho$ must have a non-trivial overlap with the subspace orthogonal to all states in $\FF$, i.e., that there exists a vector $\ket\psi$ such that $\< \psi, \rho \> > 0$ but $\< \psi, \sigma \> = 0$ for all $\sigma \in \FF$. Were this not the case, by convexity of $\FF$ we could take a free state whose support is the union of the supports of all free states, and the support of $\rho$ would have to be contained in this subspace, contradicting the assumption that $R_\FF(\rho) = \infty$. We then construct the map
\begin{equation}\begin{aligned}
  \E(X) \coloneqq \< \psi, X \> \rho',
\end{aligned}\end{equation}
which achieves the desired transformation with probability $\< \psi, \rho \>$. The fact that $0 \leq \psi \leq \id$ implies that $\< \psi, \rho \> \in [0,1]$ for any state $\rho$, and together with the fact that $\E(\sigma) = 0$ for any $\sigma \in \FF$, we indeed have that $\E \in \OO$.
\end{proof}

\subsubsection*{General resource theories}

Let us now consider the case of non-affine resource theories. Here, the issue we encounter is that the simplified form of $\RW$ in Lemma~\ref{lem:affine} is no longer applicable, preventing us from following the same approach. To circumvent this, we will employ the free projective robustness $\RW^\FF$.


\begin{boxed}{white}
{%
\begin{theorem}\label{thm:nogo_sufficient_fulldim_full}
In any convex resource theory, there exists a resource--non-generating probabilistic transformation $\rho \transf \rho'$ if any of the following conditions is satisfied:
\begin{enumerate}[(i)]
\item $\infty > \RW(\rho) \geq \RW^\FF(\rho')$,
\item $\infty = \RW(\rho)$ and $R^\FF_\F(\rho') < \infty$,
\item $R_\F(\rho) = \infty$.
\end{enumerate}
\end{theorem}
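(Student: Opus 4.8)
The plan is to follow the proof of Theorem~\ref{thm:nogo_affine_full} closely, replacing the affine simplification of Lemma~\ref{lem:affine} — which is no longer available — by the general dual program~\eqref{eq:thm_convex_progB} of $\RW(\rho)$ for the source state, and, crucially, by using the primal decomposition of the \emph{free} projective robustness $\RW^\FF(\rho')$ rather than of $\RW(\rho')$ for the target. In cases (i) and (ii) one has $R^\FF_\F(\rho') < \infty$ (this is the hypothesis in (ii); in (i) it follows from $R^\FF_\F(\rho') \le \RW^\FF(\rho') \le \RW(\rho) < \infty$ via the bounds in Corollary~\ref{cor:projf_properties}), so there are $\sigma', \tau' \in \FF$ and $\lambda' \ge 1$ with $\rho' = \lambda'\sigma' - (\lambda'-1)\tau'$; moreover $\lambda'\sigma' - \rho' = (\lambda'-1)\tau' \in \cone(\FF) \subseteq \HH_+$ gives $\rho' \le \lambda'\sigma'$ also in the positive semidefinite order, and in case (i) Corollary~\ref{cor:projf_properties} additionally provides $\sigma' \le \mu'\rho'$ with $\lambda'\mu' = \RW^\FF(\rho')$. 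It is precisely this way of writing $-\rho'$ through free states that will make the map constructed below resource non-generating without any affine assumption.

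For (i), assume $\RW(\rho) > 1$ (otherwise both states are free and there is nothing to prove). Take $\sigma \in \FF$ primal-optimal for $\RW(\rho)$, so $\rho \le \lambda\sigma$ and $\sigma \le \mu\rho$ with $\lambda\mu = \RW(\rho)$, and, exactly as in the proof of Theorem~\ref{thm:nogo_affine_full}, use $\lambda\mu \ge \lambda'\mu'$ to choose $k \in [\mu'/\mu,\ \lambda/\lambda']$ so that $\rho'' \coloneqq k\rho'$ obeys $\lambda\sigma' - \rho'' \ge 0$ and $\rho'' - \tfrac1\mu\sigma' \ge 0$. Let $(A_n)_n, (B_n)_n$ be an optimal sequence for the dual program~\eqref{eq:thm_convex_progB} of $\RW(\rho)$, i.e.\ $A_n, B_n \ge 0$, $B_n - A_n \in \cone(\FF)\*$, $\<B_n,\rho\> = 1$, $\<A_n,\rho\> \to \lambda\mu$, and set
\[ \E_n(X) \coloneqq \<B_n,X\>\,(\lambda\sigma' - \rho'') + \<A_n,X\>\,(\rho'' - \tfrac1\mu\sigma'). \]
The Choi operator $B_n \otimes (\lambda\sigma' - \rho'') + A_n \otimes (\rho'' - \tfrac1\mu\sigma')$ is positive semidefinite, so $\E_n$ is completely positive, and the computation from Theorem~\ref{thm:nogo_affine_full} gives $\E_n(\rho) = \big(\lambda - \tfrac1\mu\<A_n,\rho\>\big)\sigma' + \big(\<A_n,\rho\> - 1\big)\rho'' \to (\lambda\mu - 1)\rho'' \propto \rho'$. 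Dividing each $\E_n$ by $\max_{\omega\in\DD}\Tr\E_n(\omega)$ makes it trace non-increasing, and the resulting sequence realises $\rho \transf \rho'$; as in the remark after Theorem~\ref{thm:nogo_affine_full}, this renormalisation may drive the success probability to zero unless the dual optimum is attained, in which case a single $\E \in \OO$ with $\E(\rho) \propto \rho'$ works.

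The step I expect to be the main obstacle — and the only genuine departure from the affine argument — is checking that $\E_n$ is resource non-generating. For $\pi \in \FF$, write $a \coloneqq \<A_n,\pi\> \le b \coloneqq \<B_n,\pi\>$; here the inequality, rather than the equality available in the affine case, is exactly the content of $B_n - A_n \in \cone(\FF)\*$. A short computation gives $\E_n(\pi) = (b\lambda - \tfrac a\mu)\sigma' + (a-b)\rho''$, and substituting $\rho'' = k(\lambda'\sigma' - (\lambda'-1)\tau')$ rewrites this as $\big[b\lambda - \tfrac a\mu - (b-a)k\lambda'\big]\sigma' + (b-a)k(\lambda'-1)\tau'$. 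The coefficient of $\tau'$ is nonnegative because $b \ge a$ and $\lambda' \ge 1$; and, since $k\lambda' \le \lambda$, the coefficient of $\sigma'$ is at least $b\lambda - \tfrac a\mu - (b-a)\lambda = a\big(\lambda - \tfrac1\mu\big) \ge 0$, the last inequality being $\lambda\mu = \RW(\rho) \ge 1$. Hence $\E_n(\pi)$ is a nonnegative combination of $\sigma', \tau' \in \FF$, so $\E_n(\pi) \in \cone(\FF)$ by convexity of $\FF$, which is what is needed. Combined with the necessary condition of Theorem~\ref{thm:nogo_monotonicity} and the inequality $\RW^\FF \ge \RW$, part (i) thus determines $\rho \transf \rho'$ up to the gap between $\RW(\rho')$ and $\RW^\FF(\rho')$.

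For (ii) I would first record, by repeating the Lagrange-duality computation that produced~\eqref{eq:thm_convex_progB}, that $\RW(\rho)^{-1} = \inf\{\<A,\rho\> : \<B,\rho\> = 1,\ A - B \in \cone(\FF)\*,\ A,B \ge 0\}$; since $\RW(\rho) = \infty$ this infimum is $0$, yielding $A_n, B_n \ge 0$ with $\<A_n,\rho\> \to 0$, $\<B_n,\rho\> = 1$ and $\<A_n,\sigma\> \ge \<B_n,\sigma\>$ for all $\sigma \in \FF$. With $\rho' = \lambda'\sigma' - (\lambda'-1)\tau'$ coming from $R^\FF_\F(\rho') = \lambda' < \infty$, the maps $\E_n(X) \coloneqq \<A_n,X\>(\lambda'-1)\tau' + \<B_n,X\>\rho'$ are completely positive (the Choi operator is positive, as $\tau', \rho' \ge 0$), satisfy $\E_n(\rho) = \<A_n,\rho\>(\lambda'-1)\tau' + \rho' \to \rho'$, and on $\sigma \in \FF$ give $\E_n(\sigma) = [\<A_n,\sigma\> - \<B_n,\sigma\>](\lambda'-1)\tau' + \<B_n,\sigma\>\,\lambda'\sigma' \in \cone(\FF)$ since both coefficients are nonnegative; renormalising gives $\rho \transf \rho'$. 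Finally, (iii) is verbatim the affine argument: $R_\F(\rho) = \infty$ forces, by convexity of $\FF$, a vector $\ket\psi$ with $\<\psi,\rho\> > 0$ and $\<\psi,\sigma\> = 0$ for all $\sigma \in \FF$, and $\E(X) \coloneqq \<\psi,X\>\rho'$ is then a free operation realising the transformation with probability $\<\psi,\rho\> > 0$.
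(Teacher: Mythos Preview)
Your proof is correct and follows essentially the same approach as the paper. The only difference is stylistic: where the paper verifies $\E_n(\pi)\in\cone(\FF)$ abstractly via the $\geq_\FF$ ordering (replacing the coefficient $\<B_n,\pi\>$ of $\lambda\sigma'-\rho''\in\cone(\FF)$ by the smaller $\<A_n,\pi\>$), you expand $\rho'' = k\lambda'\sigma' - k(\lambda'-1)\tau'$ explicitly and check that the resulting coefficients of $\sigma',\tau'\in\FF$ are nonnegative --- the two computations are equivalent.
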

}
\end{boxed}

Combining this with the monotonicity result of Theorem~\ref{thm:nogo_monotonicity}, as long as $R_\FF^\FF(\rho') < \infty$ (which is true for all states in any full-dimensional resource theory), we have the following:
\begin{equation}\begin{aligned}
  \rho \transf\rho' &\;\Rightarrow\; \RW(\rho) \geq \RW(\rho'),\; \RW^\FF(\rho) \geq \RW^\FF(\rho'),\\
 \rho \transf\rho' &\;\Leftarrow\; \RW(\rho) \geq \RW^\FF(\rho').
\end{aligned}\end{equation}
We see here that this reduces to an `if and only if' statement only when $\RW(\rho') = \RW^\FF(\rho')$. Although this might not always be true, we will shortly see that states obeying this requirement can be defined in relevant cases (such as the problem of resource distillation), allowing us to use this result as a necessary and sufficient condition for resource conversion also in non-affine theories.

\begin{proof}
(i) The basic idea is very similar to Theorem~\ref{thm:nogo_affine_full}. We construct the maps
\begin{equation}\begin{aligned}\label{eq:measure_prep_2}
  \E_n(X) \coloneqq \< B_n, X \> \left(\lambda \sigma' - \rho''\right) + \<A_n , X \> \left( \rho'' - \frac{1}{\mu} \sigma'\right),
\end{aligned}\end{equation}
where $(A_n)_n, (B_n)_n$ are optimal sequences of operators for $\RW(\rho)$ as before, but now $\sigma'$ is a state such that $\rho'' \leq_\FF \lambda \sigma'$ and $\sigma' \leq \mu \rho''$, with $\rho''$ being the state $\rho'$ which has been rescaled if needed. The crucial difference now is that $\lambda \sigma' - \rho'' \in \cone(\FF)$. What this means is that, for any $\sigma \in \FF$,
\begin{equation}\begin{aligned}
  \E_n(\sigma) &= \< B_n, \sigma \> \left(\lambda \sigma' - \rho''\right) + \<A_n, \sigma \> \left( \rho'' - \frac{1}{\mu} \sigma'\right)\\
  &\geq_\FF \<A_n, \sigma \> \left(\lambda \sigma' - \rho''\right) + \<A_n, \sigma \> \left( \rho'' - \frac{1}{\mu} \sigma'\right)\\
  &= \left(\lambda - \frac{1}{\mu}\right) \< A_n, \sigma \> \sigma' \,\in \cone(\FF)
\end{aligned}\end{equation}
which, by the convexity of $\cone(\FF)$, means that $\E_n(\sigma) \in \cone(\FF)$ and hence that $\E_n \in \OO$. 
Here, we used that $\<B_n, \sigma\> \geq \< A_n,\sigma\>$ for any $\sigma\in\FF$, which is a property obeyed by any feasible dual solution to $\RW(\rho)$ (cf.\ Theorem~\ref{thm:proj_properties}). Clearly, we also have
\begin{equation}\begin{aligned}
  \E_n(\rho) = \frac{1}{\mu} \left( \lambda \mu - \< A_n, \rho \>\right)\sigma' + \left(\< A_n, \rho \> - 1\right) \rho'' \, \tends (\lambda \mu - 1 ) \rho'' \propto \rho'
\end{aligned}\end{equation}
as before. The operations $(\E_n)_n$ therefore realise the desired transformation, and with a suitable rescaling they form a valid probabilistic transformation.

(ii) Taking the dual form of the reciprocal function $\RW(\rho)^{-1}$, we can write
\begin{equation}\begin{aligned}
  \RW(\rho)^{-1} = \inf \lset \< A, \rho \> \bar \< B, \rho \> = 1 ,\; \< A - B, \sigma \> \geq 0 \; \forall \sigma \in \FF,\; A, B \geq 0 \rset.
\end{aligned}\end{equation}
Since $\RW(\rho)^{-1} = 0$ by assumption, we can take the optimal choice of such operators satisfying $\< A_n, \rho \> \tends 0$ and construct the maps
\begin{equation}\begin{aligned}
  \E_n(X) \coloneqq \< A_n, X \> (\lambda' \sigma' - \rho') + \< B_n, X \> \rho'
\end{aligned}\end{equation}
where $\sigma' \in \FF$ is a state such that $\rho' \leq_\FF \lambda' \sigma'$ with $\lambda' \coloneqq R_{\FF}^\FF(\rho')$. We can see that $\E_n(\rho) \tends \rho'$ and
\begin{equation}\begin{aligned}
  \E_n(\sigma) &\geq_\FF \< B_n, X \> (\lambda' \sigma' - \rho') + \< B_n, X \> \rho'\\
  &= \< B_n, X \> \lambda' \sigma'\, \in \cone(\FF)
\end{aligned}\end{equation}
for any $\sigma \in \FF$, as was to be shown.

(iii) Exactly the same as in Theorem~\ref{thm:nogo_affine_full}.
\end{proof}

We now give sufficient conditions which strengthen the result to show that not only can the transformation $\rho \transf \rho'$ be realised probabilistically, it is actually achieved with a non-zero probability of success (that is, no asymptotic protocols are necessary).

\begin{boxed}{white}
\begin{lemma}\label{lem:achiev_omega}
Let $\rho,\rho'$ be two states such that $\rho \transf \rho'$ is possible probabilistically (as per Theorem~\ref{thm:nogo_affine_full} or~\ref{thm:nogo_sufficient_fulldim_full}). If additionally either of the following is true:
\begin{enumerate}[(a)]
 \item $\rho$ is full rank and $\RW(\rho) < \infty$ (i.e.\ there exists a full-rank free state), or
 \item there is no free state $\sigma$ such that $\supp \sigma \subseteq \supp \rho$ (i.e.\ $W_\FF(\rho) = 0$), 
\end{enumerate}
 then there exists $\E \in \OO$ such that $\E(\rho) = p \rho'$ for some $p>0$.
\end{lemma}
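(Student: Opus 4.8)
The plan is to build on the observation --- flagged in the Remark following the proof of Theorem~\ref{thm:nogo_affine_full} --- that the sequences of maps used in the proofs of Theorems~\ref{thm:nogo_affine_full} and~\ref{thm:nogo_sufficient_fulldim_full} collapse to a single free operation $\E$ with $\E(\rho)\propto\rho'$ precisely when the underlying convex program attains its optimum: the dual of $\RW(\rho)$ in case~(i), the dual of $\RW(\rho)^{-1}$ in case~(ii), and unconditionally in case~(iii), whose proof already exhibits a one-shot map. Rescaling such an $\E$ to a trace--non-increasing map then yields $\E(\rho)=p\rho'$ with $p>0$. So it suffices to (1) identify which of the cases~(i)--(iii) is responsible for $\rho\transf\rho'$, and (2) verify attainment of the corresponding optimum under hypothesis~(a) or~(b).

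Under~(a), a full-rank $\rho$ together with the implied full-rank free state forces $\RW(\rho)<\infty$ and $R_\FF(\rho)<\infty$, so $\rho\transf\rho'$ must come from case~(i). Here I would take any near-optimal dual-feasible sequence $(A_n,B_n)$ for $\RW(\rho)$ as in Theorem~\ref{thm:proj_properties}\eqref{itm:dual} and use that, since $\rho$ is full rank, $1=\<B_n,\rho\>\geq\lambda_{\min}(\rho)\norm{B_n}{\infty}$ and $\<A_n,\rho\>\geq\lambda_{\min}(\rho)\norm{A_n}{\infty}$, so both sequences are bounded. A Bolzano--Weierstrass argument then produces a convergent subsequence whose limit is still dual-feasible (closedness of $\HH_+$ and $\cone(\FF)\*$) and optimal (continuity of $\<\cdot,\rho\>$); the case~(i) construction applied to this limit delivers the desired $\E$.

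Under~(b), $W_\FF(\rho)=0$ means that no free state is supported within $\supp\rho$; in particular $\rho\notin\FF$, and by Theorem~\ref{thm:proj_properties}\eqref{itm:supp} this gives $\RW(\rho)=\infty$, so only cases~(ii) and~(iii) remain. Case~(iii) already supplies a one-shot map $\E(X)=\<\psi,X\>\rho'$ in its proof, so assume case~(ii); I must then show $\RW(\rho)^{-1}=0$ is attained. Write $P$ for the projector onto $\ker\rho$, $\Pi=\id-P$, and $c\coloneqq\min_{\sigma\in\FF}\Tr(P\sigma)$, which exists by compactness of $\FF$, satisfies $c\leq 1$, and is positive exactly because $W_\FF(\rho)=0$. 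In a general convex theory I would feed the explicit pair $A=(c^{-1}-1)P$, $B=\Pi$ into the dual of $\RW(\rho)^{-1}$ from the proof of Theorem~\ref{thm:nogo_sufficient_fulldim_full}(ii): this is feasible since $\<B,\rho\>=1$ and $\<A-B,\sigma\>=c^{-1}\Tr(P\sigma)-1\geq 0$ for all $\sigma\in\FF$, and it achieves $\<A,\rho\>=0$. In an affine theory the dual constraint is the equality $\<A-B,\sigma\>=0$ (Lemma~\ref{lem:affine}), so this choice no longer works; instead I would exploit the fact that in any affine theory $\sspan(\FF)\cap\DD=\FF$, whence $W_\FF(\rho)=0$ means that $\sspan(\FF)$ contains no nonzero positive semidefinite operator supported on $\supp\rho$. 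By self-duality of the positive semidefinite cone this guarantees a $C\in\sspan(\FF)^\perp$ whose compression $\Pi C\Pi$ is positive definite on $\supp\rho$; after normalising $\<C,\rho\>=1$, the pair $A=tP$, $B=tP+C$ is dual-feasible with $\<A,\rho\>=0$ once $t>0$ is large enough to make $B\geq 0$ (a Schur-complement estimate, valid since $\Pi C\Pi$ is invertible on $\supp\rho$). In every case the optimum is attained and the case~(ii) construction produces~$\E$.

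I expect the attainment step in case~(b)-(ii) to be the main obstacle. The general-convex part is dispatched by the explicit feasible pair above, but the affine part is subtler: one has to translate the purely support-theoretic hypothesis $W_\FF(\rho)=0$ into the positive-definiteness of a compression of some $C\in\sspan(\FF)^\perp$, which relies on both the affine-slice identity $\sspan(\FF)\cap\DD=\FF$ and the self-duality of $\HH_+$ before the achieving operators can be written down. Once attainment is secured, verifying that the resulting $\E$ is completely positive, resource non-generating, and satisfies $\E(\rho)\propto\rho'$ exactly (not merely approximately) is a routine repetition of the computations already carried out in the proofs of Theorems~\ref{thm:nogo_affine_full} and~\ref{thm:nogo_sufficient_fulldim_full}.
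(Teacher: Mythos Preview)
Your approach matches the paper's almost exactly. Part~(a) is identical: bound $\norm{A_n}{\infty}$ and $\norm{B_n}{\infty}$ via $\lambda_{\min}(\rho)^{-1}$ and pass to a convergent subsequence. In part~(b), your explicit pair $A=(c^{-1}-1)P$, $B=\Pi$ is literally the paper's $A=\frac{\zeta}{1-\zeta}(\id-\Pi_\rho)$, $B=\Pi_\rho$ under the substitution $c=1-\zeta$.

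Where you go further is in treating the affine sub-case of~(b) separately. The paper only exhibits an achieving pair for the \emph{inequality}-constrained dual of $\RW(\rho)^{-1}$, which feeds into the map of Theorem~\ref{thm:nogo_sufficient_fulldim_full}(ii); but that map needs $R_\FF^\FF(\rho')<\infty$, which fails for non-free $\rho'$ in affine theories. You correctly observe that Theorem~\ref{thm:nogo_affine_full}(ii) instead requires an achieving pair for the \emph{equality}-constrained dual of Lemma~\ref{lem:affine}, and you supply one: from $\sspan(\FF)\cap\DD=\FF$ and $W_\FF(\rho)=0$ you deduce that $\sspan(\FF)$ contains no nonzero positive operator supported on $\supp\rho$, separate to obtain $C\in\sspan(\FF)^\perp$ with $\Pi C\Pi$ positive definite on $\supp\rho$, and set $(A,B)=(tP,\,tP+C)$ for large $t$. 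This is an extra step the paper does not carry out explicitly; your treatment here is more careful than the paper's, not a different method.
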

\end{boxed}
Note in particular that condition (b) is satisfied for any resourceful pure state.
\begin{proof}
As discussed in the proof of Theorem~\ref{thm:nogo_affine_full}, the existence of such a map $\E \in \OO$ can be ensured by showing that the dual optimisation problems for $\RW(\rho)$ or $\RW(\rho)^{-1}$ are achieved, in the sense that there exist feasible $A, B$ such that $\RW(\rho)$ (or $\RW(\rho)^{-1}$) equals $\< A, \rho \>$.

(a) When $\rho$ is a full-rank state with $\RW(\rho) < \infty$, then any $A \geq 0$ feasible for $\RW(\rho)$ satisfies $\<A, \omega\> \leq \Rmax(\omega\|\rho) \<A,\rho\> \leq \lambda_{\min}(\rho)^{-1} \RW(\rho)$ for any state $\omega$ by Lemma~\ref{lem:weight_lambda}. This means that $\norm{A}{\infty} = \max_{\omega \in \DD} \< A, \omega\>$ is upper bounded by a constant independent of $A$. Similarly, the condition $\<B, \rho \> =1$ ensures that $\norm{B}{\infty} \leq \lambda_{\min}(\rho)^{-1}$. Any optimal sequences $(A_n)_n$, $(B_n)_n$ are therefore bounded, and taking suitable convergent subsequences ensures that the supremum in the dual form of $\RW(\rho)$ is achieved.

(b) Note first that $W_\FF(\rho) = 0$ implies $\RW(\rho) = \infty$, so we consider the quantity $\RW(\rho)^{-1}$. In this case, we have that $\<\Pi_\rho, \rho \> = 1$ but $\<\Pi_\rho, \sigma\> < 1 \; \forall \sigma \in \FF$, where  $\Pi_\rho$ denotes the projection onto $\supp \rho$. Let $\zeta = \max_{\sigma \in \FF} \< \Pi_{\rho}, \sigma \>$ and define
\begin{equation}\begin{aligned}
  A \coloneqq \frac{\zeta}{1-\zeta} (\id - \Pi_\rho), \quad B \coloneqq \Pi_\rho.
\end{aligned}\end{equation}
We see that $\<A, \rho \> = 0$, $\<B, \rho\> = 1$, and $\<A - B, \sigma \> \geq 0 \; \forall \sigma \in \FF$. This means that $A$ and $B$ are feasible solutions for $\RW(\rho)^{-1}$ which achieve the optimal value $\RW(\rho)^{-1} = 0$.
\end{proof}

A general consequence of the results in Theorems~\ref{thm:nogo_affine_full} and \ref{thm:nogo_sufficient_fulldim_full} is that states with infinite projective robustness can always be interconverted probabilistically; that is,
\begin{equation}\begin{aligned}
  \RW(\rho) = \RW(\rho') = \infty \;\Rightarrow\; \rho \transf \rho',\;\, \rho' \transf \rho.
\end{aligned}\end{equation}
This includes, in particular, all pure resourceful states, and thus implies that no such state can be isolated: any pure state can be obtained from another resourceful pure state with some non-zero probability under resource--non-generating transformations. 
This contrasts with some more restricted theories, such as entanglement manipulated by LOCC: in the bipartite setting, pure states can be interconverted only if they share the same Schmidt rank~\cite{bennett_1996-1,lo_2001}, and in the multipartite setting, there exist inequivalent types of entanglement that cannot be interconverted~\cite{dur_2000}. We thus see that using the operations $\OO$ allows one to overcome such limitations in the regime of pure-state transformations. 
In the context of entanglement theory, similar results about the convertibility of pure states have been shown previously through other approaches~\cite{ishizaka_2004-1,ishizaka_2005,contreras-tejada_2019}; our approach extend this to general resource theories and to states beyond pure.
Here we remark the rather interesting fact that, when \emph{deterministic} transformations are concerned, the operational capabilities provided by the resource--non-generation operations $\OO$ are not always significantly larger than those of smaller, physically motivated operations --- for example, in the concentration of entanglement from pure states, the maximal operations $\OO$ do not even outperform one-way LOCC~\cite{regula_2019-2}. 

{%
\renewcommand\addcontentsline[3]{}
\subsection{Projective robustness in channel discrimination}
}

Another application of the projective robustness $\RW$ can be obtained by connecting it with the advantage that the given resource provides in a class of channel discrimination tasks. This is essentially an extension of similar properties that were recently shown to hold for the robustness~\cite{takagi_2019-2} and weight~\cite{uola_2020-1,ducuara_2020} measures. Since this is not related to our main discussion concerned with the uses of $\RW$ in constraining resource manipulation, we defer the details to Appendix~\ref{app:proj_chandisc}.


\mbox{}\\
\mbox{}

\section{Probabilistic resource distillation}\label{sec:prob_dist_RW}

We will now show how our results can be employed to establish strong bounds on the performance of resource distillation tasks. In this setting, we aim to transform a given noisy state $\rho$ into some pure resource state $\phi$, often taken to be a maximally resourceful state or a particularly useful state that is necessary in the practical utilisation of a given resource. The representative example of such a task is entanglement distillation, whose probabilistic characterisation attracted significant attention~\cite{lo_2001,linden_1998,kent_1998,horodecki_1999-1,vidal_1999-1,horodecki_2006-1}.

However, we already know from Theorem~\ref{thm:nogo_monotonicity} that transformations $\rho \transf \phi$ are never possible when $\rho$ is a generic (full-rank) noisy state. This can be circumvented by allowing a small error in the conversion, which means that we will instead aim to achieve a transformation $\rho \transf \tau$, where $\tau$ is a state such that $F(\tau, \phi) \geq 1-\ve$ with $F(\rho,\sigma)\coloneqq \|\sqrt{\rho}\sqrt{\vphantom{\rho}\sigma}\|_1^2$ being the fidelity.  We then endeavour to understand and bound the error $\ve$ necessarily incurred in such protocols.

Let us begin by recalling some bounds that have been previously established in this context for general resource theories. The robustness $R_\FF$ can provide both one-shot~\cite{regula_2020} and many-copy bounds~\cite{regula_2021-1} for distillation, with the latter particularly applicable to asymptotic transformation rates. When it comes to bounds on resource overheads --- that is, the number of copies of a state required to realise a desired distillation protocol --- the resource weight $W_\FF$ was found to provide strong constraints~\cite{regula_2021-1,fang_2022}. However, such results were mostly concerned with deterministic transformations, and probabilistic conversion protocols significantly complicate the characterisation of distillation errors~\cite{regula_2021-1,fang_2022}. We will see that the projective robustness $\Omega_\FF$, which can be thought of as combining the properties of $R_\FF$ and $W_\FF$, will allow us to establish broadly applicable bounds that overcome the limitations of previous approaches.

Another approach to constraining the distillation of general resources, including in a probabilistic setting, was introduced in Ref.~\cite{fang_2020} in the form of the \emph{eigenvalue bound}. This restriction, applicable to the distillation of pure states from full-rank input states, was claimed to establish a trade-off relation between the errors and the probability of success of transformations --- although there are issues with this latter claim (as will be shown in Section~\ref{sec:eigenvalue}), the result can nevertheless be used to lower bound transformation errors in general probabilistic protocols as considered here. However, we will see in several examples that it typically does not yield good restrictions: the bound obtained from the projective robustness $\RW$ can never be worse than the eigenvalue bound, and in most cases it performs significantly better.


\subsection{Improved bounds on probabilistic distillation errors and overheads}\label{app:proj_distillation}

We show that the projective robustness can be used to establish a tight bound on distillation error in general probabilistic transformations, giving a non-trivial restriction for all full-rank states as well as rank-deficient states that satisfy $\RW(\rho) < \infty$. The resourcefulness of the target state $\phi$ will be quantified using the overlap $F_\FF(\phi) \coloneqq \max_{\sigma \in \FF} \braket{\phi|\sigma|\phi}$, which effectively determines how difficult a given state is to distill. Note that $F_\FF$ is a reverse monotone, in the sense that it gets smaller the more resourceful a state is.

\begin{boxed}{white}
\begin{theorem}[\cite{regula_2022}]\label{thm:prob_error_full}
If there exists a free transformation $\rho\!\transf\!\tau$ such that $\tau$ is a state satisfying $F(\tau,\phi) \geq 1-\ve$ for some resourceful pure state $\phi$, then
\begin{equation}\begin{aligned}
  \ve \geq \left( \frac{F_\FF(\phi)}{1-F_\FF(\phi)} \,\RW(\rho)\, + 1 \right)^{-1}
\end{aligned}\end{equation}
where $F_\FF(\phi) = \max_{\sigma \in \FF} \< \phi, \sigma \>$. The statement is valid for all values of $\ve \in [0,1]$ and all $F_\FF(\phi) \in [0,1)$ as long as one understands $0 \infty \coloneqq \infty$.
\end{theorem}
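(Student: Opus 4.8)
The plan is to use the monotonicity of $\RW$ under probabilistic free operations (Theorem~\ref{thm:nogo_monotonicity}) and then lower-bound $\RW(\tau)$ whenever $\tau$ is close to a resourceful pure state $\phi$. Since $\rho \transf \tau$ implies $\RW(\rho) \geq \RW(\tau)$, it suffices to show that $F(\tau,\phi) \geq 1-\ve$ forces
\begin{equation*}
  \RW(\tau) \;\geq\; \frac{1-F_\FF(\phi)}{F_\FF(\phi)}\left(\frac{1}{\ve} - 1\right),
\end{equation*}
which rearranges to the claimed bound on $\ve$. (The edge cases --- $\ve = 0$ giving $\RW(\tau)=\infty$ since $\tau=\phi$ is a resourceful pure state, and $F_\FF(\phi)=0$ making the right-hand side $0$ trivially --- should be checked separately but are immediate with the stated conventions.)

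To bound $\RW(\tau)$ from below I would use the dual formulation \eqref{eq:thm_convex_progC}: $\RW(\tau) = \sup\left\{ \frac{\<A,\tau\>}{\<B,\tau\>} \;\middle|\; \frac{\<A,\sigma\>}{\<B,\sigma\>} \leq 1 \;\forall \sigma\in\FF,\; A,B\in\HH_+\right\}$. The natural choice of feasible point is $A = \proj{\phi}$ and $B = \id$. Feasibility requires $\<\proj\phi,\sigma\> \leq \<\id,\sigma\> = 1$ for all $\sigma\in\FF$, which holds trivially. This gives $\RW(\tau) \geq \<\phi,\tau\> = \braket{\phi|\tau|\phi}$, but this is too weak --- it doesn't see $F_\FF(\phi)$. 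Instead I would take $B = \proj\phi + c(\id - \proj\phi)$ for a suitable constant $c > 0$, or more cleanly $A = \proj\phi$ and $B$ chosen so that the free-state constraint is saturated by the optimal $\sigma$: set $B = \proj\phi + \frac{F_\FF(\phi)}{1-F_\FF(\phi)}(\id-\proj\phi)$. Then for any $\sigma\in\FF$, writing $f_\sigma = \braket{\phi|\sigma|\phi} \leq F_\FF(\phi)$, one has $\<B,\sigma\> = f_\sigma + \frac{F_\FF(\phi)}{1-F_\FF(\phi)}(1-f_\sigma) \geq f_\sigma + \frac{f_\sigma}{1-F_\FF(\phi)} (\text{something}) \geq f_\sigma = \<A,\sigma\>$ after checking the algebra, so the constraint holds; and $\<B,\tau\> = \braket{\phi|\tau|\phi} + \frac{F_\FF(\phi)}{1-F_\FF(\phi)}(1-\braket{\phi|\tau|\phi})$. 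Using $\braket{\phi|\tau|\phi} \geq F(\tau,\phi) \geq 1-\ve$ (fidelity with a pure state equals the overlap), the ratio $\frac{\<A,\tau\>}{\<B,\tau\>} = \left(1 + \frac{F_\FF(\phi)}{1-F_\FF(\phi)}\cdot\frac{1-\braket{\phi|\tau|\phi}}{\braket{\phi|\tau|\phi}}\right)^{-1} \geq \left(1 + \frac{F_\FF(\phi)}{1-F_\FF(\phi)}\cdot\frac{\ve}{1-\ve}\right)^{-1}$, since $x\mapsto \frac{1-x}{x}$ is decreasing. Combining with $\RW(\rho)\geq\RW(\tau)$ and inverting yields the theorem.

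The main obstacle is selecting the dual feasible pair $(A,B)$ that is simultaneously tight enough to produce the factor $\frac{F_\FF(\phi)}{1-F_\FF(\phi)}$ and genuinely feasible for \emph{all} $\sigma\in\FF$; I expect the verification $\<A,\sigma\>\leq\<B,\sigma\>$ to hinge on the monotonicity of an affine-linear function of $f_\sigma$ over $f_\sigma\in[0,F_\FF(\phi)]$, so the worst case is $f_\sigma = F_\FF(\phi)$ and that is exactly where equality is designed to hold. A secondary point to handle carefully is the relation between fidelity and overlap: for a pure state $\phi$, $F(\tau,\phi) = \braket{\phi|\tau|\phi}$ exactly, so no inequality is lost there. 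Everything else --- lower semicontinuity is already baked into Theorem~\ref{thm:nogo_monotonicity}, and the conventions $0\cdot\infty\coloneqq\infty$ dispatch the degenerate parameter values --- is routine.
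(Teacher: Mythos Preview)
Your overall strategy --- invoke monotonicity of $\RW$ and then lower-bound $\RW(\tau)$ via a dual feasible pair $(A,B)$ --- is sound and is exactly the route sketched in the paper's Remark following the proof. However, your specific choice of $(A,B)$ is fatal: with $A = \proj\phi$ and $B = \proj\phi + \tfrac{F_\FF(\phi)}{1-F_\FF(\phi)}(\id-\proj\phi)$ you have $B - A = \tfrac{F_\FF(\phi)}{1-F_\FF(\phi)}(\id-\proj\phi) \geq 0$, so $\<A,\omega\> \leq \<B,\omega\>$ for \emph{every} state $\omega$, not just free ones. The ratio $\<A,\tau\>/\<B,\tau\>$ is then always at most $1$, and the lower bound you extract on $\RW(\tau)$ is trivial (since $\RW \geq 1$ for all states). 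Your final sentence ``inverting yields the theorem'' therefore does not go through: what you have actually shown is $\RW(\rho) \geq (\text{something} \leq 1)$, which carries no information.

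The fix is simply to drop the $\proj\phi$ term from $B$: take $A = \proj\phi$ and $B = \tfrac{F_\FF(\phi)}{1-F_\FF(\phi)}(\id - \proj\phi)$. Feasibility $\<A,\sigma\> \leq \<B,\sigma\>$ for $\sigma \in \FF$ becomes $f_\sigma(1-F_\FF(\phi)) \leq F_\FF(\phi)(1-f_\sigma)$, i.e.\ $f_\sigma \leq F_\FF(\phi)$, which is exactly the definition of $F_\FF(\phi)$. The objective is now $\<A,\tau\>/\<B,\tau\> = \tfrac{x(1-F_\FF(\phi))}{F_\FF(\phi)(1-x)} \geq \tfrac{(1-\ve)(1-F_\FF(\phi))}{\ve\, F_\FF(\phi)}$, which is your stated target and which does rearrange to the theorem. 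The paper's main proof takes the primal route instead (use an optimal $\sigma$ with $\tau \leq \lambda\sigma$, $\sigma \leq \mu\tau$ and bound $\lambda$, $\mu$ separately via $\phi$ and $\id - \phi$), but the Remark records precisely this corrected dual pair up to scaling.
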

\end{boxed}

The result can be viewed either as a restriction on achievable distillation fidelity $1-\ve$ for a fixed target state $\phi$, or as a constraint on how much of the given resource (quantified by $F_\FF(\phi)$) can be distilled from $\rho$ up to given accuracy $\ve$.

We note that such thresholds were studied in the theory of quantum entanglement~\cite{kent_1998,horodecki_1999-1}, but even in that case no general quantitative bound was previously known.

\begin{proof}
Assume first that $\ve > 0$ and $F_\FF(\phi) > 0$. The stated relation is trivial when $\RW(\rho) = \infty$, so assume otherwise. Then also $\RW(\tau) < \infty$ by Theorem~\ref{thm:nogo_monotonicity}, so let $\sigma$ be a free state such that $\tau \leq \lambda \sigma$ and $\sigma \leq \mu \tau$ with $\RW(\tau) = \lambda\mu$. We then get
\begin{equation}\begin{aligned}\label{eq:lambdamu_lowerbounds}
  \< \tau, \phi \> &\leq \lambda \< \sigma, \phi \> \leq  \lambda F_\FF(\phi),\\
  1-F_\FF(\phi) &\leq \< \sigma, \id - \phi \> \leq \mu \< \tau, \id - \phi \>.
\end{aligned}\end{equation}
Using the assumption $\< \tau, \phi \> \geq 1-\ve$ and multiplying the above inequalities together gives
\begin{equation}\begin{aligned}\label{eq:prob_error_bound}
  \RW(\rho) \geq \RW(\tau) = \lambda \mu \geq \frac{(1-\ve)\left(1-F_\FF(\phi)\right)}{\ve F_\FF(\phi)},
\end{aligned}\end{equation}
where we used the monotonicity of $\RW$. Rearranging, we get the claimed relation.

To conclude, let us clarify the cases when one or both of $\ve$ and $F_\FF(\phi)$ is zero.  The Theorem is consistent with the no-go result of Theorem~\ref{thm:nogo_monotonicity} when $\ve=0$, as it implies that $\RW(\rho) = \infty$ is necessary for the transformation to be possible. In the case $F_\FF(\phi) = 0$, when $\RW(\rho) < \infty$, we get $\ve \geq 1$ and the transformation is not possible. If, however, $F_\FF(\phi) = 0$ and $\RW(\rho) = \infty$, then we can only give the trivial bound $\ve \geq 0$; indeed, if also $R_\FF(\rho) = \infty$, then $\ve = 0$ is actually achievable (see Theorem~\ref{thm:nogo_sufficient_fulldim_full}) so no better bound that depends only on $\RW(\rho)$ can be given.
\end{proof}

\begin{remark}
The result can be alternatively seen by using the dual form of $\RW$ from Eq.~\eqref{eq:thm_convex_progB}. Taking the feasible dual solutions
\begin{equation}\begin{aligned}
  A = \frac{1 - F_\FF(\phi)}{\ve  F_\FF(\phi)}\phi, \qquad B =  \frac{\id - \phi}{\ve}
\end{aligned}\end{equation}
allows us to lower bound $\RW(\tau)$ as in Eq.~\eqref{eq:prob_error_bound}.
\end{remark}

In probabilistic protocols used in practice, one is often able to improve the transformation accuracy by `gambling' with the resource. This idea stems from entanglement distillation~\cite{bennett_1996,lo_2001,horodecki_1999-1,rozpedek_2018} and is based on the fact that transformation probability can often be traded for accuracy: a higher output fidelity can be obtained by sacrificing some probability. Since our bound holds for all values of $p$ and even in the limit as $p \to 0$, it establishes restrictions which are impossible to beat through such trade-offs.

The bound can tightly characterise the limitations of probabilistic transformation protocols. We explicitly demonstrate this in Figure~\ref{fig:iso}, where we plot bounds on the error incurred in entanglement distillation. 
By choosing the set of positive partial transpose (PPT) states as the free states $\FF$, the achievable performance of distillation protocols is computable exactly with an SDP (see the forthcoming Section~\ref{sec:tradeoff}; see also~\cite{rozpedek_2018}), which we can then use to benchmark the performance of our bound. Not only does the projective robustness bound of Theorem~\ref{thm:prob_error_full} significantly improve over the eigenvalue bound~\eqref{eq:eig_bound}, we will shortly prove (see Corollary~\ref{thm:prob_error_tight}) that $\RW$ gives the tightest possible restriction on the achievable error of probabilistic distillation schemes --- the whole region in Figure~\ref{fig:iso} that is not forbidden by the projective robustness bound is actually achievable by probabilistic protocols, and such a feature extends to more general resource theories.
As detailed in the Figure, the applications of our bound provide new insights into the advantages that can be obtained by employing probabilistic protocols in entanglement transformations, strengthening and extending previous results where entanglement manipulation under LOCC~\cite{lo_2001,linden_1998,kent_1998,horodecki_1999-1,vidal_2000-1,horodecki_2006-1} was considered.

\begin{figure}
 \centering
        \vspace*{-1.8\baselineskip}
    \subfloat[][$\displaystyle \rho = (1-\gamma) \phi_2 + \gamma \id/4$]{%
    \includegraphics[width=7.55cm]{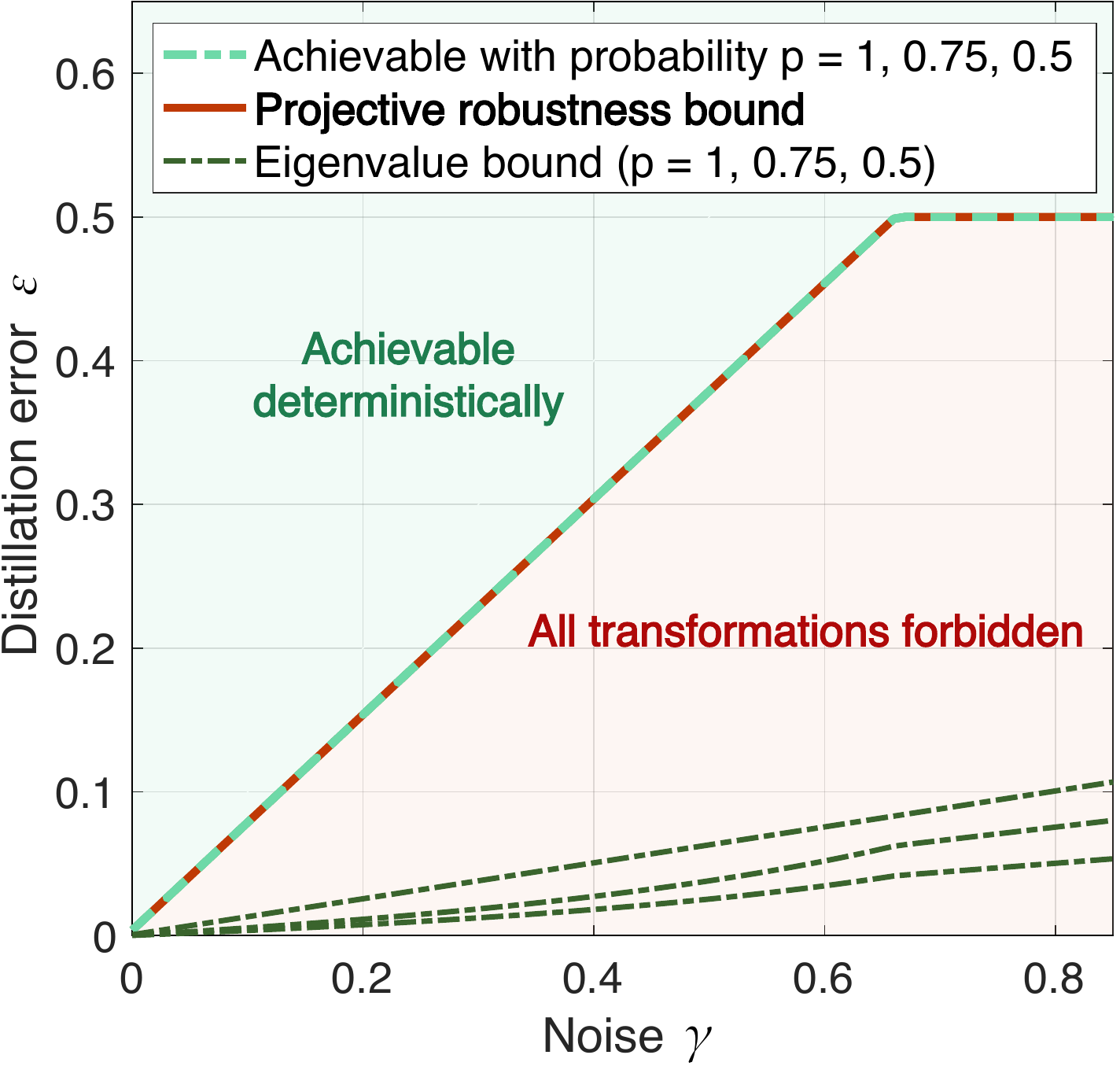}
    \hspace*{20pt}
    }%
    \subfloat[][$\displaystyle \rho = \left\lbrack(1-\gamma) \phi_2 + \gamma \id/4\right\rbrack^{\otimes 2}$]{%
    \includegraphics[width=7.55cm]{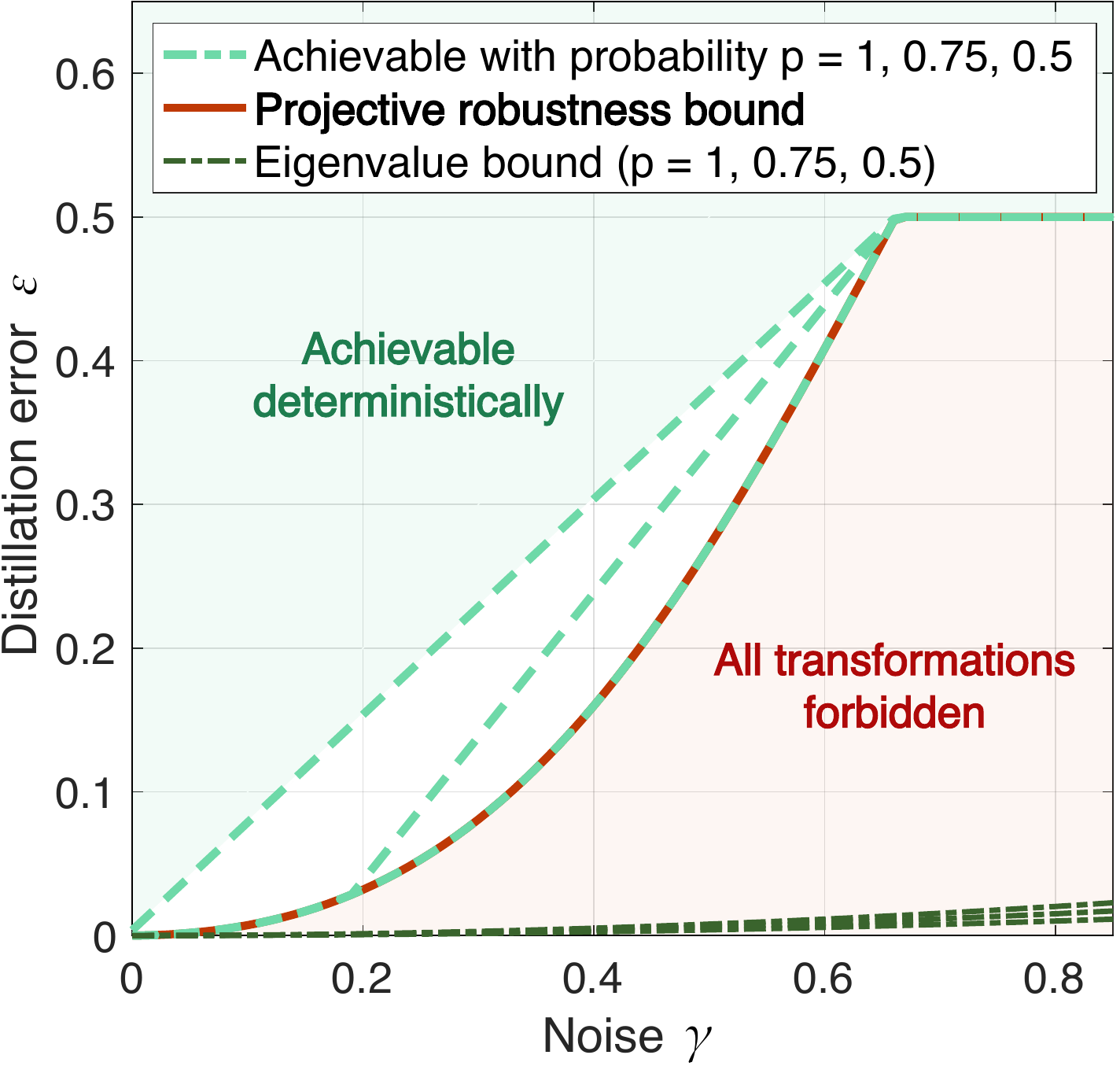}
    }%
    \\
    \subfloat[][$\displaystyle \rho = \left\lbrack(1-\gamma) \phi_2 + \gamma \id/4\right\rbrack^{\otimes 3}$]{%
    \includegraphics[width=7.55cm]{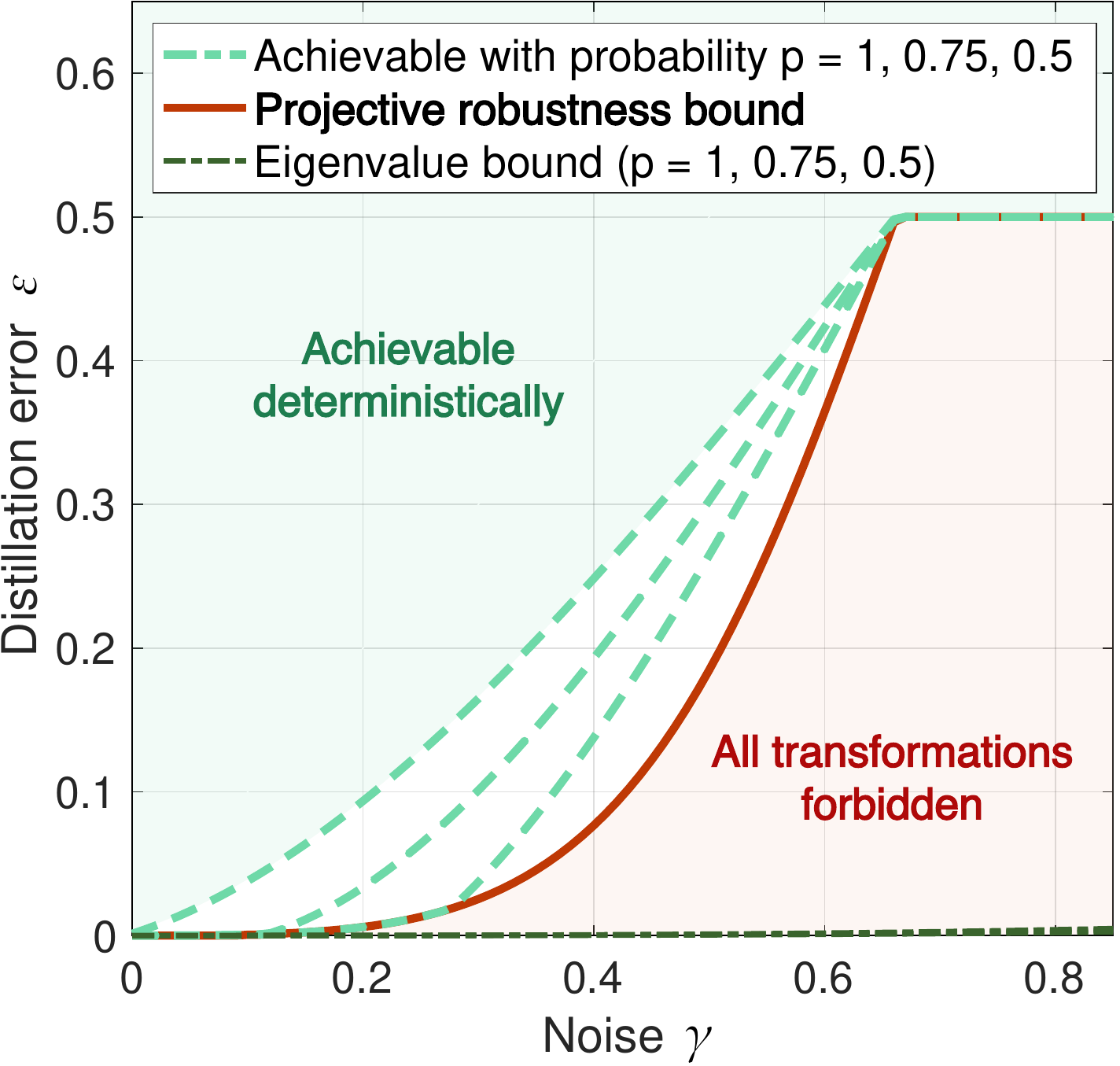}
    }%
\caption{\textbf{Bounding achievable error in entanglement distillation.} We consider the task of distilling a single copy of a maximally entangled state $\ket{\phi_2}$ under operations $\OO$, which in this case correspond to PPT-preserving operations. The projective robustness bound of Theorem~\ref{thm:prob_error_full}, which lower bounds the achievable error $\ve$ under any physical transformation, is compared with the eigenvalue bound of \cite{fang_2020} (see Eq.~\eqref{eq:eig_bound}) as well as with the achievable performance of distillation protocols. The achievable error is plotted in both the deterministic ($p=1$) case and the probabilistic case for a choice of $p \in \{0.75, 0.5\}$ (from top to bottom, respectively).\\
We see in \textbf{(a)} that, for a single copy of an isotropic state, probabilistic protocols offer no improvement in distillation error --- the projective robustness bound is actually achievable deterministically, and indeed it is achievable by simply leaving the state $\rho$ as it is and not performing any operations whatsoever. This provides a strengthening of the result of \cite{linden_1998}, where it was shown that probabilistic LOCC cannot increase the distillation fidelity of a single isotropic state; we extend this to all PPT-preserving (or non-entangling) operations. (We will later see an analytical proof of this fact in Corollary~\ref{cor:isotropic_nogo}.) Note that the eigenvalue bound is too weak to reach the same conclusion.\\
In \textbf{(b)}, where two copies of an isotropic state are used, we see that probabilistic protocols offer an enhanced performance over deterministic ones. We also see that the protocol for $p=0.5$ achieves the projective robustness bound, which means that no further improvement in fidelity can be obtained by choosing $p < 0.5$.\\
The plot in \textbf{(c)}, on the other hand, suggests that an improved distillation fidelity can be obtained by lowering the probability below $p = 0.5$. Note the extremely poor performance of the eigenvalue bound in comparison with the projective robustness and the achievable protocols.
}
\label{fig:iso}
\end{figure}

We remark here that yet another approach to bounding achievable error in probabilistic transformations was introduced in Ref.~\cite{regula_2021-1} based on the strong monotonicity of the robustness and weight measures. Although accurate for $p \approx 1$, these bounds quickly become trivial as $p$ decreases, which is why we do not include them in our comparison.

An important consequence of the universal bound of Theorem~\ref{thm:prob_error_full} is that we can use the subadditivity of the projective robustness to provide limitations on the overhead required in many-copy transformations. Specifically, consider a protocol where one aims to minimise the transformation error by using more input copies of the given state. Given access to $\rho^{\otimes n}$, we are then concerned with the question of how many copies are required to achieve a desired accuracy.
\begin{boxed}{white}
{
\begin{corollary}\label{cor:overhead}
Assume that the given convex resource theory is closed under tensor product, in the sense that $\sigma \in \FF \Rightarrow \sigma^{\otimes n} \in \FF$. 
If there exists a free transformation $\rho^{\otimes n}\!\transf\!\tau$ such that $\tau$ is a state satisfying $F(\tau,\phi) \geq 1-\ve$ for some resourceful pure state $\phi$, then
\begin{equation}\begin{aligned}
  n \geq \log_{\RW(\rho)} \frac{(1-\ve)\left(1-F_\FF(\phi)\right)}{\ve F_\FF(\phi)}.
\end{aligned}\end{equation}
\end{corollary}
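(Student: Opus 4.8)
The plan is to obtain this as an immediate corollary of Theorem~\ref{thm:prob_error_full} together with the (weak) submultiplicativity of the projective robustness established in Theorem~\ref{thm:proj_properties}\eqref{itm:submult}. Concretely, I would apply Theorem~\ref{thm:prob_error_full} not to $\rho$ but to the $n$-copy input state $\rho^{\otimes n}$: the hypothesis that a free transformation $\rho^{\otimes n} \transf \tau$ exists with $F(\tau,\phi) \geq 1-\ve$ for the resourceful pure state $\phi$ gives, after the same rearrangement performed in the proof of that theorem,
\begin{equation}
  \RW(\rho^{\otimes n}) \;\geq\; \frac{(1-\ve)\left(1-F_\FF(\phi)\right)}{\ve F_\FF(\phi)}.
\end{equation}

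Next, the assumption that the resource theory is closed under tensor product in the sense $\sigma \in \FF \Rightarrow \sigma^{\otimes n} \in \FF$ is exactly what is needed to invoke the weaker form of submultiplicativity from Theorem~\ref{thm:proj_properties}\eqref{itm:submult}, namely $\RW(\rho^{\otimes n}) \leq \RW(\rho)^n$. Combining the two relations yields $\RW(\rho)^n \geq \frac{(1-\ve)(1-F_\FF(\phi))}{\ve F_\FF(\phi)}$, and taking the logarithm to base $\RW(\rho)$ --- an order-preserving operation when $\RW(\rho) > 1$ --- gives the claimed lower bound on $n$.

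The only point requiring a word of care is the degenerate behaviour of $\log_{\RW(\rho)}$ at the extremes of its argument. If $\rho$ is free, so that $\RW(\rho) = 1$, the chained inequality reduces to $1 \geq \frac{(1-\ve)(1-F_\FF(\phi))}{\ve F_\FF(\phi)}$, which is a consistency condition on $\ve$ and $F_\FF(\phi)$ rather than a constraint on $n$; and if $\RW(\rho) = \infty$ the stated bound reads $n \geq 0$ and holds vacuously. In the generic regime $1 < \RW(\rho) < \infty$ --- the situation of interest, since one only distills from genuinely resourceful states --- the argument above applies verbatim. I do not anticipate any real obstacle here: the corollary is essentially a one-line deduction from the two results already in hand, with submultiplicativity of $\RW$ supplying the only non-trivial ingredient.
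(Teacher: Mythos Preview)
Your proposal is correct and matches the paper's own proof essentially verbatim: the paper simply cites Eq.~\eqref{eq:prob_error_bound} (i.e.\ Theorem~\ref{thm:prob_error_full} applied to $\rho^{\otimes n}$) together with the submultiplicativity of $\RW$ from Theorem~\ref{thm:proj_properties}\eqref{itm:submult}. Your additional remarks on the degenerate cases $\RW(\rho)=1$ and $\RW(\rho)=\infty$ go slightly beyond what the paper writes but are consistent with it.
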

}
\end{boxed}
\begin{proof}
Follows from Eq.~\eqref{eq:prob_error_bound} and the submultiplicativity of $\RW$ (Theorem~\ref{thm:proj_properties}\eqref{itm:submult}).
\end{proof}
The practical performance of this bound has already been demonstrated in Ref.~\cite{regula_2022}, where it was shown in particular that $\RW$ significantly improves on the eigenvalue bound.
We note that, although we considered the target state to be a single system $\phi$, one can equivalently study the overhead in transformations of the type $\rho^{\otimes n} \transf \phi^{\otimes m}$. Importantly, the bound only depends on the overlap $F_\FF(\phi)$ of the target state, which in many resource theories satisfies $F_\FF(\phi^{\otimes m}) = F_\FF(\phi)^m$ and is therefore simple to compute for many-copy target states (this includes the theories of entanglement~\cite{shimony_1995}, coherence, athermality, and magic when $\phi$ is a state of up to 3 qubits~\cite{bravyi_2019}).

Our bound here shows that the total overhead $n$ of any distillation protocol must scale as $\Omega\left(\log\frac{1-\ve}{\ve}\right)= \Omega\left(\log \frac{1}{\ve}\right)$, even if the transformation probability is allowed to vanish asymptotically. This strengthens the previous observation of Ref.~\cite{fang_2020} where this was shown to hold for constant, non-vanishing probability of success.

Let us comment briefly on the applicability and usefulness of the restrictions introduced here.
The key property of the bounds of this section is to constrain all possible probabilistic transformations, regardless of their probability of success. 
In some cases, rather than a constraint on \emph{all} probabilistic protocols, one may wish to obtain bounds on transformations achievable for a certain fixed probability --- such restrictions are the subject of the forthcoming Section~\ref{sec:tradeoff}.
If constraints only on deterministic protocols are desired, there is no need to consider the projective robustness: in such a case, either the robustness-~\cite{regula_2020,regula_2021-1} or the weight-based~\cite{regula_2021-1,fang_2022} bound will perform better than $\RW$ in the context of resource distillation.


\subsection{Comparison with the eigenvalue bound}\label{sec:eigenvalue}

The eigenvalue bound from Ref.~\cite{fang_2020} states that, for any full-rank input state $\rho$ and any $\ve$-error probabilistic protocol that succeeds with probability $p$, it holds that
\begin{equation}\begin{aligned}\label{eq:eig_bound}
  \frac{\ve}{p} \geq \frac{\lambda_{\min}(\rho)\left(1-F_\FF(\phi)\right)}{R_\FF(\rho)},
\end{aligned}\end{equation}
where $\lambda_{\min}$ denotes the smallest eigenvalue. 
We have already seen in Figure~\ref{fig:iso} (cf.\ also~\cite{regula_2022}) that the restrictions obtained using the projective robustness can significantly improve on this bound; we now explicitly prove that the $\RW$-based bounds are never worse than the eigenvalue bound. 
In fact, we will also show that the dependence of the eigenvalue bound on the probability of success $p$ is superficial --- we can replace $p$ with the constant $1$ and the bound still holds for any probabilistic transformation, including protocols that succeed with asymptotically vanishing probability.
\begin{boxed}{white}
\begin{lemma}\label{lem:vacuous}
If there exists a probabilistic transformation $\rho \transf \tau$ such that $\tau$ is a state satisfying $F(\tau,\phi) \geq 1-\ve$ for some resourceful pure state $\phi$, then
\begin{equation}\begin{aligned}
\ve \geq \frac{1-F_\FF(\phi)}{\RW(\rho)} \geq \frac{\lambda_{\min}(\rho)\left(1-F_\FF(\phi)\right)}{R_\FF(\rho)}.
\end{aligned}\end{equation}
\end{lemma}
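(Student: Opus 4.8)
The plan is to derive the first inequality as a direct consequence of Theorem~\ref{thm:prob_error_full} and then obtain the second inequality purely algebraically from the bounds on $\RW(\rho)$ already recorded in Theorem~\ref{thm:proj_properties}\eqref{itm:bounds}. For the first inequality, Theorem~\ref{thm:prob_error_full} gives $\ve \geq \left( \frac{F_\FF(\phi)}{1-F_\FF(\phi)} \RW(\rho) + 1 \right)^{-1}$, and since $F_\FF(\phi) \geq 0$ implies $\frac{F_\FF(\phi)}{1-F_\FF(\phi)} \RW(\rho) + 1 \leq \frac{1}{1-F_\FF(\phi)}\RW(\rho) + 1 \leq \frac{2}{1-F_\FF(\phi)}\RW(\rho)$ — actually a cleaner route is to note that one may simply drop the $\frac{F_\FF(\phi)}{1-F_\FF(\phi)}$ factor up to a harmless rescaling; the simplest honest statement is that Eq.~\eqref{eq:prob_error_bound} in the proof of Theorem~\ref{thm:prob_error_full} already shows $\RW(\rho) \geq \frac{(1-\ve)(1-F_\FF(\phi))}{\ve F_\FF(\phi)}$, and since $F_\FF(\phi) \in [0,1)$ and $1-\ve \leq 1$ we can weaken this to $\RW(\rho) \geq \frac{(1-\ve)(1-F_\FF(\phi))}{\ve}$ only if $F_\FF(\phi)\le 1$, which always holds; rearranging and using $1-\ve \geq$ (nothing) — one must be slightly careful here. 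The cleanest derivation: from $\ve \geq \left( \frac{F_\FF(\phi)}{1-F_\FF(\phi)} \RW(\rho) + 1 \right)^{-1}$ and $F_\FF(\phi) \le 1-F_\FF(\phi)$ being false in general, I instead use that the right-hand side of Theorem~\ref{thm:prob_error_full} is at least $\left( \frac{1}{1-F_\FF(\phi)}\RW(\rho) + \frac{1}{1-F_\FF(\phi)}\RW(\rho) \right)^{-1} = \frac{1-F_\FF(\phi)}{2\RW(\rho)}$ when $\RW(\rho)\ge 1-F_\FF(\phi)$; but the paper's claimed bound has no factor of $2$, so the intended argument must be the direct one below.

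The honest and direct argument is this: repeat the first three lines of the proof of Theorem~\ref{thm:prob_error_full}, but weaken the first inequality of Eq.~\eqref{eq:lambdamu_lowerbounds} to $\langle \tau, \phi\rangle \leq \lambda$ (using $\langle \sigma,\phi\rangle \leq 1$ instead of $\leq F_\FF(\phi)$), while keeping the second inequality $1 - F_\FF(\phi) \leq \mu \langle \tau, \id - \phi\rangle = \mu(1 - \langle \tau,\phi\rangle) \leq \mu \ve$. Multiplying the surviving bound $\mu \geq \frac{1-F_\FF(\phi)}{\ve}$ by $\lambda \geq \langle\tau,\phi\rangle \geq 1-\ve$ would give $\lambda\mu \geq \frac{(1-\ve)(1-F_\FF(\phi))}{\ve}$; but actually the simplest bound uses only $\lambda \geq 1$ (since $\tau \leq \lambda\sigma$ with both states normalised forces $\lambda \geq 1$), giving $\RW(\rho) \geq \RW(\tau) = \lambda\mu \geq \mu \geq \frac{1-F_\FF(\phi)}{\ve}$, i.e. $\ve \geq \frac{1-F_\FF(\phi)}{\RW(\rho)}$, exactly as claimed. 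I would present it in this self-contained form rather than quoting Theorem~\ref{thm:prob_error_full} as a black box, since it is just as short. The case $\RW(\rho) = \infty$ gives the trivial bound $\ve \geq 0$, and $\ve = 0$ forces $\RW(\rho) = \infty$ by Theorem~\ref{thm:nogo_monotonicity}, so both edge cases are consistent.

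For the second inequality, I invoke Theorem~\ref{thm:proj_properties}\eqref{itm:bounds}, which states $\RW(\rho) \leq R_\FF(\rho)\,\lambda_{\min}(\rho)^{-1}$. Taking reciprocals (valid since all quantities are positive, with the usual conventions for $0$ and $\infty$) gives $\RW(\rho)^{-1} \geq \lambda_{\min}(\rho)/R_\FF(\rho)$, and multiplying through by $1 - F_\FF(\phi) \geq 0$ yields $\frac{1-F_\FF(\phi)}{\RW(\rho)} \geq \frac{\lambda_{\min}(\rho)(1-F_\FF(\phi))}{R_\FF(\rho)}$, completing the chain. I expect no genuine obstacle here: the entire content is that the $\RW$-bound dominates the eigenvalue bound, and this is immediate once the relevant bound from Theorem~\ref{thm:proj_properties} is in hand. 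The only mildly delicate point — and the one I would be most careful about in writing — is the bookkeeping of the degenerate cases ($F_\FF(\phi) = 0$ or $1$, $\ve = 0$, $\RW(\rho) = \infty$, $\rho$ not full rank so $\lambda_{\min}(\rho) = 0$) to make sure every inequality in the displayed chain remains valid under the conventions $0 \cdot \infty = \infty$ and $\infty^{-1} = 0$ adopted in the preliminaries.
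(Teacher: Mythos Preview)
Your final argument is correct and matches the paper's proof exactly: use the second inequality of Eq.~\eqref{eq:lambdamu_lowerbounds} to get $\mu \geq \frac{1-F_\FF(\phi)}{\ve}$, combine with the trivial bound $\lambda \geq 1$ and the monotonicity of $\RW$ to obtain the first inequality, then invoke $\RW(\rho) \leq R_\FF(\rho)\,\lambda_{\min}(\rho)^{-1}$ from Theorem~\ref{thm:proj_properties}\eqref{itm:bounds} for the second. The meandering false starts in your write-up should be trimmed, but the eventual reasoning is identical to the paper's.
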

\end{boxed}
\begin{proof}
The first inequality is essentially a weaker version of Theorem~\ref{thm:prob_error_full}; here, we write $\RW(\tau) = \lambda \mu$, recall from Eq.~\eqref{eq:lambdamu_lowerbounds}  that
\begin{equation}\begin{aligned}
  1-F_\FF(\phi) &\leq \mu \< \tau, \id - \phi \> \leq \mu \ve,
\end{aligned}\end{equation}
and use the trivial lower bound $\lambda \geq 1$ to obtain
\begin{equation}\begin{aligned}
  \RW(\rho) \geq \RW(\tau) \geq \mu \geq \frac{1-F_\FF(\phi)}{\ve}
\end{aligned}\end{equation}
using the monotonicity of $\RW$. Finally, the upper bound $\RW(\rho) \leq  R_\FF(\rho) \, \lambda_{\min}(\rho)^{-1}$ (see Theorem~\ref{thm:proj_properties}\eqref{itm:bounds}) gives the statement of the Lemma.
\end{proof}

Let us discuss a consequence of this finding. One of the stated aims of the eigenvalue bound was to understand trade-offs between the error $\ve$ and probability $p$~\cite{fang_2020}; in this sense, it can be viewed as a bound on the probability of success as
\begin{equation}\begin{aligned}\label{eq:vacuous}
  p \leq \frac{\ve R_\FF(\rho)}{\lambda_{\min}(\rho)\! \left(1 - F_\FF(\phi)\right)} \eqqcolon v(\rho, \ve),
\end{aligned}\end{equation}
However, our result implies that this bound is \emph{unable} to give any non-trivial restrictions on the probability, since it holds that $\displaystyle v(\rho,\ve) \geq \RW(\rho)\,/\,\RW(\tau)$.
What this means is that, when $v(\rho,\ve) < 1$, this implies that $\RW(\tau) > \RW(\rho)$ and hence the transformation is actually impossible with any non-zero probability (as per Theorem~\ref{thm:nogo_monotonicity}), while if $v(\rho,\ve) \geq 1$, then the bound trivialises. Indeed, the stronger statement of the eigenvalue bound that we derived in Lemma~\ref{lem:vacuous}, and the ensuing 
superficiality of the dependence of Eq.~\eqref{eq:vacuous} on the probability $p$, means that the eigenvalue bound applies in the same way regardless of the probability of success of the distillation protocol.

\begin{remark}
Ref.~\cite{fang_2020} concluded from the eigenvalue bound that probabilistic transformations from a full-rank state $\rho$ to a resourceful pure state $\phi$ is impossible in \textit{any} resource theory. This is not strictly correct, as can be seen from the example where $\FF = \{ \proj{0} \}$ and $\rho$ is any full-rank one-qubit state. The state $\ket{1}$ can then be probabilistically obtained from $\rho$ by simply projecting onto it --- such a projection cannot generate any resource when acting on $\proj{0}$, so it is a free probabilistic operation. The missing assumption in Ref.~\cite{fang_2020} is that the robustness $R_\FF(\rho)$ is finite. (Note also that this no-go result is significantly strengthened by our Theorem~\ref{thm:nogo_monotonicity}, as we discussed previously).
\end{remark}


\subsection{Achievable fidelity}

We now investigate the tightness of the general bound of Theorem~\ref{thm:prob_error_full} by constructing achievable protocols for the distillation of resources.

\begin{boxed}{white}
\begin{theorem}\mbox{}\label{thm:RW_dist_achiev}
Let $\phi$ be a resourceful pure state in any convex resource theory. For any $\ve$ satisfying
\begin{equation}\begin{aligned}\label{eq:ve_choice_achiev}
  \ve \geq \left( \frac{\RW(\rho)}{R_\FF^\FF(\phi)-1} \, + 1 \right)^{-1}\!,
\end{aligned}\end{equation}
there exists a resource non-generating protocol $\rho \transf \tau_\ve$ where $\tau_\ve$ is a state such that $F(\tau_\ve,\phi) \geq 1-\ve$.

If the resource theory is affine, it suffices to take
\begin{equation}\begin{aligned}\label{eq:ve_choice_achiev_aff}
  \ve \geq \left( \frac{\RW(\rho)}{R_\FF(\phi)-1} \, + 1 \right)^{-1}\!.
\end{aligned}\end{equation}
\end{theorem}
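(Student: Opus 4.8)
The plan is to reduce the claim to the sufficiency criteria already in hand --- Theorem~\ref{thm:nogo_affine_full}(i) for affine theories, Theorem~\ref{thm:nogo_sufficient_fulldim_full}(i) for general convex ones --- by exhibiting, for each admissible $\ve$, an explicit ``isotropic-like'' target $\tau_\ve$ with $F(\tau_\ve,\phi)\ge 1-\ve$ whose (free) projective robustness is small enough to be reached from $\rho$. Set $R\coloneqq R_\FF^\FF(\phi)$ in the general case and $R\coloneqq R_\FF(\phi)$ in the affine case; since $\phi\notin\FF$ we have $R>1$ (faithfulness of the robustness), and the hypothesis~\eqref{eq:ve_choice_achiev} (resp.~\eqref{eq:ve_choice_achiev_aff}) rearranges to $\RW(\rho)\ge (1-\ve)(R-1)/\ve$. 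So it suffices to build $\tau_\ve$ with $F(\tau_\ve,\phi)\ge 1-\ve$ together with $\RW^\FF(\tau_\ve)\le (1-\ve)(R-1)/\ve$ in the general case (resp.\ $\RW(\tau_\ve)\le (1-\ve)(R-1)/\ve$ in the affine case), and then apply the corresponding theorem.

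First I would fix a free state $\sigma^\star\in\FF$ realising the relevant robustness of $\phi$, namely $R\sigma^\star-\phi\in\cone(\FF)$ in the general case and $R\sigma^\star\geq\phi$ in the affine case; such a $\sigma^\star$ exists by a compactness argument as in the proof of Theorem~\ref{thm:proj_properties}\eqref{itm:dual}. Assuming for the moment that $0<\ve\le 1-1/R$, I would then take
\begin{equation}\begin{aligned}
  \tau_\ve\;\coloneqq\;(1-\ve)\,\phi\;+\;\ve\,\frac{R\sigma^\star-\phi}{R-1}\;=\;\alpha\,\phi+\nu\,\sigma^\star,\qquad \alpha\coloneqq 1-\tfrac{\ve R}{R-1}\ge 0,\quad \nu\coloneqq\tfrac{\ve R}{R-1}>0,
\end{aligned}\end{equation}
which is a genuine density operator, since both summands of the first expression are positive semidefinite and the trace is $1$. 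The fidelity requirement is then a one-line check: from $R\sigma^\star\geq\phi$ one has $\braket{\phi|\sigma^\star|\phi}\ge 1/R$, hence $F(\tau_\ve,\phi)=\braket{\phi|\tau_\ve|\phi}=(1-\ve)+\ve\,\frac{R\braket{\phi|\sigma^\star|\phi}-1}{R-1}\ge 1-\ve$.

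Next I would bound $\RW^\FF(\tau_\ve)$ (resp.\ $\RW(\tau_\ve)$) by using $\sigma^\star$ itself as the feasible free state in the defining optimisation. Dropping the positive term $\alpha\phi$ gives $\nu\sigma^\star\le\tau_\ve$. Setting $\lambda\coloneqq\nu+\alpha R=R(1-\ve)$, one computes $\lambda\sigma^\star-\tau_\ve=\alpha(R\sigma^\star-\phi)$, which lies in $\cone(\FF)$ in the general case (by the choice of $\sigma^\star$) and is positive semidefinite in the affine case, since $\alpha\ge 0$. Thus $\nu\sigma^\star\le\tau_\ve\leq_\FF\lambda\sigma^\star$ (resp.\ $\nu\sigma^\star\le\tau_\ve\le\lambda\sigma^\star$), so $\RW^\FF(\tau_\ve)\le\lambda/\nu=(1-\ve)(R-1)/\ve$ (resp.\ $\RW(\tau_\ve)\le(1-\ve)(R-1)/\ve$). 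Combining with $\RW(\rho)\ge(1-\ve)(R-1)/\ve$, Theorem~\ref{thm:nogo_sufficient_fulldim_full}(i) (resp.\ Theorem~\ref{thm:nogo_affine_full}(i)) delivers $\rho\transf\tau_\ve$, completing the main case.

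Finally I would dispose of the degenerate regimes. If $\ve>1-1/R$ then $1-\ve<1/R\le F_\FF(\phi)$, so one takes simply $\tau_\ve=\sigma^\star\in\FF$, which satisfies $F(\sigma^\star,\phi)\ge 1/R>1-\ve$ and is reachable from any $\rho$ via the free channel $X\mapsto(\Tr X)\sigma^\star$. If $\RW(\rho)=\infty$, the hypothesis holds for every $\ve\ge 0$; for $\ve>0$ the construction above still applies, and for $\ve=0$ the transformation $\rho\transf\phi$ follows from part (ii) of the relevant theorem provided $R<\infty$ (automatic in full-dimensional theories, and holding in affine theories under the standing assumption $R_\FF(\phi)<\infty$). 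I expect the one genuinely delicate point to be the choice of $\tau_\ve$: the fidelity bound and the projective-robustness bound are saturated \emph{together} only for this particular mixture of $\phi$ with the robustness-\emph{optimal} free state, and in the non-affine case one must verify that the upper inequality goes through with the free cone $\cone(\FF)$ rather than with $\HH_+$ --- which is exactly what the identity $\lambda\sigma^\star-\tau_\ve=\alpha(R\sigma^\star-\phi)$ guarantees. (Comparing the achievable region obtained here with the lower bound of Theorem~\ref{thm:prob_error_full} then shows the two coincide, and hence that the characterisation is tight, whenever $R=1/F_\FF(\phi)$ --- as in coherence theory and for maximally entangled states.)
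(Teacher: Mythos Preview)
Your proof is correct and follows essentially the same approach as the paper: the same isotropic-like state $\tau_\ve=(1-\ve)\phi+\ve\,\frac{R\sigma^\star-\phi}{R-1}$ is constructed, its (free) projective robustness is bounded by $(1-\ve)(R-1)/\ve$ via the same feasible choice $\sigma^\star$, and Theorems~\ref{thm:nogo_affine_full}/\ref{thm:nogo_sufficient_fulldim_full} are invoked. Your treatment of the large-error regime $\ve>1-1/R$ (taking $\tau_\ve=\sigma^\star$ directly) is a slightly cleaner alternative to the paper's case split, and in the $\RW(\rho)=\infty$ regime you should note that for $\ve>0$ one invokes part~(ii) rather than part~(i) of the relevant theorem (since the constructed $\tau_\ve$ has finite $R_\FF^\FF$), but this is a cosmetic point.
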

\end{boxed}
\begin{proof}
If $\RW(\rho) = \infty$, then Theorems~\ref{thm:nogo_affine_full}--\ref{thm:nogo_sufficient_fulldim_full} already tell us that this transformation is possible with $\ve = 0$, so we assume that $\RW(\rho)<\infty$. Let us then consider the two non-trivial cases: (i) $R^\FF_\FF(\phi) < \infty$, and (ii) the resource theory is affine and $R_\FF(\phi) < \infty$.
Notice that the chosen value of $\ve$ in (i) Eq.~\eqref{eq:ve_choice_achiev} or (ii) Eq.~\eqref{eq:ve_choice_achiev_aff} entails that
\begin{equation}\begin{aligned}\label{eq:onevef}
  \mathrm{(i)}\quad \RW(\rho) &\geq \frac{1-\ve}{\ve} \left(R^\FF_\FF(\phi)-1\right),\\
    \mathrm{(ii)}\quad \RW(\rho) &\geq \frac{1-\ve}{\ve} \left(R_\FF(\phi)-1\right).
\end{aligned}\end{equation}
Let $\sigma \in \FF$ be a state such that (i) $\phi \leq_\FF \lambda \sigma$ with $\lambda = R_\FF^\FF(\phi) $ or (ii) $\phi \leq \lambda \sigma$ with $\lambda = R_\FF(\phi)$, and define the state
\begin{equation}\begin{aligned}\label{eq:tauve_def}
  \tau_\ve \coloneqq&\; (1-\ve) \phi + \ve \frac{\lambda \sigma - \phi}{\lambda - 1}\\
  =& \left[(1-\ve)-\ve \frac{1}{\lambda-1}\right] \phi + \ve \frac{\lambda}{\lambda - 1} \sigma.
\end{aligned}\end{equation}
Let us first assume that $(1-\ve) \geq \ve \frac{1}{\lambda-1}$, i.e.\ $\ve \leq \frac{\lambda-1}{\lambda}$. 
We then have
\begin{equation}\begin{aligned}\label{eq:tauve_nontrivial}
  \mathrm{(i)}\quad \tau_\ve &\leq_\FF \left[(1-\ve)-\ve \frac{1}{\lambda-1}\right] \lambda \sigma + \ve \frac{\lambda}{\lambda - 1} \sigma\\
  &= (1-\ve) \lambda \sigma,\\
  \mathrm{(ii)}\quad \tau_\ve &\leq \left[(1-\ve)-\ve \frac{1}{\lambda-1}\right] \lambda \sigma + \ve \frac{\lambda}{\lambda - 1} \sigma\\
  &= (1-\ve) \lambda \sigma,\\
  \mathrm{(i), (ii)}\quad \tau_\ve &\geq \ve\frac{\lambda}{\lambda-1} \sigma.
\end{aligned}\end{equation}
This gives
\begin{equation}\begin{aligned}\label{eq:twoomegas}
  \mathrm{(i)}\quad \RW^\FF(\tau_\ve) &\leq \frac{1-\ve}{\ve}(\lambda-1)\\
  \mathrm{(ii)}\quad \RW(\tau_\ve) &\leq \frac{1-\ve}{\ve}(\lambda-1),
\end{aligned}\end{equation}
which in particular implies that (i) $\RW^\FF(\tau_\ve) \leq \RW(\rho)$ or (ii) $\RW(\tau_\ve) \leq \RW(\rho)$, as the right-hand sides of~\eqref{eq:onevef} and~\eqref{eq:twoomegas} coincide. Invoking (i) Theorem~\ref{thm:nogo_sufficient_fulldim_full} or (ii) Theorem~\ref{thm:nogo_affine_full} shows the existence of a probabilistic transformation $\rho \transf \tau_\ve$. Since $F(\tau_\ve, \phi) \geq 1-\ve$ by construction, we have thus shown the desired achievability.

We need to consider separately the case of large error, $(1-\ve) < \ve \frac{1}{\lambda-1}$. Here, it holds that
\begin{equation}\begin{aligned}\label{eq:tauve_trivial}
  \mathrm{(i)}\quad &\tau_\ve = (1-\ve) \phi + (1-\ve) (\lambda \sigma - \phi) + \left[  \ve \frac{1}{\lambda-1} - (1-\ve)\right] (\lambda \sigma - \phi)\\
  &\hphantom{\tau_\ve} = (1-\ve) \lambda \sigma +  \left[  \ve \frac{1}{\lambda-1} - (1-\ve)\right] (\lambda \sigma - \phi)\\
  &\Rightarrow \tau_\ve \in \FF
\end{aligned}\end{equation}
since $\phi \leq_\FF \lambda \sigma$ by assumption. Hence, the transformation into $\tau_\ve$ trivialises. In the case (ii), we cannot make the exact same argument, but we can simply change the target state into
\begin{equation}\begin{aligned}
  \tau'_\ve \coloneqq (1-\ve) \phi + (1-\ve) (\lambda \sigma - \phi) +  \left[  \ve \frac{1}{\lambda-1} - (1-\ve)\right] \sigma \in \FF
\end{aligned}\end{equation}
so that $\rho \transf \tau'_\ve$ is again trivially possible.
\end{proof}


Combining Theorem~\ref{thm:RW_dist_achiev} and Theorem~\ref{thm:prob_error_full} gives the bounds
\begin{equation}\begin{aligned}\label{eq:err_bounds_1}
   \left( \frac{1}{R_\FF^\FF(\phi)-1} \,\RW(\rho) + 1 \right)^{-1} \geq \inf_{\E \in \OO} \left[ 1 -  F\left(\frac{\E(\rho)}{\Tr \E(\rho)}, \phi\right)\right] \geq \left( \frac{F_\FF(\phi)}{1-F_\FF(\phi)} \,\RW(\rho)\, + 1 \right)^{-1},
\end{aligned}\end{equation}
which are valid in any convex resource theory, and
\begin{equation}\begin{aligned}\label{eq:err_bounds_2}
   \left( \frac{1}{R_\FF(\phi)-1} \,\RW(\rho) + 1 \right)^{-1} \geq \inf_{\E \in \OO} \left[ 1 -  F\left(\frac{\E(\rho)}{\Tr \E(\rho)}, \phi\right)\right] \geq \left( \frac{F_\FF(\phi)}{1-F_\FF(\phi)} \,\RW(\rho)\, + 1 \right)^{-1},
\end{aligned}\end{equation}
which hold for affine resources. We stress once again that the converse (lower) bounds are valid in any resource theory with any set of free operations contained in $\OO$, while the achievable (upper) bounds are valid specifically for resource--non-generating operations $\OO$.

A natural question then is: when can these bounds coincide, meaning that $\RW$ exactly characterises the least error achievable in distillation of $\phi$?
Comparing the bounds in Eqs.~\eqref{eq:err_bounds_1}--\eqref{eq:err_bounds_2}, we see that a necessary and sufficient condition for the upper and lower bounds to be equal is that $R_\FF^\FF(\phi) = F_\FF(\phi)^{-1}$, or $R_\FF(\phi) = F_\FF(\phi)^{-1}$ in affine resource theories. Remarkably, states satisfying the latter equality exist in \emph{every} convex quantum resource theory --- these are the maximally resourceful (so-called `golden') states~\cite{regula_2020}. Throughout this paper, we understand `maximally resourceful' to mean specifically any state which maximises the robustness $R_\FF$, the motivation for this choice being precisely that such states satisfy $R_\FF(\phi) = F_\FF(\phi)^{-1}$. We formalise the consequences of such a choice as follows.
\begin{boxed}{white}
\begin{corollary}[\cite{regula_2022}]\label{thm:prob_error_tight}
Let $\phim$ be a pure state which maximises the robustness $R_\FF$ among all states of the same dimension. Then, as long as either: 
\begin{enumerate}[(i)]
\item the given resource theory is affine, or 
\item it holds that $R_\FF(\phim) = R^\FF_\FF(\phim)$,
\end{enumerate}
then there exists a resource--non-generating protocol that achieves the bound of Theorem~\ref{thm:prob_error_full}. Specifically, for any input state $\rho$ it holds that
\begin{equation}\begin{aligned}\label{eq:exact_error}
  \inf_{\E \in \OO} \left[ 1 -  F\left(\frac{\E(\rho)}{\Tr \E(\rho)}, \phim\right)\right] = \left( \frac{F_\FF(\phim)}{1-F_\FF(\phim)} \,\RW(\rho)\, + 1 \right)^{-1}\!\!.
\end{aligned}\end{equation}
\end{corollary}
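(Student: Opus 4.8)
The plan is to read off \eqref{eq:exact_error} as a two-sided estimate in which the lower (converse) bound is exactly Theorem~\ref{thm:prob_error_full} and the upper (achievability) bound is Theorem~\ref{thm:RW_dist_achiev}, the two matching precisely because $\phim$ is maximally resourceful. For the converse direction: given any $\E\in\OO$ with $\Tr\E(\rho)>0$, set $\tau=\E(\rho)/\Tr\E(\rho)$ and $\ve=1-F(\tau,\phim)$; since $\rho\transf\tau$ holds via this single operation, Theorem~\ref{thm:prob_error_full} (applied with $\phi=\phim$) forces $\ve\geq\bigl(\tfrac{F_\FF(\phim)}{1-F_\FF(\phim)}\RW(\rho)+1\bigr)^{-1}$, and taking the infimum over such $\E$ bounds the left-hand side of \eqref{eq:exact_error} from below by the right-hand side. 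It therefore remains to realise, for the threshold value $\ve^{\star}=\bigl(\tfrac{F_\FF(\phim)}{1-F_\FF(\phim)}\RW(\rho)+1\bigr)^{-1}$, a free protocol $\rho\transf\tau_{\ve^\star}$ with $F(\tau_{\ve^\star},\phim)\geq 1-\ve^\star$.

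The decisive input is the defining property of maximally resourceful (``golden'') states from Ref.~\cite{regula_2020}: a pure $\phim$ maximising $R_\FF$ among states of its dimension satisfies $R_\FF(\phim)=F_\FF(\phim)^{-1}$. (The inequality $R_\FF(\phi)\geq F_\FF(\phi)^{-1}$ holds for every pure $\phi$ by taking $A=\phi/F_\FF(\phi)$ as a feasible point in the dual formulation of $R_\FF(\phi)$; the reverse inequality is exactly what maximality of $R_\FF(\phim)$ provides.) Hence $R_\FF(\phim)-1=\tfrac{1-F_\FF(\phim)}{F_\FF(\phim)}$, i.e.\ $\tfrac{1}{R_\FF(\phim)-1}=\tfrac{F_\FF(\phim)}{1-F_\FF(\phim)}$. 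If the theory is affine, this is precisely the coefficient in the achievability threshold \eqref{eq:ve_choice_achiev_aff} of Theorem~\ref{thm:RW_dist_achiev}, so that threshold becomes $\ve\geq\ve^\star$ and the claimed $\tau_{\ve^\star}$ exists, settling case~(i). In the general convex case the extra hypothesis $R^\FF_\FF(\phim)=R_\FF(\phim)$ gives $R^\FF_\FF(\phim)-1=\tfrac{1-F_\FF(\phim)}{F_\FF(\phim)}$ as well, so the threshold \eqref{eq:ve_choice_achiev} likewise reduces to $\ve\geq\ve^\star$, settling case~(ii). In either case $F(\tau_{\ve^\star},\phim)\geq 1-\ve^\star$ furnishes the matching upper bound $\ve^\star$ on the left-hand side of \eqref{eq:exact_error}, closing the sandwich.

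Two routine points remain. First, $\tau_{\ve^\star}$ is only guaranteed to satisfy $\rho\transf\tau_{\ve^\star}$, i.e.\ to lie in the closure of $\{\E(\rho)/\Tr\E(\rho):\E\in\OO,\ \Tr\E(\rho)>0\}$; but $F(\,\cdot\,,\phim)$ is continuous, so still $\sup_{\E\in\OO}F(\E(\rho)/\Tr\E(\rho),\phim)\geq F(\tau_{\ve^\star},\phim)\geq 1-\ve^\star$, hence $\inf_{\E\in\OO}[1-F(\E(\rho)/\Tr\E(\rho),\phim)]\leq\ve^\star$, which is all that is needed. Second, the degenerate regimes are consistent with the formula: $\phim\notin\FF$ gives $F_\FF(\phim)\in[0,1)$ and $R_\FF(\phim)>1$, and if moreover $\RW(\rho)=\infty$ then Theorems~\ref{thm:nogo_affine_full}--\ref{thm:nogo_sufficient_fulldim_full} already yield an error-free transformation $\rho\transf\phim$ (using $R^\FF_\FF(\phim)=R_\FF(\phim)<\infty$ in the non-affine case), so both sides of \eqref{eq:exact_error} equal $0$. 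The single substantive ingredient is the golden-state identity $R_\FF(\phim)=F_\FF(\phim)^{-1}$; beyond that the argument is merely the matching of two bounds already established in Theorems~\ref{thm:prob_error_full} and~\ref{thm:RW_dist_achiev}, so I anticipate no real obstacle.
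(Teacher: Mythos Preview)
Your proposal is correct and follows essentially the same approach as the paper: the key ingredient is the golden-state identity $R_\FF(\phim)=F_\FF(\phim)^{-1}$ from~\cite{regula_2020}, which makes the achievability threshold of Theorem~\ref{thm:RW_dist_achiev} coincide with the converse bound of Theorem~\ref{thm:prob_error_full}, with the hypothesis $R_\FF(\phim)=R^\FF_\FF(\phim)$ needed in the non-affine case to pass from~\eqref{eq:ve_choice_achiev} to the same threshold. Your additional remarks on the closure/continuity point and the degenerate case $\RW(\rho)=\infty$ are more explicit than the paper's brief proof but entirely in line with it.
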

\end{boxed}
The choice of a maximally resourceful state as the distillation target is a natural one in most quantum resources. In the fundamental example of quantum entanglement, $\phim$ can be understood as any maximally entangled state $\sum_{i=1}^m \frac{1}{\sqrt{m}} \ket{ii}$ of some dimension; indeed, these states satisfy $R^\FF_\FF(\phim) = F_\FF(\phim)^{-1}$~\cite{shimony_1995,vidal_1999}, and so we obtain an exact quantitative characterisation of probabilistic distillation of entanglement. The same property holds true in multi-level bipartite entanglement of higher Schmidt rank~\cite{johnston_2018}, genuine multipartite entanglement~\cite{contreras-tejada_2019}, as well as in the resource theory of multi-level quantum coherence~\cite{johnston_2018}. For other theories such as magic-state quantum computation, it is not always the case that $R^\FF_\FF(\phim) = F_\FF(\phim)^{-1}$~\cite{regula_2018}, but one can nevertheless find examples of states for which the bounds in \eqref{eq:err_bounds_1} coincide --- for example, the three-qubit Hoggar state, or the single-qutrit Strange and Norell states~\cite{takagi_2022-1}. For any such target state, \eqref{eq:exact_error} is satisfied. 
We also stress that Corollary~\ref{thm:prob_error_tight} gives an exact expression for the least achievable distillation error under resource--non-generating operations in \emph{any} affine resource theory.

\begin{proof}
The key property that we use is that $R_\FF(\phim) = F_\FF(\phim)^{-1}$~\cite[Theorem~4]{regula_2020}, which holds for any state $\phim$ that maximises the robustness in some dimension (noting that the input $\rho$ and the target $\phim$ can act on spaces of different dimension). If the resource theory is affine, Theorem~\ref{thm:RW_dist_achiev} immediately gives us our desired achievability. For non-affine theories, before we invoke Theorem~\ref{thm:RW_dist_achiev}, we additionally need to ensure that $R^\FF_\FF(\phim) = F_\FF(\phim)^{-1}$ in order to match the bound of Theorem~\ref{thm:prob_error_full}, which is why we require that $R_\FF(\phim) = R^\FF_\FF(\phim)$.
\end{proof}

We can also give an alternative interpretation of the above bounds. For a fixed $\ve > 0$, they can be understood as establishing a limitation on the resourcefulness of any state $\ket{\phi}$ that can be distilled from $\rho$ up to the given error.
\begin{boxed}{white}
\begin{corollary}
  If there exists a transformation $\rho \transf \tau$ such that $\tau$ is a state satisfying $F(\tau,\phi) \geq 1-\ve$ for some resourceful pure state $\phi$, then
\begin{equation}\begin{aligned}
  F_\FF(\phi)^{-1} \leq \frac{\ve}{1-\ve} \,\RW(\rho)\, + 1.
\end{aligned}\end{equation}
Conversely, let $\phim$ be as in Corollary~\ref{thm:prob_error_tight}. Then, for any $\ve$ such that
\begin{equation}\begin{aligned}
  F_\FF(\phim)^{-1} \leq \frac{\ve}{1-\ve} \,\RW(\rho)\, + 1,
\end{aligned}\end{equation}
there exists a resource--non-generating protocol $\rho \transf \tau$ where $\tau$ is a state satisfying $F(\tau,\phim) \geq 1-\ve$.
\end{corollary}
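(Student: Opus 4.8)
The plan is to obtain both halves of the statement as purely algebraic rearrangements of results already established: the first (impossibility) half from Theorem~\ref{thm:prob_error_full}, and the second (achievability) half from Corollary~\ref{thm:prob_error_tight} (equivalently Theorem~\ref{thm:RW_dist_achiev}). No new estimates are needed; the content is the elementary equivalence, valid for $F = F_\FF(\phi) \in (0,1)$, $R = \RW(\rho) \in [1,\infty)$ (recall $\RW(\rho)\geq 1$ by faithfulness, Theorem~\ref{thm:proj_properties}), and $\ve \in (0,1)$, between $\ve \geq \bigl( \tfrac{F}{1-F}\,R + 1 \bigr)^{-1}$ and $F^{-1} \leq \tfrac{\ve}{1-\ve}\,R + 1$. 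For the first half I would start from the conclusion of Theorem~\ref{thm:prob_error_full} under the same hypothesis on $\rho,\tau,\phi$, namely $\ve \geq \bigl( \tfrac{F_\FF(\phi)}{1-F_\FF(\phi)}\,\RW(\rho) + 1 \bigr)^{-1}$, invert both sides and subtract $1$ to get $\tfrac{1-\ve}{\ve} \leq \tfrac{F_\FF(\phi)}{1-F_\FF(\phi)}\,\RW(\rho)$, divide by $\RW(\rho)$, invert again, and add $1$; this produces exactly $F_\FF(\phi)^{-1} \leq \tfrac{\ve}{1-\ve}\,\RW(\rho) + 1$.

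For the second half, let $\phim$ be a pure state maximising $R_\FF$ in its dimension, with the resource theory either affine or satisfying $R_\FF(\phim) = R_\FF^\FF(\phim)$, so that Corollary~\ref{thm:prob_error_tight} applies and gives $\inf_{\E\in\OO}\bigl[1 - F(\tfrac{\E(\rho)}{\Tr\E(\rho)},\phim)\bigr] = \bigl( \tfrac{F_\FF(\phim)}{1-F_\FF(\phim)}\,\RW(\rho) + 1 \bigr)^{-1}$. Running the same chain of manipulations backwards shows that the hypothesis $F_\FF(\phim)^{-1} \leq \tfrac{\ve}{1-\ve}\,\RW(\rho) + 1$ is equivalent to $\ve \geq \bigl( \tfrac{F_\FF(\phim)}{1-F_\FF(\phim)}\,\RW(\rho) + 1 \bigr)^{-1}$, i.e.\ $\ve$ is at least that infimum. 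Hence there is a resource--non-generating protocol --- a single $\E\in\OO$, or a sequence realising $\rho\transf\tau$ in the closure sense --- whose (limiting) normalised output $\tau$ satisfies $1 - F(\tau,\phim) \leq \ve$, which is the assertion.

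It then remains to dispose of the degenerate values using the conventions fixed in the Preliminaries and in Theorem~\ref{thm:prob_error_full} ($0\infty = \infty$, $1/0 = \infty$): if $\RW(\rho)=\infty$ or $\ve\in\{0,1\}$ the inequality $F_\FF(\phi)^{-1}\leq \tfrac{\ve}{1-\ve}\RW(\rho)+1$ holds trivially with an infinite right-hand side (for $\ve=0$ using that $\RW(\rho)=\infty$ is forced by Theorem~\ref{thm:nogo_monotonicity}), while if $F_\FF(\phi)=0$ the claim reduces to $\RW(\rho)=\infty$, which Theorem~\ref{thm:prob_error_full} already guarantees whenever $\ve<1$. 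I do not anticipate any genuine obstacle; the only point requiring care is keeping the reciprocal manipulations legitimate across these boundary cases, which the stated conventions handle.
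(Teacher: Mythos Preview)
Your proposal is correct and takes essentially the same approach as the paper, which presents this corollary as an immediate algebraic rearrangement of Theorem~\ref{thm:prob_error_full} and Corollary~\ref{thm:prob_error_tight} without a separate proof. Your handling of the degenerate cases is more explicit than the paper's, which simply restricts attention to $\ve > 0$ in the surrounding discussion.
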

\end{boxed}
To understand this interpretation of the bound, it is helpful to consider the explicit example of the resource theory of entanglement. As mentioned previously, the maximally resourceful states $\ket{\Phi_m} \coloneqq \sum_{i=1}^m \frac{1}{\sqrt{m}} \ket{ii}$ satisfy $R^\FF_\FF(\Phi_m) = F_\FF(\Phi_m)^{-1} = m$, which means that the two directions of the above bound match, and we obtain that
\begin{equation}\begin{aligned}
  \sup \lset m \in \NN \bar \rho \transf \tau,\; F(\tau, \Phi_m) \geq 1-\ve \rset = \floor{ \frac{\ve}{1-\ve} \,\RW(\rho)\, + 1 },
\end{aligned}\end{equation}
with the $\floor{\cdot}$ needed because of the discrete character of the states $\{\Phi_m\}_m$. The above result, stated first in~\cite{regula_2022}, can be understood as the value of entanglement that can be distilled probabilistically up to error $\ve$ --- as we see, the projective robustness $\RW$ of the input state determines this value exactly.

\subsubsection*{The fidelity of an isotropic state cannot be improved}

The construction in the proof of Theorem~\ref{thm:RW_dist_achiev} allows us to show the following no-go result.

\begin{boxed}{white}
\begin{corollary}\label{cor:isotropic_nogo}
Let $\phi$ be a pure state such that $R_\FF(\phi) = F_\FF(\phi)^{-1} \eqqcolon \lambda$, for example any maximally resourceful state. Fix $\ve \in (0,1)$ and define $\rho_\ve$ as the state
\begin{equation}\begin{aligned}\label{eq:isotropic}
  \rho_\ve \coloneqq (1-\ve) \phi + \ve \frac{\lambda \sigma - \phi}{\lambda - 1},
\end{aligned}\end{equation}
where $\sigma \in \FF$ is a state such that $\phi \leq \lambda \sigma$.

Then, there does not exist any free transformation protocol --- probabilistic or deterministic --- such that $\rho_\ve \transf \tau$, where $\tau$ is any state satisfying $F(\tau, \phi) \geq 1-\ve' > 1-\ve$. In other words, the fidelity of $\rho_\ve$ with the state $\phi$ cannot be increased by any free operation.
\end{corollary}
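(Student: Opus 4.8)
The plan is to reduce the claim to the universal distillation-error bound of Theorem~\ref{thm:prob_error_full}, applied with input $\rho_\ve$ and target $\phi$: the point is that $\RW(\rho_\ve)$ turns out to be exactly small enough that Theorem~\ref{thm:prob_error_full} already forbids improving the fidelity past $1-\ve$. A convenient starting observation is that $\rho_\ve$ as defined in Eq.~\eqref{eq:isotropic} is literally the state called $\tau_\ve$ in the proof of Theorem~\ref{thm:RW_dist_achiev} (compare with Eq.~\eqref{eq:tauve_def}), so the operator estimates established there transfer verbatim.

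I would first bound $\RW(\rho_\ve)$. Writing $\rho_\ve = \bigl[(1-\ve)-\tfrac{\ve}{\lambda-1}\bigr]\phi + \tfrac{\ve\lambda}{\lambda-1}\sigma$ and working in the regime $\ve\le\tfrac{\lambda-1}{\lambda}$ --- equivalently the range in which the coefficient of $\phi$ is nonnegative, equivalently $F(\rho_\ve,\phi)=1-\ve\ge 1/\lambda=F_\FF(\phi)$, which is exactly where the statement is non-vacuous --- the inequalities $0\le\phi\le\lambda\sigma$ yield, just as in Eqs.~\eqref{eq:tauve_nontrivial}--\eqref{eq:twoomegas},
\[
  \frac{\ve\lambda}{\lambda-1}\,\sigma \;\le\; \rho_\ve \;\le\; (1-\ve)\lambda\,\sigma.
\]
Using $\sigma\in\FF$ as a feasible point in the definition of $\RW$ then gives $\RW(\rho_\ve)\le \dfrac{(1-\ve)(\lambda-1)}{\ve}<\infty$.

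Next I would feed this into the error bound. For any free transformation $\rho_\ve\transf\tau$ with $F(\tau,\phi)\ge 1-\ve'$, Theorem~\ref{thm:prob_error_full} together with $F_\FF(\phi)=1/\lambda$ gives
\[
  \ve' \;\ge\; \left(\frac{F_\FF(\phi)}{1-F_\FF(\phi)}\,\RW(\rho_\ve)+1\right)^{-1} = \left(\frac{\RW(\rho_\ve)}{\lambda-1}+1\right)^{-1} \;\ge\; \left(\frac{1-\ve}{\ve}+1\right)^{-1} = \ve,
\]
the last inequality using the bound of the previous paragraph. Hence $1-\ve'\le 1-\ve$, i.e.\ no protocol can exceed fidelity $1-\ve$, which is precisely the assertion. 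Because Theorem~\ref{thm:prob_error_full} is already stated for the most general transformation $\transf$, this covers deterministic and probabilistic protocols alike.

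I do not expect a genuine obstacle: the whole argument rests on the elementary operator sandwich of the first step --- the same one used for achievability in Theorem~\ref{thm:RW_dist_achiev} --- and $\rho_\ve$ is precisely a state that saturates the bound of Theorem~\ref{thm:prob_error_full} (hence `isotropic-like'). The only point needing care is the admissible range of $\ve$: for $\ve>\tfrac{\lambda-1}{\lambda}$ one has $F(\rho_\ve,\phi)<F_\FF(\phi)$, and the fidelity \emph{can} then be increased --- for instance by the resource--non-generating replacement channel $X\mapsto(\Tr X)\sigma$ --- so the statement is to be understood (and is cleanest when stated) under the restriction $\ve\le\tfrac{\lambda-1}{\lambda}$, equivalently the requirement that $\rho_\ve$ be genuinely resourceful.
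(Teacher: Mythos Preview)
Your proof is correct and follows essentially the same route as the paper: bound $\RW(\rho_\ve)$ from above via the operator sandwich $\frac{\ve\lambda}{\lambda-1}\sigma\le\rho_\ve\le(1-\ve)\lambda\sigma$ (lifted from the proof of Theorem~\ref{thm:RW_dist_achiev}), then invoke Theorem~\ref{thm:prob_error_full} to deduce $\ve'\ge\ve$. The paper presents this as a contradiction, you as a direct inequality chain, but the content is identical.

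Your additional remark about the range $\ve\le\frac{\lambda-1}{\lambda}$ is a genuine refinement that the paper glosses over. The bound $\RW(\rho_\ve)\le\frac{1-\ve}{\ve}(\lambda-1)$ in the proof of Theorem~\ref{thm:RW_dist_achiev} is only derived under this assumption, so the paper's proof implicitly relies on it too. Your replacement-channel counterexample for $\ve>\frac{\lambda-1}{\lambda}$ is correct: since $\phi$ and $\frac{\lambda\sigma-\phi}{\lambda-1}$ are orthogonal (as the paper itself remarks), $F(\rho_\ve,\phi)=1-\ve<1/\lambda=\<\phi,\sigma\>$ in that range, and the free map $X\mapsto(\Tr X)\sigma$ strictly increases the fidelity. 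So the Corollary as literally stated needs this restriction, and your caveat is warranted.
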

\end{boxed}
\begin{proof}
Following the proof of Theorem~\ref{thm:RW_dist_achiev}, we see that
\begin{equation}\begin{aligned}
  \RW(\rho_\ve) &\leq \frac{1-\ve}{\ve}(\lambda-1).
\end{aligned}\end{equation}
Now, if there existed a transformation $\rho_\ve \transf \tau$, then by Theorem~\ref{thm:prob_error_full} we would have
\begin{equation}\begin{aligned}
  \RW(\rho_\ve) &\geq \frac{(1-\ve')(1- F_\FF(\phi))}{\ve' \, F_\FF(\phi)}\\
  &= \frac{1-\ve'}{\ve'} (\lambda-1)\\
  &> \frac{1-\ve}{\ve}(\lambda-1),
\end{aligned}\end{equation}
which is a contradiction.
\end{proof}

The class of states $\rho_\ve$ constructed in~\eqref{eq:isotropic} includes in particular the isotropic states~\cite{horodecki_1999-1} encountered in the theories of entanglement and coherence, and indeed also more general `isotropic-like' constructions that can be found in other resource theories~\cite{takagi_2022-1}. Corollary~\ref{cor:isotropic_nogo} gives an explicit proof of a fact that we have already seen in Figure~\ref{fig:iso} --- namely, that the fidelity of entanglement distillation of an isotropic state cannot be improved --- and provides an extension of the results of Refs.~\cite{linden_1998,kent_1998} that showed an equivalent property for entanglement manipulation under LOCC.

States $\rho_\ve$ satisfying the conditions of the Corollary exist in any convex resource theory, since we can always take a maximally resourceful golden state $\phim$ of any dimension. 
But note that golden states are not the only ones for which it may hold that $R_\FF(\phi) = F_\FF(\phi)^{-1}$. For instance, in the resource theory of magic, the so-called Clifford magic states~\cite{bravyi_2019} also satisfy this property; this includes many relevant magic states of interest such as the $T$ state.

We remark that the assumptions on $\phi$ imply that $\phi$ and $\frac{\lambda \sigma - \phi}{\lambda - 1}$ are necessarily orthogonal to each other.


\section{Bounding probability and error trade-offs}\label{sec:tradeoff}

We have seen that the projective robustness can be used to establish general limitations on probabilistic resource transformations. However, since $\RW$ is monotonic under all probabilistic protocols, the restrictions obtained in this way can be considered too strong in some applications: one might instead be interested in understanding the least error achievable with a high probability of success, or in an exact quantification of the best achievable probability in a given transformation.
To address this, we consider a complementary approach for the characterisation of probabilistic transformations, with the explicit aim of tightly bounding the achievable performance in transforming two quantum states probabilistically under some constraints on the achievable probability or error. Although seemingly different, we will see that the approach is in fact closely related to the methods we used to study the projective robustness.

\subsection{General bounds on achievable probability}

We first aim to bound the maximal probability of success in the transformation between two states,
\begin{equation}\begin{aligned}
  P(\rho \transf \rho') \coloneqq \max \lset p \bar \E (\rho) = p \rho',\; \E \in \OO \rset.
\end{aligned}\end{equation}
As before, any protocol $\E$ here can be thought of as being a part of a larger instrument $\{\E_i\}$, with each $\E_i$ a free operation.

Our main idea will be to construct measure-and-prepare maps that can perform a given transformation $\rho \transf \rho'$, taking suitable care to exactly quantify the probability with which the transformation is realised. Specifically, we will employ a class of transformations that we already encountered in the proof of Theorems~\ref{thm:nogo_affine_full} and \ref{thm:nogo_sufficient_fulldim_full}, namely, maps of the form
\begin{equation}\begin{aligned}\label{eq:meas_prep}
   \E(X) = \< W, X \> \phi + \< Q, X \> \frac{\lambda\, \sigma - \phi}{\lambda - 1}
 \end{aligned}\end{equation}
 for some operators $W$ and $Q$.
Using such operations, we will be able to provide tight bounds on the achievable probability and fidelity of probabilistic transformations. Corresponding upper bounds will then be used to verify the tightness of the achievable bounds for relevant cases.

The methods studied here will be particularly suited to target states $\rho'$ that satisfy $\RW(\rho')=\infty$, such as resource distillation or conversion into states of low rank. Whether the approach can be extended also to other types of transformations, such as conversion between general states of full rank, remains an open question. 

To begin, we need to define another resource measure, the overlap
 \begin{equation}\begin{aligned}
   V_\FF(\rho) \coloneqq \max_{\sigma \in \FF}\, \< \Pi_\rho,  \sigma \>
 \end{aligned}\end{equation}
 with $\Pi_\rho$ denoting the projection onto $\supp \rho$. This quantity is a non-logarithmic variant of the min-relative entropy~\cite{datta_2009}, and for pure states it reduces to the quantity $F_\FF$ that we encountered before, that is $V_\FF(\phi) = F_\FF(\phi)$. The second measure that we will use is the standard robustness $R_\FF^\FF(\rho) = \min_{\sigma \in \FF} \Rmax^\FF (\rho \| \sigma)$ that we encountered before.

As a motivating example, define the map
\begin{equation}\begin{aligned}
   \E(X) \coloneqq \< \mu \Pi_\rho, X \> \rho' + \<\id - \Pi_\rho, X \> \frac{R^\FF_\FF(\rho')\, \sigma - \rho'}{R^\FF_\FF(\rho') - 1},
 \end{aligned}\end{equation}
 where $\sigma$ is a free state such that $\rho' \leq_\FF R^\FF_\FF(\rho')\, \sigma$, and $\mu$ is some suitable parameter. Notice that this is a valid probabilistic quantum operation for any $\mu \in [0,1]$. In particular, choosing $\mu = \frac{1}{R^\FF_\FF(\rho')-1}\frac{1 - V_\FF(\rho)}{V_\FF(\rho)}$ (or $\mu=1$, whichever is smaller) gives for any $\pi \in \FF$ with $\< \Pi_\rho, \pi \> \neq 0$ that
 \begin{equation}\begin{aligned}
   \E(\pi) &\propto\,  \rho' + \frac{1}{\mu} \,\frac{1 - \<\Pi_\rho,\pi\>}{\< \Pi_\rho, \pi\>} \,\frac{R^\FF_\FF(\rho')\sigma-\rho'}{R^\FF_\FF(\rho')-1}\\
   &\geq_\FF \rho' + \frac{1}{\mu} \,\frac{1 - V_\FF(\rho)}{V_\FF(\rho)} \,\frac{R^\FF_\FF(\rho')\sigma-\rho'}{R^\FF_\FF(\rho')-1}\\
   &\geq_\FF \rho' + R^\FF_\FF(\rho')\sigma - \rho'\\
   &\propto\, \sigma \in \FF
 \end{aligned}\end{equation}
where we used that $R^\FF_\FF(\rho')\sigma-\rho' \geq_\FF 0$ by definition of $R^\FF_\FF$, and we note that the third line is actually an equality when $\mu\neq 1$. In the case that $\< \Pi_\rho, \pi \> = 0$, we also get that $\E(\pi) \geq_\FF 0$ from the aforementioned fact that $R^\FF_\FF(\rho')\sigma-\rho' \geq_\FF 0$.  This means that $\E \in \OO$, implying that
\begin{equation}\begin{aligned}\label{eq:Pbound_lower}
  P(\rho \transf \rho') \geq \min \left\{ 1, \, \frac{1}{R^\FF_\FF(\rho')-1} \frac{1 - V_\FF(\rho)}{V_\FF(\rho)} \right\}.
\end{aligned}\end{equation}
Note that the quantifier $V_\FF(\rho)$ satisfies $V_\FF(\rho) = 1$ if and only if there exists a free state $\sigma$ such that $\supp \sigma \subseteq \supp \rho$; this is equivalent to $W_\FF(\rho) > 0$.
This provides an alternative way to prove a result that we have already seen in Lemma~\ref{lem:achiev_omega}: for any state $\rho$ s.t.\ $W_\FF(\rho)=0$ and any target $\rho' \notin \FF$ such that $R^\FF_\FF(\rho')<\infty$, it holds that $P(\rho \transf \rho') > 0$, and so any such state $\rho'$ is reachable with some probability. The bound of Eq.~\eqref{eq:Pbound_lower} was recently shown in other works for the case of pure-state inputs $\rho = \proj\psi$ (which always have vanishing $W_\FF$), first in the resource theory of entanglement~\cite{contreras-tejada_2019} and later extended to multilevel quantum coherence~\cite{zhang_2021}.

To approach this problem more generally, let us consider the optimisation program
 \begin{align}
   \HP(\rho \gbar t)\coloneqq \max \Big\{ \<\rho, W \>& \;\Big|\; 0 \leq W \leq Z \leq \id,\nonumber \\
   &\< W, \rho \> = \< Z, \rho \>,\\
   & \<W, \sigma \> \leq \frac{1}{t} \< Z, \sigma \> \; \forall \sigma\in\F \Big\},\nonumber
 \end{align}
 where $t \in \RR_+ \cup \{+\infty\}$ is some chosen parameter. The problem is clearly feasible for any $t$ (take $W=Z=0$) and the feasible sets are both compact, justifying the use of $\max$ instead of $\sup$. Once again, in many theories of interest this quantity can be computed as an SDP. A quantity of this form previously appeared in~\cite{rozpedek_2018} for the resource theory of entanglement and in~\cite{fang_2018} for the case of quantum coherence, and a related optimisation problem was used in~\cite{buscemi_2017} in the characterisation of transformations of pairs of quantum states. We then have the following.

 \begin{boxed}{white}
 \begin{theorem}\label{thm:Hbound}
 In any convex resource theory, the probability of transforming $\rho$ into another state $\rho'$ with resource--non-generating operations is bounded as
 \begin{equation}\begin{aligned}\label{eq:H_bounds}
   \HP\!\left(\rho\gbar V_\FF(\rho')^{-1} \right) \geq P(\rho \transf \rho') \geq \HP\!\left(\rho\gbar R^\FF_\FF(\rho') \right).
 \end{aligned}\end{equation}
 \end{theorem}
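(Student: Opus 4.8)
The plan is to prove both inequalities by constructing measure-and-prepare channels of the form~\eqref{eq:meas_prep}, exactly as in the proofs of Theorems~\ref{thm:nogo_affine_full} and~\ref{thm:nogo_sufficient_fulldim_full}: for the lower bound the channel is built \emph{from} an optimal feasible point of $\HP(\rho\gbar R^\FF_\FF(\rho'))$, and for the upper bound the construction is run in reverse, feeding the adjoint of an arbitrary free operation back \emph{into} the program $\HP(\rho\gbar V_\FF(\rho')^{-1})$.

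\textbf{Lower bound.} Assume $\rho'\notin\FF$ and $\lambda'\coloneqq R^\FF_\FF(\rho')<\infty$ (if $\rho'\in\FF$ then $R^\FF_\FF(\rho')=V_\FF(\rho')^{-1}=1$ and $P(\rho\transf\rho')=\HP(\rho\gbar1)=1$, and the case $\lambda'=\infty$ is handled below), and pick $\sigma\in\FF$ with $\rho'\leq_\FF\lambda'\sigma$. For any pair $(W,Z)$ feasible for $\HP(\rho\gbar\lambda')$ I would define
\begin{equation}
 \E(X)\coloneqq\langle W,X\rangle\,\rho'+\langle Z-W,X\rangle\,\frac{\lambda'\sigma-\rho'}{\lambda'-1},
\end{equation}
and verify: it is completely positive, since its Choi operator $W\otimes\rho'+(Z-W)\otimes\tfrac{\lambda'\sigma-\rho'}{\lambda'-1}$ is positive using $0\leq W\leq Z$ and $\lambda'\sigma-\rho'\in\cone(\FF)\subseteq\HH_+$; it is trace non-increasing, since $\Tr\E(\omega)=\langle Z,\omega\rangle\leq1$ for every state $\omega$ by $Z\leq\id$; it is resource non-generating, because for $\pi\in\FF$ one regroups $\E(\pi)=\bigl(\tfrac{\langle Z-W,\pi\rangle}{\lambda'-1}-\langle W,\pi\rangle\bigr)(\lambda'\sigma-\rho')+\lambda'\langle W,\pi\rangle\,\sigma$, whose coefficients are both non-negative — the first precisely by the $\HP$-constraint $\langle W,\pi\rangle\leq\tfrac1{\lambda'}\langle Z,\pi\rangle$ with $t=\lambda'$ — so $\E(\pi)\in\cone(\FF)$; and finally the equality constraint $\langle W,\rho\rangle=\langle Z,\rho\rangle$ annihilates the second term on input $\rho$, giving $\E(\rho)=\langle W,\rho\rangle\,\rho'$. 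Hence $P(\rho\transf\rho')\geq\langle W,\rho\rangle$, and taking the maximum over feasible $(W,Z)$ yields $P(\rho\transf\rho')\geq\HP(\rho\gbar R^\FF_\FF(\rho'))$. When $\lambda'=\infty$ the $\HP$-constraint forces $\langle W,\sigma\rangle=0$ for all $\sigma\in\FF$, and the simpler map $\E(X)=\langle W,X\rangle\rho'$, which is free since $\E(\sigma)=0\in\cone(\FF)$, already realises $\E(\rho)=\langle W,\rho\rangle\rho'$.

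\textbf{Upper bound.} Given any $\E\in\OO$ with $\E(\rho)=p\,\rho'$, I would put $W\coloneqq\E^\dagger(\Pi_{\rho'})$ and $Z\coloneqq\E^\dagger(\id)$, with $\E^\dagger$ the Hilbert--Schmidt adjoint. Positivity of $\E^\dagger$ gives $W,Z\geq0$ and $Z-W=\E^\dagger(\id-\Pi_{\rho'})\geq0$, while trace-non-increase of $\E$ is equivalent to $Z=\E^\dagger(\id)\leq\id$. Both pairings with $\rho$ equal $p$: $\langle W,\rho\rangle=\langle\Pi_{\rho'},\E(\rho)\rangle=p\langle\Pi_{\rho'},\rho'\rangle=p$ and $\langle Z,\rho\rangle=\Tr\E(\rho)=p$, so the equality constraint holds; and writing $\E(\sigma)=q_\sigma\sigma''$ with $\sigma''\in\FF$ and $q_\sigma=\Tr\E(\sigma)=\langle Z,\sigma\rangle$, one gets $\langle W,\sigma\rangle=q_\sigma\langle\Pi_{\rho'},\sigma''\rangle\leq q_\sigma\,V_\FF(\rho')=V_\FF(\rho')\langle Z,\sigma\rangle$, which is exactly $\langle W,\sigma\rangle\leq\tfrac1t\langle Z,\sigma\rangle$ with $t=V_\FF(\rho')^{-1}$. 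Thus $(W,Z)$ is feasible for $\HP(\rho\gbar V_\FF(\rho')^{-1})$ with objective value $p$, so $\HP(\rho\gbar V_\FF(\rho')^{-1})\geq p$, and taking the supremum over all admissible $\E$ gives $\HP(\rho\gbar V_\FF(\rho')^{-1})\geq P(\rho\transf\rho')$.

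\textbf{Main obstacle.} The complete-positivity, trace-non-increase, and adjoint-positivity steps are routine. The part that needs genuine care is the resource-non-generation check in the lower-bound construction — in particular verifying that the parameter $t=R^\FF_\FF(\rho')$ fed into $\HP$ is precisely the threshold that forces $\E(\pi)\in\cone(\FF)$ for every free $\pi$ — together with cleanly dispatching the degenerate cases $R^\FF_\FF(\rho')=1$ (that is, $\rho'\in\FF$) and $R^\FF_\FF(\rho')=\infty$ (when no finite pair $(\lambda',\sigma)$ with $\rho'\leq_\FF\lambda'\sigma$ exists), which fall outside the generic formula and must be treated on their own.
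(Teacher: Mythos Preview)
Your proof is correct and follows essentially the same approach as the paper: the upper bound via $W=\E^\dagger(\Pi_{\rho'})$, $Z=\E^\dagger(\id)$, and the lower bound via the measure-and-prepare map $\E(X)=\langle W,X\rangle\rho'+\langle Z-W,X\rangle\tfrac{\lambda'\sigma-\rho'}{\lambda'-1}$ are exactly the constructions the paper uses (Propositions~\ref{prop:Hbound_upper} and~\ref{prop:Hbound_lower}). Your algebraic regrouping for the resource--non-generation check is a slight variant of the paper's $\geq_\FF$ chain but leads to the same conclusion, and your explicit treatment of the degenerate cases $\rho'\in\FF$ and $R^\FF_\FF(\rho')=\infty$ is in fact a bit more careful than the paper's.
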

 \end{boxed}
The optimisation problem $\HP(\rho, t)$ can therefore be used to provide both lower and upper bounds on achievable probabilities. We will give a proof of this result shortly.

Note that the lower bound in Theorem~\ref{thm:Hbound} above relies on the standard robustness $R^\FF_\FF$. An issue which arises here is that this quantity is not guaranteed to be finite in all resource theories, and in particular it is \emph{never} finite for resourceful states in affine theories, trivialising Theorem~\ref{thm:Hbound}. This suggests that another approach is needed for such resources. Indeed, we consider a modified optimisation problem,
 \begin{align}\label{eq:HPaff}
   \HP^{\aff}(\rho \gbar t)\coloneqq \max \Big\{ \<\rho, W \>& \;\Big|\; 0 \leq W \leq Z \leq \id,\nonumber \\
   &\< W, \rho \> = \< Z, \rho \>,\\
   & \<W, \sigma \> = \frac1t \< Z, \sigma \> \; \forall \sigma\in\F \Big\}.\nonumber
 \end{align}
Note that the only difference between $\HP$ and $\HP^{\aff}$ is the equality in the last line of~\eqref{eq:HPaff}; similarly to the case of $\Omega_{\aff(\FF)}$ encountered in the proof of Theorem~\ref{thm:nogo_affine_full}, $\HP^{\aff}(\rho, t)$ can be understood as $\HP(\rho, t)$ where the optimisation over the set $\FF$ is replaced with the affine hull $\aff(\FF)$. Using this modification, we will be able to obtain a bound that uses the robustness $R_\FF$ instead of the standard robustness $R^\FF_\FF$, making it applicable to resources where  $R^\FF_\FF$ diverges.
 \begin{boxed}{white}
 \begin{theorem}\label{thm:Hbound_aff}
 In any convex resource theory, the probability of transforming $\rho$ into another state $\rho'$ with resource--non-generating operations is bounded as
 \begin{equation}\begin{aligned}
   \HP^{\aff}\!\left(\rho\gbar V_{\aff(\FF)}(\rho')^{-1} \right) \geq P(\rho \transf \rho') \geq \HP^{\aff}\!\left(\rho\gbar R_\FF(\rho') \right)\!.
 \end{aligned}\end{equation}
 \end{theorem}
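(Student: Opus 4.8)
The plan is to transcribe the proof of Theorem~\ref{thm:Hbound} almost verbatim, replacing its one-sided free-cone ($\leq_\FF$) conditions with the two-sided affine (equality) conditions built into $\HP^{\aff}$; the upshot is that the standard robustness $R^\FF_\FF$ is everywhere swapped for the ordinary robustness $R_\FF$, in the same way that the affine relaxation turned $R^\FF_\FF$-type constraints into $R_\FF$-type ones in Lemma~\ref{lem:affine}.

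First, for the lower bound, I would assume $\rho'\notin\FF$ (otherwise both sides equal $1$), set $\lambda := R_\FF(\rho')$, fix a free state $\sigma$ with $\rho'\leq\lambda\sigma$, and take $(W,Z)$ optimal for $\HP^{\aff}(\rho\gbar\lambda)$. The candidate operation is the measure-and-prepare map of the form~\eqref{eq:meas_prep}, $\E(X):=\<W,X\>\,\rho'+\<Z-W,X\>\,\tfrac{\lambda\sigma-\rho'}{\lambda-1}$. Complete positivity is clear from Choi positivity (all of $W$, $Z-W$, $\rho'$, $\lambda\sigma-\rho'$ are positive semidefinite); trace non-increase follows from $\Tr\E(X)=\<Z,X\>$ and $Z\leq\id$; and the equality $\<W,\rho\>=\<Z,\rho\>$ forces $\E(\rho)=\<W,\rho\>\,\rho'$, so $\E$ realises $\rho\transf\rho'$ with probability $\<W,\rho\>=\HP^{\aff}(\rho\gbar R_\FF(\rho'))$. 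The one new ingredient is resource non-generation: for $\pi\in\FF$ the \emph{equality} constraint $\<W,\pi\>=\tfrac{1}{\lambda}\<Z,\pi\>$ of $\HP^{\aff}$ yields $\E(\pi)=\<Z,\pi\>\,\sigma\in\cone(\FF)$, and this goes through even though $\lambda\sigma-\rho'$ need not belong to $\cone(\FF)$ --- which is exactly why the affine equality constraint, rather than the $\leq_\FF$ inequality used with $R^\FF_\FF$ in Theorem~\ref{thm:Hbound}, is the correct constraint here. (The degenerate cases $\rho'\in\FF$ and $R_\FF(\rho')=\infty$ are handled as usual, the latter by an approximating sequence $\rho'\leq\lambda_n\sigma_n$ with $\lambda_n\to\infty$.)

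For the upper bound, I would take any $\E\in\OO$ with $\E(\rho)=p\rho'$, $p=P(\rho\transf\rho')$, and pass to the adjoint: put $W:=\E^\dagger(\Pi_{\rho'})$ and $Z:=\E^\dagger(\id)$. Then $0\leq W\leq Z$ by positivity of $\E^\dagger$, $Z\leq\id$ by trace non-increase of $\E$, and $\<W,\rho\>=\<\Pi_{\rho'},\E(\rho)\>=p=\Tr\E(\rho)=\<Z,\rho\>$, so $(W,Z)$ already meets every constraint of $\HP^{\aff}$ except the free-state equality, and its objective value is $p$. For $\sigma\in\FF$ write $\E(\sigma)=q_\sigma\sigma'$ with $\sigma'\in\FF$; then $\<W,\sigma\>=q_\sigma\<\Pi_{\rho'},\sigma'\>$ and $\<Z,\sigma\>=q_\sigma$, so the missing constraint reduces to the claim $\<\Pi_{\rho'},\sigma'\>=\Qmaxa(\rho')$ for every free state $\sigma'$ arising as a normalised image of $\E$, which would give $P(\rho\transf\rho')\leq\HP^{\aff}(\rho\gbar\Qmaxa(\rho')^{-1})$.

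The hard part will be precisely this last equality: in Theorem~\ref{thm:Hbound} only the inequality $\<\Pi_{\rho'},\sigma'\>\leq\Qmax(\rho')$ is needed and it is immediate, whereas here I must use that $\FF=\aff(\FF)\cap\DD$ --- in the same spirit as the dual-cone rigidity exploited in Lemma~\ref{lem:affine} --- to conclude that the overlap $\<\Pi_{\rho'},\cdot\>$ is constant along $\aff(\FF)$ (equivalently, on all of $\FF$) with value $\Qmaxa(\rho')$ whenever that value is finite, and to dispose by a limiting argument of the cases where $\Qmaxa(\rho')$ is infinite. Handling this matching, and the associated bookkeeping of degenerate cases, is the only nontrivial point; everything else is a direct transcription of the proof of Theorem~\ref{thm:Hbound}.
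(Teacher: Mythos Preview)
Your lower bound argument is correct and matches the paper's proof (Proposition~16); in fact your direct computation $\E(\pi)=\<Z,\pi\>\sigma$ is slightly cleaner than the paper's ``analogous argument'' remark.

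The upper bound, however, rests on a misconception. You identify the ``hard part'' as showing that $\<\Pi_{\rho'},\sigma'\>=\Qmaxa(\rho')$ for every $\sigma'\in\FF$, and you propose to deduce this from the assumption $\FF=\aff(\FF)\cap\DD$. But the theorem is stated for \emph{any} convex resource theory, not only affine ones, so you cannot invoke that hypothesis; and even in an affine theory, affineness of $\FF$ does not force $\<\Pi_{\rho'},\cdot\>$ to be constant on $\FF$ (take coherence in dimension $3$ with $\rho'$ supported on the span of $\ket{0},\ket{1}$: then $\<\Pi_{\rho'},\proj{0}\>=1$ while $\<\Pi_{\rho'},\proj{2}\>=0$).

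What you are missing is that there is nothing to prove here: look at the \emph{definition} of $\Qmaxa(\rho')$ given just after the theorem. It equals a finite constant $c$ precisely when $\<\Pi_{\rho'},\sigma\>=c$ for all $\sigma\in\FF$, and is $\infty$ otherwise. So either $\Qmaxa(\rho')<\infty$, in which case the equality $\<\Pi_{\rho'},\sigma'\>=\Qmaxa(\rho')$ holds for every $\sigma'\in\FF$ \emph{by definition} and your feasible pair $(W,Z)=(\E^\dagger(\Pi_{\rho'}),\E^\dagger(\id))$ immediately satisfies the equality constraint of $\HP^{\aff}$; or $\Qmaxa(\rho')=\infty$, in which case the upper bound is read as the trivial $P(\rho\transf\rho')\leq 1$. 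No limiting argument, no affineness of the theory, and no ``dual-cone rigidity'' are needed.
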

 \end{boxed}
 Here, we defined
\begin{equation}\begin{aligned}
  \Qmaxa(\rho') \coloneqq& \sup_{X \in \aff(\FF)} \< \Pi_{\rho'}, X \>\\
=& \begin{cases} c & \text{ if } \< \Pi_{\rho'}, \sigma \> = c \; \forall \sigma \in \FF\\
\infty & \text{ otherwise},\end{cases}
\end{aligned}\end{equation}
where the second line can be seen as in the proof of Lemma~\ref{lem:affine}. 
The bounds of this Theorem are useful particularly for affine resource theories, as they will typically trivialise when the theory is not affine.

We will find it convenient to split the proofs of Theorems~\ref{thm:Hbound} and \ref{thm:Hbound_aff} into two Propositions that establish upper and lower bounds.
\begin{proposition}\label{prop:Hbound_upper}
  For any states $\rho, \rho'$, it holds that
  \begin{equation}\begin{aligned}
    P(\rho \to \rho') &\leq \HP\left(\rho \gbar \Qmax(\rho')^{-1}\right),\\
    P(\rho \to \rho') &\leq \HP^{\aff}\left(\rho \gbar \Qmaxa(\rho')^{-1}\right).
  \end{aligned}\end{equation}
\end{proposition}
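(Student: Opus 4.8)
The plan is to turn any free operation realising $\rho\transf\rho'$ with probability $p$ into a feasible pair of operators for the program on the right-hand side whose objective value is exactly $p$; taking $p=P(\rho\transf\rho')$ — or a supremum over attained probabilities, should the maximum defining $P$ fail to be attained — then yields the claimed inequalities. The case $P(\rho\transf\rho')=0$ is immediate, since $W=Z=0$ is always feasible and $\HP(\rho\gbar t)\ge 0$.

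So suppose $\E\in\OO$ with $\E(\rho)=p\,\rho'$, and fix a Kraus decomposition $\E(\cdot)=\sum_k K_k(\cdot)K_k^\dagger$ (equivalently one may work with the Choi operator, as in the proof of Theorem~\ref{thm:nogo_affine_full}). I would set
\[
  Z \coloneqq \sum_k K_k^\dagger K_k, \qquad W \coloneqq \sum_k K_k^\dagger \Pi_{\rho'} K_k,
\]
where $\Pi_{\rho'}$ is the projection onto $\supp\rho'$, so that $\<Z,X\>=\Tr\E(X)$ and $\<W,X\>=\Tr[\Pi_{\rho'}\E(X)]$ for every $X$. Then $W\ge0$ and $Z-W\ge0$ (using $\id-\Pi_{\rho'}\ge0$), while $Z\le\id$ follows from $\E$ being trace non-increasing; hence $0\le W\le Z\le\id$. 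Evaluating at $\rho$, the identity $\Pi_{\rho'}\rho'=\rho'$ gives $\<W,\rho\>=p\Tr[\Pi_{\rho'}\rho']=p=\Tr(p\rho')=\<Z,\rho\>$, so the equality constraint $\<W,\rho\>=\<Z,\rho\>$ holds and this pair attains objective value $\<\rho,W\>=p$. Finally, for $\sigma\in\FF$ resource--non-generation gives $\E(\sigma)=p_\sigma\sigma'$ with $\sigma'\in\FF$, $p_\sigma\in[0,1]$, whence $\<Z,\sigma\>=p_\sigma$ and $\<W,\sigma\>=p_\sigma\<\Pi_{\rho'},\sigma'\>$. Using $\<\Pi_{\rho'},\sigma'\>\le\max_{\pi\in\FF}\<\Pi_{\rho'},\pi\>=\Qmax(\rho')$ yields $\<W,\sigma\>\le\Qmax(\rho')\<Z,\sigma\>$, i.e.\ $(W,Z)$ is feasible for $\HP(\rho\gbar\Qmax(\rho')^{-1})$, proving the first bound. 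For the second, when $\Qmaxa(\rho')<\infty$ the functional $\<\Pi_{\rho'},\cdot\>$ is constant and equal to $\Qmaxa(\rho')$ on $\aff(\FF)\supseteq\FF$ (as in the proof of Lemma~\ref{lem:affine}), so the preceding estimate becomes an equality $\<W,\sigma\>=\Qmaxa(\rho')\<Z,\sigma\>$ and $(W,Z)$ is feasible for $\HP^{\aff}(\rho\gbar\Qmaxa(\rho')^{-1})$; when $\Qmaxa(\rho')=\infty$ the bound is vacuous under the conventions of Section~\ref{sec:prelim}.

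I do not anticipate a serious obstacle: the argument is a direct construction of witnesses, and the only points needing care are (i) choosing the right $W,Z$ — the key being that $\Pi_{\rho'}$ captures the entire output $\E(\rho)=p\rho'$, which is what forces $\<W,\rho\>=\<Z,\rho\>=p$ rather than merely $\le$; (ii) correctly reading off the constant value of $\<\Pi_{\rho'},\cdot\>$ on $\aff(\FF)$ in the affine case; and (iii) the bookkeeping around attainability of $P(\rho\transf\rho')$ and the degenerate value $\Qmaxa(\rho')=\infty$, both of which are handled by a supremum/limiting argument together with the arithmetic conventions fixed earlier.
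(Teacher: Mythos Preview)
Your proof is correct and essentially identical to the paper's: your operators $Z=\sum_k K_k^\dagger K_k$ and $W=\sum_k K_k^\dagger\Pi_{\rho'}K_k$ are exactly $\E^\dagger(\id)$ and $\E^\dagger(\Pi_{\rho'})$, which is what the paper uses (written in adjoint-map notation rather than via an explicit Kraus decomposition). The verification of the constraints and the handling of the affine case match as well.
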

\begin{proof}
Assume that there exists a probabilistic free operation such that $\E(\rho) = p \rho'$ for some $p \in (0,1]$. Then
\begin{equation}\begin{aligned}
  p = \< \Pi_{\rho'}, \E(\rho) \> = \< \E^\dagger(\Pi_{\rho'}), \rho \>,
\end{aligned}\end{equation}
where $\E^\dagger$ is the adjoint map of $\E$. Using that $\E^\dagger(\id) \leq \id$ since $\E$ is positive and trace--non-increasing, we have $\E^\dagger(\Pi_{\rho'}) \leq \E^\dagger(\id) \leq \id$. Additionally, we can notice that $\<\E^\dagger(\id - \Pi_{\rho'}), \rho \> = 0$. Now, for any $\sigma \in \F$, it holds that
\begin{equation}\begin{aligned}
  \< \E^\dagger(\Pi_{\rho'}), \sigma \> &= \< \Pi_{\rho'}, \E(\sigma) \>\\
  &= \Tr\E(\sigma) \< \Pi_{\rho'}, \frac{\E(\sigma)}{\Tr\E(\sigma)} \>\\
  & \leq \Tr\E(\sigma) \Qmax(\rho').
\end{aligned}\end{equation}
Together with $\< \E^\dagger(\id), \sigma \> = \Tr\E(\sigma)$, this gives
\begin{equation}\begin{aligned}
  \< \E^\dagger(\Pi_{\rho'}), \sigma \> \leq \Qmax(\rho') \< \E^\dagger(\id), \sigma \>,
\end{aligned}\end{equation}
which means that the operators $\E^\dagger(\Pi_{\rho'})$ and $\E^\dagger(\id)$ are feasible solutions to the optimisation problem $\displaystyle \HP\left(\rho\gbar \Qmax(\rho')^{-1}\right)$ with optimal value $p$, which is the stated result.

An analogous argument holds for $\HP^{\aff}$. If $\Qmaxa(\rho') < \infty$, then
\begin{equation}\begin{aligned}
 \< \E^\dagger(\Pi_{\rho'}), \sigma \> &= \Tr\E(\sigma) \Qmaxa(\rho'),\\
\< \E^\dagger(\id), \sigma \> &= \Tr\E(\sigma)
\end{aligned}\end{equation}
for all $\sigma \in \FF$ due to the definition of $\Qmaxa$, so we can use these operators as feasible solutions for $\HP^{\aff}$. In the case that $\Qmaxa(\rho')$ diverges, we can understand this result as the trivial bound $P(\rho \to \rho') \leq 1$.
\end{proof}

\begin{proposition}\label{prop:Hbound_lower}
  For any states $\rho, \rho'$, it holds that
  \begin{equation}\begin{aligned}
    P(\rho \to \rho') &\geq \HP\left(\rho \gbar R^\FF_\FF(\rho') \right),\\
    P(\rho \to \rho') &\geq \HP^{\aff}\left(\rho \gbar R_\FF(\rho') \right).
  \end{aligned}\end{equation}
\end{proposition}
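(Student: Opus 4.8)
The plan is to turn any feasible pair of operators $(W,Z)$ for the optimisation program $\HP(\rho \gbar R^\FF_\FF(\rho'))$ into an explicit resource--non-generating operation realising $\rho \transf \rho'$ with probability equal to the objective value $\langle \rho, W\rangle$. Concretely, following the construction in \eqref{eq:meas_prep}, I would set
\begin{equation}\begin{aligned}
  \E(X) \coloneqq \< W, X \> \rho' + \< Z - W, X \> \frac{\lambda\, \sigma - \rho'}{\lambda - 1},
\end{aligned}\end{equation}
where $\lambda \coloneqq R^\FF_\FF(\rho')$ and $\sigma \in \FF$ is chosen so that $\rho' \leq_\FF \lambda \sigma$, i.e.\ $\lambda \sigma - \rho' \in \cone(\FF)$ (such a $\sigma$ exists by definition of $R^\FF_\FF$; if $R^\FF_\FF(\rho') = \infty$ the bound is trivial since $\HP$ is monotone in $t$ and $\HP(\rho\gbar\infty)=0$ unless one can still act freely, but in any case the statement reduces to $P \geq 0$). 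The first thing to check is complete positivity: the Choi operator of $\E$ is $W \otimes \rho' + (Z-W)\otimes \tfrac{\lambda\sigma-\rho'}{\lambda-1}$, which is a sum of tensor products of positive semidefinite operators since $W \geq 0$, $Z - W \geq 0$, $\rho' \geq 0$, and $\lambda\sigma - \rho' \in \cone(\FF) \subseteq \HH_+$; hence $\E$ is CP. Trace--non-increasingness follows because $\E^\dagger(\id) = \tfrac{\lambda}{\lambda-1}\Tr(\sigma)\cdot(\text{stuff})$ — more carefully, $\Tr \E(X) = \langle W,X\rangle \Tr\rho' + \langle Z-W,X\rangle$, and using $W \leq Z \leq \id$ one bounds this by $\Tr X$; I would verify this inequality directly, possibly after noting $\Tr\rho'=1$ so that $\Tr\E(X) = \langle Z,X\rangle \leq \langle \id, X\rangle = \Tr X$.

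Next I would check the two defining properties. For the action on $\rho$: since $\langle W,\rho\rangle = \langle Z,\rho\rangle$ by the equality constraint, the second term vanishes and $\E(\rho) = \langle W,\rho\rangle\, \rho'$, so the transformation succeeds with probability exactly $p = \langle W,\rho\rangle$. For resource non-generation: for any $\sigma_0 \in \FF$,
\begin{equation}\begin{aligned}
  \E(\sigma_0) = \< W, \sigma_0 \> \rho' + \< Z - W, \sigma_0 \> \frac{\lambda \sigma - \rho'}{\lambda - 1},
\end{aligned}\end{equation}
and I want this to lie in $\cone(\FF)$. Writing it as $\langle Z-W,\sigma_0\rangle \tfrac{\lambda\sigma}{\lambda-1} + \big(\langle W,\sigma_0\rangle - \tfrac{\langle Z-W,\sigma_0\rangle}{\lambda-1}\big)\rho'$, the constraint $\langle W,\sigma_0\rangle \leq \tfrac1\lambda \langle Z,\sigma_0\rangle$ is exactly what forces the coefficient of $\rho'$ to be $\leq_\FF$-dominated: one checks $\langle W,\sigma_0\rangle(\lambda-1) \leq \langle Z-W,\sigma_0\rangle$ rearranges to $\lambda\langle W,\sigma_0\rangle \leq \langle Z,\sigma_0\rangle$, which holds. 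Then $\E(\sigma_0) = \langle Z-W,\sigma_0\rangle\tfrac{\lambda\sigma}{\lambda-1} - c\,\rho'$ with $c \geq 0$, and since $\lambda\sigma - \rho' \in \cone(\FF)$ and $\rho' = \lambda\sigma - (\lambda\sigma-\rho')$, a short manipulation shows $\E(\sigma_0) \in \cone(\FF)$ (this is the same bookkeeping as in the motivating example before Theorem~\ref{thm:Hbound} and in the proof of Theorem~\ref{thm:nogo_sufficient_fulldim_full}). Hence $\E \in \OO$, giving $P(\rho\to\rho') \geq p$; maximising over feasible $(W,Z)$ yields $P(\rho\to\rho') \geq \HP(\rho\gbar R^\FF_\FF(\rho'))$.

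For the affine variant, the construction is identical with $\lambda = R_\FF(\rho')$ and $\sigma$ a free state with $\rho' \leq \lambda\sigma$ in the ordinary positive semidefinite sense (so $\lambda\sigma-\rho' \geq 0$); complete positivity and trace--non-increase go through verbatim. The only change is in the free-preservation check: now one uses the \emph{equality} constraint $\langle W,\sigma_0\rangle = \tfrac1\lambda\langle Z,\sigma_0\rangle$ of $\HP^{\aff}$, which makes the coefficient of $\rho'$ in $\E(\sigma_0)$ vanish identically, leaving $\E(\sigma_0) \propto \sigma \in \FF$; thus $\E$ is free in the strong sense that it maps every free state to a multiple of the fixed free state $\sigma$. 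I expect the main obstacle to be purely organisational rather than conceptual: one must be careful that the equality constraint $\langle W,\rho\rangle=\langle Z,\rho\rangle$ really does kill the "garbage" term on the input $\rho$ (this is the whole point of that constraint), and that the inequality-vs-equality distinction between $\HP$ and $\HP^{\aff}$ is precisely matched to $R^\FF_\FF$ (inequality w.r.t.\ $\cone(\FF)$) versus $R_\FF$ (inequality w.r.t.\ $\HH_+$, which behaves like an affine/equality constraint after dualising, cf.\ Lemma~\ref{lem:affine}). I would also handle the degenerate cases $\rho' \in \FF$ (where $R^\FF_\FF(\rho')=R_\FF(\rho')=1$ and the denominator $\lambda-1$ vanishes — but then the transformation is trivial) and $\langle W,\rho\rangle = 0$ (vacuous) separately in a sentence.
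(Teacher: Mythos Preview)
Your construction is exactly the paper's: the same measure-and-prepare map $\E(X)=\langle W,X\rangle\rho'+\langle Z-W,X\rangle\frac{\lambda\sigma-\rho'}{\lambda-1}$, the same use of $\langle W,\rho\rangle=\langle Z,\rho\rangle$ to kill the garbage term, and the same reduction $\langle Z-W,\sigma_0\rangle\geq(\lambda-1)\langle W,\sigma_0\rangle$ to get $\E(\sigma_0)\geq_\FF\langle W,\sigma_0\rangle\lambda\sigma\in\cone(\FF)$. The affine variant is also handled identically.

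One genuine slip: your treatment of $R^\FF_\FF(\rho')=\infty$ is wrong. You claim $\HP(\rho\gbar\infty)=0$ and that the bound becomes $P\geq 0$, but this is false. When $t=\infty$ the constraint reads $\langle W,\sigma\rangle=0$ for all $\sigma\in\FF$, and $\HP(\rho\gbar\infty)$ can be strictly positive (take any $W$ supported on the orthogonal complement of all free states and overlapping $\rho$). The paper handles this case by simply taking $\E(X)\coloneqq\langle W,X\rangle\rho'$, which is free because $\E(\sigma)=0$ for every $\sigma\in\FF$. You should replace your parenthetical with this one-line construction. The degenerate case $\rho'\in\FF$ (so $\lambda=1$) that you flag is indeed handled by a trivial preparation.
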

\begin{proof}
Consider the case of $\HP$ first, and let $W, Z$ be optimal operators in the definition of $\HP\left(\rho \gbar R^\FF_\FF(\rho')\right)$. If $R^\FF_\FF(\rho') = \infty$, then $\<W, \sigma\> = 0 \; \forall \sigma \in \FF$ and the map $\E(X) \coloneqq \< W, X \> \rho'$ satisfies $\E \in \OO$ trivially, so we are done. Otherwise, define the completely positive and trace--non-increasing map
\begin{equation}\begin{aligned}
  \E(X) \coloneqq \< W, X \> \rho' + \< Z-W, X \> \frac{\lambda \sigma - \rho'}{\lambda - 1}
\end{aligned}\end{equation}
where $\sigma$ is an optimal state such that $\rho' \leq_\FF \lambda \sigma$ with $R^\FF_\FF(\rho') = \lambda$. Then
\begin{equation}\begin{aligned}
  \E(\rho) &= \< W, \rho \> \rho'\\
  &= \HP\!\left(\rho \gbar R^\FF_\FF(\rho')\right) \rho'
\end{aligned}\end{equation}
since $\<Z -W , \rho \> = 0$, and for any $\sigma \in \FF$ it holds that
\begin{equation}\begin{aligned}
  \E(\sigma) &= \< W, \sigma \> \rho' + \< Z - W, \sigma \> \frac{\lambda \sigma - \rho'}{\lambda - 1}\\
  &\geq_\FF \< W, \sigma \> \rho' + (\lambda - 1) \< W, \sigma \> \frac{\lambda \sigma - \rho'}{\lambda - 1}\\
  &= \< W, \sigma \> \lambda \sigma \in \cone(\FF),
\end{aligned}\end{equation}
where we used that $\frac{1}{\lambda} \<Z, \sigma\> \geq \< W, \sigma\>$ by definition of $\HP$. This shows that $\E$ is a free probabilistic operation achieving the desired transformation.

An analogous argument holds for $\HP^{\aff}$, where we now construct
\begin{equation}\begin{aligned}
  \E(X) \coloneqq \< W, X \> \rho' + \< Z - W, X \> \frac{\lambda \sigma - \rho'}{\lambda - 1}
\end{aligned}\end{equation}
with $\rho' \leq \lambda \sigma$, or $\E(X) \coloneqq \< W, X \> \rho'$ if $R_\FF(\rho')=\infty$.
\end{proof}

\subsection{Trade-offs in probabilistic resource distillation}

The fact that both our upper and lower bounds for transformation probabilities are obtained in terms of the same optimisation problem, $\HP(\rho, t)$ or $\HP^{\aff}(\rho, t)$, raises the question: when can the bounds be tight, characterising the achievable performance exactly? Similarly to how we established the tightness result in Section~\ref{sec:prob_dist_RW}, we will show that such an equality can be obtained in the problem of resource distillation.

We can first consider the purifying transformation $\rho \transf \phi$ to be achieved exactly, and ask about the best possible probability of success.
Recall that any convex resource theory admits a meaningful notion of a maximally resourceful state: a golden state $\phim$, which maximises the robustness $R_\FF$ among all states, also minimises the overlap $V_\FF$ and satisfies $R_\FF(\phim) = V_\FF(\phim)^{-1}$~\cite{regula_2020}.
We immediately see that, for such states, the upper and lower bounds of Theorems~\ref{thm:Hbound} and \ref{thm:Hbound_aff} can become equal, establishing an exact expression for the probability $P(\rho \transf \phim)$. Specifically, this is the case when either $R_\FF(\phim) = R^\FF_\FF(\phim)$ or $V_\FF(\phim) = V_{\aff(\FF)}(\phim)$, which are conditions obeyed in theories such as quantum entanglement, coherence, or athermality.

We can extend this result to approximate distillation, which we encountered before in Section~\ref{sec:prob_dist_RW}. Here, the output of the protocol is a state $\tau$ that is not necessarily pure itself, but is close to the pure distillation target $\phi$. We thus define
\begin{equation}\begin{aligned}
  P(\rho \transf \phi, \ve) &\coloneqq \max \big\{ p \;\big|\; \E(\rho) = p \tau,\; \E \in \OO, \; F(\tau, \phi) \geq 1-\ve \big\}.
\end{aligned}\end{equation}
Alternatively, turning the problem around, we can ask: for a given probability $p$, what is the best achievable distillation error $\ve$? Let us denote this quantity as
\begin{equation}\begin{aligned}
  E(\rho \transf \phi,\, p) \coloneqq \min \big\{ \ve \;\big|\;& \E (\rho) = p \tau,\; \E \in \OO,\; F(\tau,\phi) \geq 1-\ve \big\}.
\end{aligned}\end{equation}

In order to characterise this setting, we will need to consider a modification of the quantities $\HP$ and $\HP^{\aff}$. First, for a fixed $\ve$, we define
\begin{equation}\begin{aligned}
  \HP_\ve(\rho \gbar t)\coloneqq \max \Big\{ \< Z, \rho \> \;\Big|&\; 0 \leq W \leq Z \leq \id,\\
   &  \< W, \rho \> = (1-\ve)\< Z, \rho \>,\\
   & \<W, \sigma \> \leq \frac1t \< Z, \sigma \> \; \forall \sigma\in\F \big\},\\
   \HP_\ve^{\aff}(\rho \gbar t)\coloneqq \max \Big\{ \< Z, \rho \> \;\Big|&\; 0 \leq W \leq Z \leq \id,\\
   & \< W, \rho \> = (1-\ve) \< Z, \rho \> ,\\
   & \<W, \sigma \> = \frac1t \< Z, \sigma \> \; \forall \sigma\in\F \Big\}.
\end{aligned}\end{equation}
As expected, for $\ve = 0$ we recover $\HP(\rho, k)$ and $\HP^{\aff}(\rho, k)$. Optimisation problems of this form have been previously used to characterise probabilistic distillation of entanglement~\cite{rozpedek_2018} and quantum coherence~\cite{fang_2018}. The other family of programs that we will consider is defined, for a fixed $p$, as
\begin{equation}\begin{aligned}\label{eq:thetap_def}
  \Theta_p(\rho, t) \coloneqq \max \Bigg\{ \frac{\<W,\rho\>}{p} \;\Bigg|&\; 0 \leq W \leq Z \leq \id,\\
   &  \< Z, \rho \> = p,\\
   & \<W, \sigma \> \leq \frac1t \< Z, \sigma \> \; \forall \sigma\in\F \Bigg\},\\
     \Theta_p^{\aff}(\rho, t) \coloneqq \max \Bigg\{ \frac{\<W,\rho\>}{p} \;\Bigg|&\; 0 \leq W \leq Z \leq \id,\\
   &  \< Z, \rho \> = p,\\
   & \<W, \sigma \> = \frac1t \< Z, \sigma \> \; \forall \sigma\in\F \Bigg\}.\\
\end{aligned}\end{equation}
Both $\HP_\ve$ and $\Theta_p$ are, once again, convex optimisation problems, and they reduce to semidefinite programs in resource theories of coherence, non-positive partial transpose, athermality, and similar.

Of note is the fact that, were we to constrain the considered operations to be deterministic (trace-preserving), this would lead to fixing $Z = \id$ in our constructions, in which case \eqref{eq:thetap_def} would reduce to the quantity $G(\rho,t)$ previously defined in~\cite{brandao_2010,regula_2020} and closely related to the hypothesis testing relative entropy~$D^\ve_H$~\cite{buscemi_2010,wang_2012}.

Our final result then uses these quantities to provide tight bounds on the probability and error of approximate probabilistic distillation in general resource theories.
\begin{boxed}{white}
\begin{theorem}\label{thm:prob_ve_bounds}
For any input state $\rho$, any resourceful pure state $\phi$, and any error $\ve \in [0,1]$ in any convex resource theory, the maximal probability achievable with resource--non-generating operations is bounded as
\begin{align}\label{eq:prob_ve_bounds}
  \HP_\ve\!\left(\rho\gbar V_\FF(\phi)^{-1} \right) &\geq P(\rho \transf \phi, \ve) \geq \HP_\ve\!\left(\rho,\, R^\FF_\FF(\phi) \right),\\
\label{eq:prob_ve_bounds_aff}
   \HP_\ve^{\aff}\!\left(\rho\gbar V_{\aff(\FF)}(\phi)^{-1} \right) &\geq P(\rho \transf \phi, \ve) \geq \HP_\ve^{\aff}\!\left(\rho\gbar R_\FF(\phi) \right)\!.
\end{align}
On the other hand, the smallest achievable error for a given probability $p \in (0,1]$ is bounded as
\begin{align}\label{eq:ve_bounds}
  \Theta_p\!\left(\rho\gbar V_\FF(\phi)^{-1} \right) &\geq 1-E(\rho\transf\phi, p) \geq \Theta_p\!\left(\rho,\, R^\FF_\FF(\phi) \right)\\
\label{eq:ve_bounds_aff}
   \Theta_p^{\aff}\!\left(\rho\gbar V_{\aff(\FF)}(\phi)^{-1} \right) &\geq 1-E(\rho\transf\phi, p) \geq \Theta_p^{\aff}\!\left(\rho\gbar R_\FF(\phi) \right)\!.
\end{align}

In particular, when $\phi = \phim$ is a golden state (i.e.\ it maximises $R_\FF$ among all states of the same dimension) and it holds that either
\begin{enumerate}[(i)]
\item $R_\FF(\phim) = R^\FF_\FF(\phim)$, or
\item $V_\FF(\phim) = V_{\aff(\FF)}(\phim)$,
\end{enumerate}
then the inequalities in (i) Eqs.~\eqref{eq:prob_ve_bounds}, \eqref{eq:ve_bounds} or in (ii) Eqs.~\eqref{eq:prob_ve_bounds_aff}, \eqref{eq:ve_bounds_aff} are all equalities.
\end{theorem}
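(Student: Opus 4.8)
The plan is to extend the measure-and-prepare constructions behind Propositions~\ref{prop:Hbound_upper} and~\ref{prop:Hbound_lower} (of which Theorems~\ref{thm:Hbound} and~\ref{thm:Hbound_aff} are the $\ve=0$ cases) from exact conversion $\rho\transf\rho'$ to approximate distillation $\rho\transf\phi$ at error $\ve$, treating the two families $\HP_\ve$ and $\Theta_p$ in parallel. As in Lemma~\ref{lem:affine}, the four ``affine'' statements will follow from the ``non-affine'' ones by replacing $\FF$ with $\aff(\FF)$; the only substantive difference is that the constraint $\<W,\sigma\>\le\tfrac1t\<Z,\sigma\>$ becomes an equality and that the relevant robustness/overlap parameters change from $R_\FF^\FF,V_\FF$ to $R_\FF,V_{\aff(\FF)}$.

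For the upper (converse) bounds I would start from any $\E\in\OO$ with $\E(\rho)=p\,\tau$ and $F(\tau,\phi)\ge 1-\ve$, and first reduce to $F(\tau,\phi)=1-\ve$ exactly: mixing in a fixed free state, $\E'(X)\coloneqq(1-s)\E(X)+s\,\Tr[\E(X)]\,\sigma_0$, keeps $\E'\in\OO$ because $\E(\pi)$ and $\Tr[\E(\pi)]\,\sigma_0$ both lie in $\cone(\FF)$, preserves the probability $p$, and for a suitable $s\in[0,1)$ degrades the output fidelity to $1-\ve$ (the regime $1-\ve<F_\FF(\phi)$, where a free state already suffices and $P=1$, is disposed of separately). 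Then, with $W\coloneqq\E^\dagger(\phi)$ and $Z\coloneqq\E^\dagger(\id)$, positivity and trace non-increase of $\E$ give $0\le W\le Z\le\id$, $\<Z,\rho\>=\Tr\E(\rho)=p$, and $\<W,\rho\>=\<\phi,\E(\rho)\>=p\,F(\tau,\phi)=(1-\ve)p$; and for $\sigma\in\FF$, writing $\E(\sigma)=\Tr[\E(\sigma)]\,\sigma'$ with $\sigma'\in\FF$ yields $\<W,\sigma\>=\Tr[\E(\sigma)]\,\braket{\phi|\sigma'|\phi}\le V_\FF(\phi)\,\<Z,\sigma\>$, an equality with $V_{\aff(\FF)}(\phi)$ whenever the latter is finite. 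Hence $(W,Z)$ is feasible for $\HP_\ve\bigl(\rho\gbar V_\FF(\phi)^{-1}\bigr)$ and its affine analogue with objective $p$, and for $\Theta_p\bigl(\rho\gbar V_\FF(\phi)^{-1}\bigr)$ and its analogue with objective $\<W,\rho\>/p=F(\tau,\phi)$; taking suprema over protocols gives all four upper bounds. (For the $\HP_\ve,\Theta_p$ statements one can skip the degradation step and instead rescale $W\mapsto\tfrac{1-\ve}{F(\tau,\phi)}W$, since there the $\sigma$-constraint is an inequality.)

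For the lower (achievability) bounds I would take $(W,Z)$ optimal in the relevant right-hand-side program, put $\lambda\coloneqq R^\FF_\FF(\phi)$ (resp.\ $R_\FF(\phi)$), pick $\sigma\in\FF$ with $\phi\le_\FF\lambda\sigma$ (resp.\ $\phi\le\lambda\sigma$), and form
\[
  \E(X)\coloneqq \<W,X\>\,\phi + \<Z-W,X\>\,\frac{\lambda\sigma-\phi}{\lambda-1},
\]
which is completely positive (its Choi operator is a sum of positive semidefinite terms) and trace non-increasing since $\E^\dagger(\id)=Z\le\id$. Exactly as in Proposition~\ref{prop:Hbound_lower}, the constraint $\<W,\sigma\>\le\tfrac1\lambda\<Z,\sigma\>$ forces $\E(\pi)\ge_\FF\<W,\pi\>\,\lambda\sigma\in\cone(\FF)$ for every $\pi\in\FF$, so $\E\in\OO$ (if $\lambda=\infty$ I would use instead $\E(X)=\<W,X\>\phi+\<Z-W,X\>\,\omega$ for any $\omega\in\FF$, since then $\<W,\sigma\>=0$). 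Feeding in $\rho$ and using $\<W,\rho\>=(1-\ve)\<Z,\rho\>$ (for $\HP_\ve$) or $\<Z,\rho\>=p$ (for $\Theta_p$), one gets $\E(\rho)=\<Z,\rho\>\,\tau$ with $\tau$ a convex mixture of $\phi$ and $\tfrac{\lambda\sigma-\phi}{\lambda-1}$; since $\lambda\sigma\ge\phi$ gives $\braket{\phi|\sigma|\phi}\ge\lambda^{-1}$, the overlap $\braket{\phi|\tfrac{\lambda\sigma-\phi}{\lambda-1}|\phi}\ge 0$, so $F(\tau,\phi)$ is at least the weight of $\phi$, i.e.\ $\ge 1-\ve$ (for $\HP_\ve$) or $\ge\<W,\rho\>/p$ (for $\Theta_p$). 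This produces the four lower bounds.

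Finally, the equality claim follows by collapsing the two sides: a golden state obeys $R_\FF(\phim)=V_\FF(\phim)^{-1}$ \cite[Theorem~4]{regula_2020}, so hypothesis~(i) forces $R^\FF_\FF(\phim)=R_\FF(\phim)=V_\FF(\phim)^{-1}$ and the parameters in the upper and lower bounds of~\eqref{eq:prob_ve_bounds} and~\eqref{eq:ve_bounds} coincide, while hypothesis~(ii) gives $V_{\aff(\FF)}(\phim)=V_\FF(\phim)=R_\FF(\phim)^{-1}$ and does the same for~\eqref{eq:prob_ve_bounds_aff} and~\eqref{eq:ve_bounds_aff}. The main obstacle I anticipate is the converse direction in the affine case: there the equality constraint $\<W,\sigma\>=\tfrac1t\<Z,\sigma\>$ rules out the simple rescaling trick, so one genuinely must argue that the output fidelity can be taken to be exactly $1-\ve$ via the degradation map $\E\mapsto\E'$ — which itself relies on every free state having the same overlap with $\phi$ in the affine setting — and then separately treat the trivial regime $1-\ve\le F_\FF(\phi)$. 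The remaining effort is bookkeeping: the degenerate parameter values ($\lambda=\infty$, $V_{\aff(\FF)}(\phi)=\infty$, $\ve\in\{0,1\}$), handled by convention as in Proposition~\ref{prop:Hbound_upper}, and checking that the suprema defining $P(\rho\transf\phi,\ve)$ and $1-E(\rho\transf\phi,p)$ are matched on each side rather than merely bounded.
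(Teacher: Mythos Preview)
Your proposal is correct and follows essentially the same route as the paper: the lower bounds use the identical measure-and-prepare map $\E(X)=\<W,X\>\phi+\<Z-W,X\>\tfrac{\lambda\sigma-\phi}{\lambda-1}$, and the upper bounds come from the same feasible pair $W=\E^\dagger(\phi)$, $Z=\E^\dagger(\id)$; the equality for golden states is then immediate from $R_\FF(\phim)=V_\FF(\phim)^{-1}$ exactly as you say.

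The one point where you diverge from the paper is how you force the equality $\<W,\rho\>=(1-\ve)\<Z,\rho\>$ in the converse for $\HP_\ve$ and $\HP_\ve^{\aff}$. The paper simply observes that relaxing this constraint to an inequality leaves the optimal value unchanged (if strict, one can push $\<Z,\rho\>$ up), and then works with the relaxed program. You instead degrade the protocol by post-mixing with a free state so that the output fidelity is \emph{exactly} $1-\ve$, which makes the equality hold on the nose. Both are fine, and your approach is arguably cleaner in the affine case, where the relaxation argument needs a bit more care because one cannot simply perturb $Z$ without disturbing the equality constraint $\<W,\sigma\>=\tfrac1t\<Z,\sigma\>$. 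One small inaccuracy: for $\Theta_p$ and $\Theta_p^{\aff}$ no degradation or rescaling is needed at all, since $\<Z,\rho\>=p$ is automatic and the objective $\<W,\rho\>/p=F(\tau,\phi)$ already gives what you want; your parenthetical suggests a rescaling there that is unnecessary (and in the affine case would break the $\sigma$-equality).
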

\end{boxed}
This gives an exact expression for the best achievable probability $P(\rho\transf\phim, \ve)$ or error $E(\rho\transf\phim, p)$ of approximate one-shot distillation in all resource theories satisfying the specified conditions. For coherence theory, we then recover a result of~\cite{fang_2018}; for quantum entanglement, we establish a general expression for approximate distillation using all non-entangling operations (or, similarly, all operations preserving the set of states with positive partial transpose (PPT), which closely resembles a result of~\cite{rozpedek_2018}).

The Theorem allows us to study exactly the trade-offs between the error $\ve$ incurred in distillation and the probability $p$ with which the process is realised. We demonstrate this in Figure~\ref{fig:ent}, where we characterise the distillation of entanglement under PPT-preserving operations. Since condition (i) is satisfied in this theory (see e.g.~\cite{regula_2020}), we have that $1 - E(\rho \transf \phi_m, p) = \Theta_p(\rho, m)$ for any input state $\rho$ and $m$-dimensional maximally entangled state $\phi_m$. In the Figure, we show how the result of Theorem~\ref{thm:prob_ve_bounds} can lead to insights about the practical manipulation of quantum states by explicitly studying the distillability of two states that were found to be difficult (or impossible) to distill under LOCC in Refs.~\cite{horodecki_1999-2,horodecki_1999-1}.

Note that we have already encountered the quantity $E(\rho \transf \phi, p)$ in Figure~\ref{fig:iso}, where we plotted the best achievable error of entanglement distillation for fixed choices of probabilities. As Theorem~\ref{thm:prob_ve_bounds} applies to this theory, we were justified in using the quantity $\Theta_p$ to exactly quantify the performance of probabilistic distillation protocols there.

The proof of Theorem~\ref{thm:prob_ve_bounds} will be established in a similar manner as before. We also split it into two propositions.

\begin{figure*}[t]
 \centering
        \hspace*{-2pt}
    \subfloat[][$\displaystyle \rho_{a} = \frac{3}{4} \phi_3 + \frac14 \proj{01}$]{%
    \includegraphics[width=8cm]{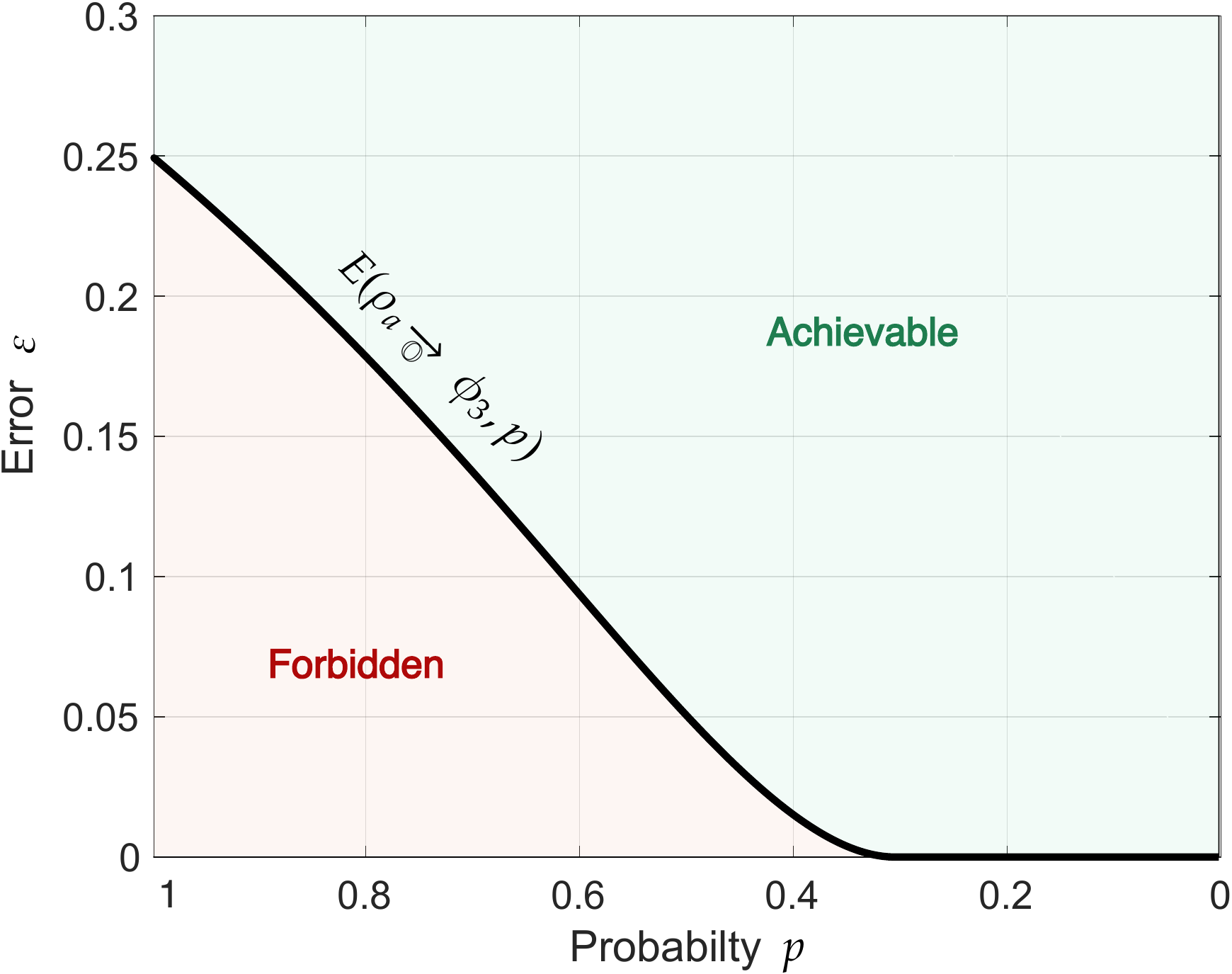}
    \hspace*{10pt}
    }%
    \subfloat[][$\displaystyle \rho_{b} = \frac{3}{4} \phi_3 + \frac{1}{12} \big(\!\proj{01} + \proj{12} + \proj{20}\!\big)$]{%
    \includegraphics[width=7.95cm]{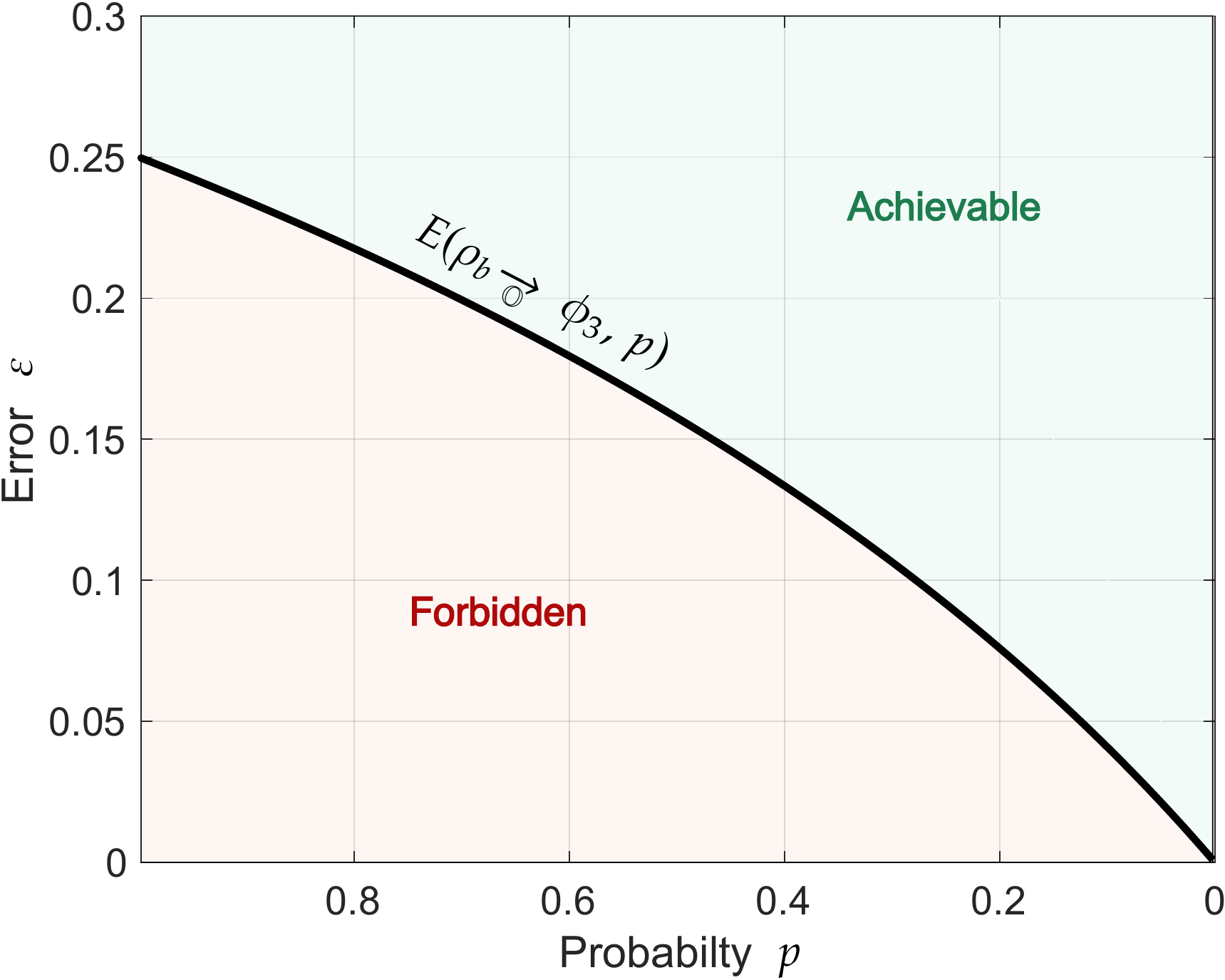}
    }%
\caption{\textbf{Trade-offs in entanglement distillation.} We plot the exact achievable error $E(\rho \to \phi_3, p)$ in the probabilistic one-shot distillation under PPT-preserving operations, that is, the resource theory where the free states $\FF$ are all PPT states. We focus on two types of two-qutrit states that found use in the characterisation of entanglement distillation in Refs.~\cite{horodecki_1999-2,horodecki_1999-1}. The state $\rho_{a}$ in (a) was found to be impossible to distill under LOCC with error $\ve=0$ unless the probability $p$ goes to 0~\cite{horodecki_1999-1}; here, we see that under PPT-preserving operations, it suffices to take $p=0.3$ to achieve perfect distillation. The state $\rho_b$ in (b) is an example of a state that is impossible to distill under LOCC, even probabilistically, in the sense that $\ve$ is bounded away from 0~\cite{horodecki_1999-2}. However, it is already known that assisting the distillation process with PPT states (or, equivalently, PPT operations) makes it possible to distill $\phi_3$ from this state with $\ve=0$~\cite{horodecki_1999-1}. We indeed see this to be the case, but only when $p \to 0$. The state $\rho_b$ is thus an example of a state such that $\RW(\rho_b) = \infty$, but the transformation $\rho_b \to \phi_3$ is only possible with asymptotically vanishing probability, even under the class $\OO$ of all PPT-preserving operations.
}
\label{fig:ent}
\end{figure*}

\begin{proposition}
For any input state $\rho$, any resourceful pure state $\phi$, and any error $\ve \in [0,1]$, it holds that
\begin{equation}\begin{aligned}
  P(\rho \transf \phi, \ve) &\leq \HP_\ve\!\left(\rho\gbar V_\FF(\phi)^{-1} \right)\\
  P(\rho \transf \phi, \ve) &\leq \HP_\ve^{\aff}\!\left(\rho\gbar V_{\aff(\FF)}(\phi)^{-1} \right)\\
  1-E(\rho \transf \phi, p) &\leq \Theta_p\!\left(\rho\gbar V_\FF(\phi)^{-1} \right)\\
  1-E(\rho \transf \phi, p) &\leq \Theta^{\aff}_p\!\left(\rho\gbar V_{\aff(\FF)}(\phi)^{-1} \right).
\end{aligned}\end{equation}
\end{proposition}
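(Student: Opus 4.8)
The plan is to follow the template of the proof of Proposition~\ref{prop:Hbound_upper}: starting from a free operation that realises the transformation under consideration, I will build feasible points for the relevant convex program by pushing suitable sub-unital observables through the adjoint map $\E^\dagger$, and then identify the resulting objective value with the probability (for the $\HP_\ve$ bounds) or with the output fidelity (for the $\Theta_p$ bounds). Throughout I will use that $\E$ being completely positive and trace--non-increasing makes $\E^\dagger$ positive and sub-unital (so $0\le\E^\dagger(M)\le\E^\dagger(\id)\le\id$ whenever $0\le M\le\id$), together with the defining property of $\OO$ that $\E(\sigma)/\Tr\E(\sigma)\in\FF$ for every $\sigma\in\FF$ with $\Tr\E(\sigma)>0$.

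For the two $\Theta_p$ bounds I would take an optimal protocol $\E\in\OO$ with $\E(\rho)=p\tau$ and output fidelity $\langle\proj{\phi},\tau\rangle=1-E(\rho\transf\phi,p)$ (existence by compactness of the feasible set and continuity of the fidelity), and set $W\coloneqq\E^\dagger(\proj{\phi})$ and $Z\coloneqq\E^\dagger(\id)$. Then $0\le W\le Z\le\id$; the constraint $\langle Z,\rho\rangle=\Tr\E(\rho)=p$ holds; for any $\sigma\in\FF$ one has $\langle W,\sigma\rangle=\langle\proj{\phi},\E(\sigma)\rangle=\Tr\E(\sigma)\,\langle\proj{\phi},\,\E(\sigma)/\Tr\E(\sigma)\rangle$, which is at most $V_\FF(\phi)\,\langle Z,\sigma\rangle$ in general and exactly $V_{\aff(\FF)}(\phi)\,\langle Z,\sigma\rangle$ when the latter is finite (since then $\langle\proj{\phi},\cdot\rangle$ is constant on $\FF$), so that $(W,Z)$ is feasible for $\Theta_p(\rho\gbar V_\FF(\phi)^{-1})$ and for $\Theta_p^{\aff}(\rho\gbar V_{\aff(\FF)}(\phi)^{-1})$ respectively; the objective value is $\langle W,\rho\rangle/p=\langle\proj{\phi},\tau\rangle=1-E(\rho\transf\phi,p)$, which is exactly what we want. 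The case $V_{\aff(\FF)}(\phi)=\infty$ collapses to the trivial bound $1-E\le 1$.

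For the two $\HP_\ve$ bounds I would use the same recipe, but here the equality constraint $\langle W,\rho\rangle=(1-\ve)\langle Z,\rho\rangle$ forces $\langle\proj{\phi},\tau\rangle=1-\ve$ exactly, which the optimal protocol need not satisfy if it overshoots the target fidelity. To absorb this I would replace the observable $\proj{\phi}$ by a tuned operator $M\coloneqq a\proj{\phi}+b\,\id$ with $a,b\ge 0$ chosen so that $\langle M,\tau\rangle=1-\ve$ while $\langle M,\omega\rangle$ remains $\le V_\FF(\phi)$ (resp.\ $=V_{\aff(\FF)}(\phi)$) for all $\omega\in\FF$: when $1-\ve>V_\FF(\phi)$, so that $\langle\proj{\phi},\tau\rangle\ge 1-\ve>V_\FF(\phi)$, taking $a=\frac{1-\ve-V_\FF(\phi)}{\langle\proj{\phi},\tau\rangle-V_\FF(\phi)}\in(0,1]$ and $b=V_\FF(\phi)(1-a)$ makes $0\le M\le\id$, $\langle M,\tau\rangle=1-\ve$, and $\langle M,\omega\rangle=a\langle\proj{\phi},\omega\rangle+b\le aV_\FF(\phi)+b=V_\FF(\phi)$. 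With $W=\E^\dagger(M)$ and $Z=\E^\dagger(\id)$ one then gets $\langle W,\rho\rangle=p\langle M,\tau\rangle=(1-\ve)\langle Z,\rho\rangle$ and the overlap inequality/equality as before, with objective $\langle Z,\rho\rangle=p$; hence $P(\rho\transf\phi,\ve)\le\HP_\ve(\rho\gbar V_\FF(\phi)^{-1})$ and likewise for the affine version.

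The main obstacle, and the part that needs the most care, is the treatment of the degenerate regimes. When $1-\ve\le V_\FF(\phi)$ the transformation is trivially achievable (output an optimal free state), so $P(\rho\transf\phi,\ve)$ may equal $1$ and the tuned-$M$ construction degenerates; there one must instead exhibit feasible $(W,Z)$ directly --- e.g.\ $W=(1-\ve)\id$, $Z=\id$ for the non-affine $\HP_\ve$ bound --- and check that the corresponding affine program still returns the right value, which is the point at which the finiteness or infiniteness of $V_{\aff(\FF)}(\phi)$, and the precise convention for $1/0$ in the programs, has to be pinned down. Everything away from these boundary cases should follow the pattern of Proposition~\ref{prop:Hbound_upper} essentially verbatim, so I expect the bookkeeping of the trivial regimes to be the only genuinely delicate step.
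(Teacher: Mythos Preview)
Your proposal is correct. For the $\Theta_p$ bounds it is identical to the paper's proof. For the $\HP_\ve$ bounds you take a genuinely different route: the paper first observes that the equality constraint $\langle W,\rho\rangle=(1-\ve)\langle Z,\rho\rangle$ in $\HP_\ve$ can be relaxed to $\langle W,\rho\rangle\ge(1-\ve)\langle Z,\rho\rangle$ without changing the optimal value (if the inequality were strict at an optimum one could enlarge $Z$, or equivalently shrink $W$, to restore equality at the same objective), after which $W=\E^\dagger(\proj{\phi})$, $Z=\E^\dagger(\id)$ are immediately feasible. You instead keep the equality and tune the observable $M=a\proj{\phi}+b\,\id$ to hit it exactly. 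The paper's relaxation trick is shorter and sidesteps entirely the degenerate regime $1-\ve\le V_\FF(\phi)$ that you correctly flag as the delicate step in your approach; your construction, on the other hand, is more explicit and does not rely on an auxiliary claim about the program's optimal value being unchanged under relaxation (which, for the affine variant $\HP_\ve^{\aff}$, actually requires a slightly more careful argument than the paper spells out).
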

\begin{proof}
We begin by noting that the condition $\< W, \rho \> = (1-\ve) \< Z, \rho \>$ in the definition of $\HP_\ve$ can be relaxed to $\< W, \rho \> \geq (1-\ve) \< Z, \rho \>$ without loss of generality, as the optimal value will necessarily be the same. To see this, note that if there existed optimal $Z$ and $W$ with $\< W, \rho \> > (1-\ve) \< Z, \rho\>$, then we could choose another feasible $Z'$ with $\< Z', \rho\> > \<Z,\rho\>$, which would contradict the optimality of $Z$.

Now, let $\E \in \OO$ be any protocol such that $\E(\rho) = p \tau$ for some $\tau$ such that $\< \phi, \tau \> \geq 1-\ve$. This gives
\begin{equation}\begin{aligned}
  \< \E^\dagger(\phi), \rho \> &\geq p (1-\ve) = \< \E^\dagger(\id), \rho \> (1-\ve),\\
  \< \E^\dagger(\id), \rho \> &= p.
\end{aligned}\end{equation}
Using also the fact that
\begin{equation}\begin{aligned}
  \< \E^\dagger(\phi), \sigma \> &= \Tr \E(\sigma) \< \phi, \frac{\E(\sigma)}{\Tr \E(\sigma)} \>\\
  &\leq \Tr \E(\sigma) V_{\FF}(\phi),
\end{aligned}\end{equation}
we then have that $W = \E^\dagger(\phi)$ and $Z = \E^\dagger(\id)$ constitute feasible solutions for $\HP_\ve\!\left(\rho\gbar V_\FF(\phi)^{-1} \right)$ and $\Theta_p\!\left(\rho\gbar V_\FF(\phi)^{-1} \right)$, as was to be shown. An analogous argument holds for $\HP_\ve^{\aff}\!\left(\rho\gbar V_{\aff(\FF)}(\phi)^{-1} \right)$ and $\Theta_p\!\left(\rho\gbar V^{\aff}_\FF(\phi)^{-1} \right)$.
\end{proof}

\begin{proposition}
For any input state $\rho$, any resourceful pure state $\phi$, and any error $\ve \in [0,1]$, it holds that
\begin{equation}\begin{aligned}
  P(\rho \transf \phi, \ve) &\geq \HP_\ve\!\left(\rho,\, R^\FF_\FF(\phi) \right)\\
  P(\rho \transf \phi, \ve) &\geq \HP_\ve^{\aff}\!\left(\rho\gbar R_\FF(\phi) \right)\\
    1-E(\rho \transf \phi, p) &\geq \Theta_p\!\left(\rho\gbar R^\FF_\FF(\phi) \right)\\
  1-E(\rho \transf \phi, p) &\geq \Theta^{\aff}_p\!\left(\rho\gbar R_\FF(\phi) \right).
\end{aligned}\end{equation}
\end{proposition}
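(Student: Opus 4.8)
The plan is to extend the measure-and-prepare construction used in the proof of Proposition~\ref{prop:Hbound_lower}, now threading the error parameter through it. Consider first the bound $P(\rho \transf \phi, \ve) \geq \HP_\ve(\rho \gbar R^\FF_\FF(\phi))$. Fix $\rho$, a resourceful pure state $\phi$, and $\ve \in [0,1]$; set $\lambda \coloneqq R^\FF_\FF(\phi)$, noting $\lambda > 1$ since $\phi \notin \FF$, and let $W, Z$ be optimal for $\HP_\ve(\rho \gbar R^\FF_\FF(\phi))$. If $\lambda = \infty$, feasibility forces $\<W, \sigma\> = 0$ for all $\sigma \in \FF$, and for any fixed $\omega \in \FF$ the map $X \mapsto \<W, X\> \phi + \<Z - W, X\> \omega$ is completely positive, trace non-increasing (as $Z \leq \id$), resource non-generating (it sends each $\sigma \in \FF$ to $\<Z, \sigma\> \omega \in \cone(\FF)$), and sends $\rho$ to $\<Z, \rho\> \tau$ with $\<\phi, \tau\> \geq \<W, \rho\>/\<Z, \rho\> = 1-\ve$; this witnesses the bound (the subcase $\<Z, \rho\> = 0$ being trivial since $P \geq 0$ always). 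Otherwise, pick an optimal free state $\sigma$ with $\phi \leq_\FF \lambda \sigma$ and define
\[
  \E(X) \coloneqq \<W, X\>\, \phi + \<Z - W, X\>\, \frac{\lambda \sigma - \phi}{\lambda - 1}.
\]

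I would then verify four properties in turn. \emph{Admissibility:} the Choi operator $W \otimes \phi + (Z-W) \otimes (\lambda\sigma - \phi)/(\lambda-1)$ is positive semidefinite (using $\lambda\sigma - \phi \in \cone(\FF) \subseteq \HH_+$), so $\E$ is completely positive, and $\Tr \E(X) = \<Z, X\> \leq \Tr X$ because $\Tr \tfrac{\lambda\sigma - \phi}{\lambda - 1} = 1$ and $Z \leq \id$. \emph{Resource non-generation:} for $\sigma' \in \FF$, the constraint $\<W, \sigma'\> \leq \tfrac1\lambda \<Z, \sigma'\>$ gives $\<Z, \sigma'\> - \lambda \<W, \sigma'\> \geq 0$, so $\E(\sigma') - \<W, \sigma'\> \lambda\sigma = \big(\<Z, \sigma'\> - \lambda\<W, \sigma'\>\big) \tfrac{\lambda\sigma - \phi}{\lambda - 1} \in \cone(\FF)$, and adding $\<W, \sigma'\> \lambda\sigma \in \cone(\FF)$ places $\E(\sigma')$ in the cone. \emph{Probability:} $\Tr \E(\rho) = \<Z, \rho\>$, so $\E(\rho) = \<Z, \rho\>\, \tau$ with $\tau$ a state. \emph{Fidelity:} since $\phi$ is pure and $\<\phi, \tfrac{\lambda\sigma - \phi}{\lambda - 1}\> = \tfrac{\lambda\<\phi,\sigma\> - 1}{\lambda - 1} \geq 0$,
\[
  F(\tau, \phi) = \<\phi, \tau\> = \frac{\<W, \rho\> + \<Z-W, \rho\> \<\phi, \tfrac{\lambda\sigma - \phi}{\lambda - 1}\>}{\<Z, \rho\>} \geq \frac{\<W, \rho\>}{\<Z, \rho\>} = 1-\ve ,
\]
using the defining equality of $\HP_\ve$. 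Hence $\E \in \OO$ is feasible for $P(\rho \transf \phi, \ve)$ with value $\<Z, \rho\> = \HP_\ve(\rho \gbar R^\FF_\FF(\phi))$, proving the first inequality.

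For $1 - E(\rho \transf \phi, p) \geq \Theta_p(\rho \gbar R^\FF_\FF(\phi))$ I would run the same construction with $W, Z$ optimal for $\Theta_p$; now $\<Z, \rho\> = p > 0$, so $\E(\rho) = p\tau$ and the fidelity step gives $F(\tau, \phi) \geq \<W, \rho\>/p = \Theta_p(\rho \gbar R^\FF_\FF(\phi))$, i.e.\ the protocol achieves error at most $1 - \Theta_p(\rho \gbar R^\FF_\FF(\phi))$. The affine statements for $\HP_\ve^{\aff}$ and $\Theta_p^{\aff}$ come from the same recipe with $\lambda \coloneqq R_\FF(\phi)$ and a free state $\sigma$ with $\phi \leq \lambda\sigma$ in the ordinary semidefinite order: the only change is in the resource non-generation step, where the \emph{equality} constraint $\<W, \sigma'\> = \tfrac1\lambda \<Z, \sigma'\>$ yields $\E(\sigma') = \<W, \sigma'\> \lambda\sigma \in \cone(\FF)$ outright, while $\lambda\sigma - \phi \geq 0$ is now genuinely positive semidefinite; the degenerate case $R_\FF(\phi) = \infty$ is treated exactly as $\lambda = \infty$ above.

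I do not anticipate a real obstacle here: the argument is a direct extension of Proposition~\ref{prop:Hbound_lower}, and the only delicate points are bookkeeping --- normalising the ``failure branch'' as $\tfrac{\lambda\sigma - \phi}{\lambda - 1}$ so that $\Tr \E(\rho)$ equals the intended probability \emph{exactly}, combining the $\HP_\ve$ (resp.\ $\Theta_p$) equality constraint with $\<\phi, \tfrac{\lambda\sigma - \phi}{\lambda - 1}\> \geq 0$ to read off the output fidelity, and separating out the degenerate cases ($R^\FF_\FF(\phi) = \infty$ or $R_\FF(\phi) = \infty$, and $\<Z, \rho\> = 0$ for the $\HP_\ve$ bound) so that no division by zero or $\infty$ arises.
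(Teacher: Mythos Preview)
Your proof is correct and follows essentially the same approach as the paper's: both construct the measure-and-prepare map $\E(X) = \<W,X\>\phi + \<Z-W,X\>\tfrac{\lambda\sigma-\phi}{\lambda-1}$ and verify complete positivity, trace--non-increase, resource non-generation, and the fidelity bound. Your treatment of the degenerate case $\lambda = \infty$ (routing the failure branch to a fixed free state $\omega$) and of the trivial case $\<Z,\rho\>=0$ is in fact more explicit and careful than the paper's, which handles $\lambda = \infty$ via a terser argument.
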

\begin{proof}
Let us begin with $\HP_\ve$ and $\Theta_\ve$. Construct the operation
\begin{equation}\begin{aligned}
  \E(X) \coloneqq \< W, X \> \phi + \< Z - W, X \> \frac{\lambda \sigma - \phi}{\lambda - 1}
\end{aligned}\end{equation}
where $\lambda = R_\FF^\FF(\phi)$ and $\phi \leq_\FF \lambda \sigma$. This gives
\begin{equation}\begin{aligned}
  \E(\rho) = \< Z , \rho \> \left[ \frac{\< W, \rho \>}{\< Z, \rho \>} \phi + \left(1 - \frac{\< W, \rho \>}{\< Z, \rho \>}\right) \frac{\lambda \sigma - \phi}{\lambda - 1} \right],
\end{aligned}\end{equation}
where we can see that the right-hand side is of the form $\< Z, \rho\> \tau$ for a normalised state $\tau$, explicitly showing that the operation achieves the approximate transformation $\rho \to \phi$ with probability $\< Z, \rho\>$ and error (at most) $1-\frac{\< W, \rho \>}{\< Z, \rho \>}$. To verify that this is a free operation, we see that
\begin{equation}\begin{aligned}
  \E(\sigma) &\geq_\FF \<W, \sigma\> \lambda \sigma \in \cone(\FF)
\end{aligned}\end{equation}
for any $\sigma \in \FF$. Taking $W, Z$ optimal for $\HP_\ve$ or $\Theta_p$ shows the desired achievability result.

Alternatively, if $R_\FF^\FF(\phi) = \infty$, then $\HP_\ve(\rho, \infty) = \HP(\rho, \infty)$ and it actually suffices to use the operation $\E(X) \coloneqq \< Z, X\> \phi$ which achieves the transformation exactly.

The argument for $\HP_\ve^{\aff}$ and $\Theta_p^{\aff}$ proceeds analogously, simply replacing $R_\FF^\FF$ with $R_\FF$ in the definition of~$\E$.
\end{proof}

\subsubsection*{Some thoughts about the methods of Theorem~\ref{thm:prob_ve_bounds}}

An interesting fact about the result of Theorem~\ref{thm:prob_ve_bounds}, closely related to known properties of isotropic states in entanglement theory~\cite{horodecki_1999-1} and observed also in the resource theory of coherence~\cite{fang_2018}, is that the state $\tau_\ve$ that we already encountered in Section~\ref{sec:prob_dist_RW} is always an optimal choice for the approximate distillation of a golden state $\phim$. Specifically, for $\phim$ obeying either of the conditions (i), (ii) of Theorem~\ref{thm:prob_ve_bounds}, it holds that
\begin{equation}\begin{aligned}
  P(\rho \transf \phim, \ve) = P(\rho \transf \tau_\ve)
\end{aligned}\end{equation}
with $\tau_\ve$ defined as in Eq.~\eqref{eq:tauve_def}. Intuitively, the conditions (i) and (ii) can be understood as ensuring that the state $\phim$ can be optimally distilled using measure-and-prepare operations as in \eqref{eq:meas_prep}, which makes it sufficient to use the `isotropic' state $\tau_\ve$. This is a priori much less obvious in general resource theories than in settings such as quantum entanglement, where the maximally resourceful state enjoys a number of symmetries that naturally lead to the concept of an isotropic state. (We refer the interested reader to Ref.~\cite{takagi_2022-1} where the properties of such states are explained from the perspective of a class of twirling maps applicable to more general quantum resources.)
We note, however, a difference between Theorem~\ref{thm:prob_ve_bounds} and the finding of Section~\ref{sec:prob_dist_RW} where the states $\tau_\ve$ also appear --- in Theorem~\ref{thm:prob_error_tight}, we only needed the resource theory to be affine to ensure the optimality of $\tau_\ve$, while (ii) here is a more demanding criterion.

Let us conclude by clarifying the connection between the optimisation problems considered in this section and the projective robustness $\RW$. 
Taking the map $\E(X) \coloneqq \< W, X \> \phi + \<Q, X \> \frac{\lambda \sigma - \phi}{\lambda - 1}$ as before, recall that the main idea of the proof of Theorem~\ref{thm:prob_ve_bounds} is to optimise the value of $\< W + Q, \rho \> = p$ for a fixed error $\ve$, or $\< W, \rho\> / \< W + Q , \rho \> = 1-\ve$ for a fixed $p$. If, instead, one wishes to minimise the error $\ve$ while disregarding the probability with which it is achieved, then this can be expressed as a maximisation of
\begin{equation}\begin{aligned}
  \frac{\<W, \rho \>}{\< Q, \rho \>} = \frac{1-\ve}{\ve}
\end{aligned}\end{equation}
under the same constraints. We can then get rid of the constraint $W + Q \leq \id$, since any non-zero feasible operators can be rescaled as $W \mapsto W/\norm{W+Q}{\infty}, Q \mapsto Q/\norm{W+Q}{\infty}$ with the same optimal value. The last constraint that we imposed in $\HP$ and $\Theta$ is $(t-1)\<W, \sigma \> \leq \<Q, \sigma\>$, so by defining $W' = (t-1) W$ we end up with the optimisation problem
\begin{equation}\begin{aligned}
  \sup \lset \frac{\<W', \rho \>}{\< Q, \rho \>} \bar W', Q \geq 0, \; \< W', \sigma \> \leq \< Q, \sigma \> \; \forall \sigma \in \FF \rset,
\end{aligned}\end{equation}
which is precisely $\RW$. This is consistent with our results in Section~\ref{sec:prob_dist_RW}, where we identified $\RW$ with the error $E(\rho \transf \phi, p)$ when $p$ is allowed to be arbitrarily small. Our upper and lower bounds in Theorem~\ref{thm:prob_ve_bounds} can then be understood to be analogous to the upper and lower bounds in Theorems \ref{thm:prob_error_full}--\ref{thm:prob_error_tight}. We thus see that the two approaches to characterising probabilistic distillation are closely related. 

Note, however, that the simpler form of the projective robustness $\RW$ and its properties make it useful as a bound even in settings where the approach of this section is difficult to apply: for instance, bounding the probability of many-copy transformations $\rho^{\otimes n} \transf \phi$ using Theorem~\ref{thm:prob_ve_bounds} would require evaluating the quantities $\HP_\ve$ or $\Theta_p$ for the tensor product state $\rho^{\otimes n}$, while the submultiplicativity of $\RW$ makes this significantly easier.

\section{Conclusions}

We introduced two general approaches to the characterisation of probabilistic resource transformations. 

We first studied the projective robustness $\RW$~\cite{regula_2022} --- a monotone under all probabilistic transformations of a given quantum resource. We proved that it yields broadly applicable no-go theorems for the manipulation of quantum states in general resource theories. In many relevant contexts, we established that the constraints obtained through $\RW$ are in fact necessary and sufficient, endowing the projective robustness with a fundamental operational meaning as a single quantity that fully determines the probabilistic convertibility between states with resource--non-generating operations. In resource distillation tasks, we showed that $\RW$ gives tight thresholds on the performance of all physical distillation schemes, expressed as universal bounds on the achievable accuracy and overhead in purifying quantum resources under the most general probabilistic manipulation protocols.

In our second approach, we developed precise quantitative restrictions on the achievable performance of one-shot probabilistic transformations. The bounds are all expressed in terms of convex conic optimisation problems, making them efficiently computable in many relevant cases. We showed that our restrictions become tight in broad types of resources when applied to the problem of distilling a maximally resourceful state, enabling us to provide an exact expression for the efficacy of distillation schemes in theories such as entanglement and coherence.

The importance of probabilistic manipulation protocols, and in particular the purification of noisy quantum resources, means that our results are bound to find use in many fundamental problems that underlie the practical exploitation of quantum mechanical phenomena. Here and in~\cite{regula_2022}, we presented explicit applications of our methods in the distillation of quantum coherence, entanglement, and magic states; we stress, however, that the bounds are fully general, and can be applied in settings beyond the ones considered here. An understanding of the properties of $\RW$ in any specific theory of interest will then provide additional insight into the manipulation of that resource.

A point of note is that many of our methods straightforwardly extend to settings beyond quantum theory. 
An examination of the proofs of the necessary (Theorem~\ref{thm:nogo_monotonicity}) and sufficient (Theorems~\ref{thm:nogo_affine_full} and~\ref{thm:nogo_sufficient_fulldim_full}) conditions for resource convertibility reveals that we did not actually use any properties specific to quantum mechanics. Indeed, the results can be adapted exactly as stated to the broad setting of general probabilistic theories (GPTs)~\cite{ludwig_1985,hartkamper_1974,davies_1970,lami_2017-1}. The study of resource theories can indeed be formulated in such contexts~~\cite{delrio_2015,coecke_2016,takagi_2019}, and we invite the interested reader to~\cite{takagi_2019,lami_2021} for a discussion of how to define quantities such as $R_\FF$ and $R^\FF_\FF$ in GPTs. The principal idea here is that the inequalities with respect to the cone of positive semidefinite operators are replaced with inequalities with respect to a cone $\mathcal{K}$ that defines the state space of the given GPT; for instance, the projective robustness can be defined as
\begin{equation}\begin{aligned}
  \Omega_{\FF_\mathcal{K}}^\mathcal{K} (\omega) \coloneqq \inf \lset \frac{\lambda}{\nu} \bar \omega -  \nu \sigma \in \mathcal{K},\;  \lambda \sigma - \omega \in \mathcal{K},\; \sigma \in \FF_\K \rset,
\end{aligned}\end{equation}
where $\FF_\K$ is now a subset of states in the given GPT. 
For simplicity, our results have focused on applications to \emph{quantum} resources, and we did not consider GPTs in more detail here. We note also that the results concerning distillation, namely Sections~\ref{sec:prob_dist_RW} and parts of Section~\ref{sec:tradeoff}, appear to be much more difficult to generalise to GPTs, as they make heavy use of the properties of pure states in quantum theory. We expect the possibility of a complete extension of all our bounds beyond quantum theory to be highly dependent on the specifics of the given GPT.

Returning to quantum resource theories, an intriguing open question is the possibility to establish tighter bounds by exploiting known structures of the allowed set of free operations $\OO$, since we know that the behaviour of probabilistic resource transformations can vary significantly between different sets of operations (as evidenced, for instance, by LOCC in entanglement theory). In particular, although our bounds are very suitable for the characterisation of high-rank noisy states, one could hope that another approach could yield better results when pure input states are considered --- the price we pay for the generality of our results is that constraints such as those based on the Schmidt rank~\cite{lo_2001,terhal_2000} are not recovered in our work. 
Another follow-up question is how the phenomenon of catalysis~\cite{jonathan_1999} can improve achievable probabilities --- it is known that catalysts often enhance the operational capabilities in resource manipulation, and it would be interesting to make this advantage precise in general probabilistic settings. Furthermore, following methods considered in~\cite{regula_2021-1}, one could extend our approach to establish analogous limitations on the probabilistic manipulation of quantum channels, leading to a generalisation of the framework considered here.

\begin{acknowledgments}
I am grateful to the anonymous referees for careful corrections and insightful comments that helped improve this work. Useful comments of Ludovico Lami and an anonymous referee of~\cite{regula_2022} are also acknowledged. I further thank Alexander Streltsov and Ryuji Takagi for discussions in the early stages of this work. 

This work was supported by the Japan Society for the Promotion of Science (JSPS) KAKENHI Grant No.\ 21F21015 and the JSPS Postdoctoral Fellowship for Research in Japan.
\end{acknowledgments}

\bibliographystyle{apsc}
\bibliography{../main}

  \newcounter{thmtemp}
  \setcounter{thmtemp}{\value{theorem}}

\appendix

\makeatletter
\renewcommand{\p@subsection}{\thesection.}
\makeatother

  \makeatletter
\def\renewtheorem#1{%
  \expandafter\let\csname#1\endcsname\relax
  \expandafter\let\csname c@#1\endcsname\relax
  \gdef\renewtheorem@envname{#1}
  \renewtheorem@secpar
}
\def\renewtheorem@secpar{\@ifnextchar[{\renewtheorem@numberedlike}{\renewtheorem@nonumberedlike}}
\def\renewtheorem@numberedlike[#1]#2{\newtheorem{\renewtheorem@envname}[#1]{#2}}
\def\renewtheorem@nonumberedlike#1{  
\def\renewtheorem@caption{#1}
\edef\renewtheorem@nowithin{\noexpand\newtheorem{\renewtheorem@envname}{\renewtheorem@caption}}
\renewtheorem@thirdpar
}
\def\renewtheorem@thirdpar{\@ifnextchar[{\renewtheorem@within}{\renewtheorem@nowithin}}
\def\renewtheorem@within[#1]{\renewtheorem@nowithin[#1]}
\makeatother

\section{Projective robustness in channel discrimination}\label{app:proj_chandisc}

Let us consider a setting in which an ensemble of quantum channels is given as $\{p_i, \N_i\}$, and one of the channels from the ensemble is randomly applied to a chosen input state $\rho$ with corresponding probability $p_i$. By measuring the output of this process with a chosen POVM $\{A_i\}$, we can attempt to distinguish which of the channels was applied to $\rho$. For a given ensemble, input state, and measurement, the average probability of successfully discriminating the channels is given by
\begin{equation}\begin{aligned}\label{eq:disc_prob}
  p_{\succ}(\rho, \{p_i, \N_i\}, \{A_i\}) = \sum_{i} p_i \Tr\!\left[ \N_i(\rho) A_i \right].
\end{aligned}\end{equation}
We then distinguish two scenarios. One is standard channel discrimination, in which our aim is to choose a measurement $\{A_i\}$ which maximises the probability of successfully distinguishing which channel was applied. The other scenario is channel exclusion (anti-distinguishability)~\cite{bandyopadhyay_2014}, where our aim is instead to exclude a channel, i.e.\ determine to the best of our ability which of the channels was \emph{not} applied; we then understand Eq.~\eqref{eq:disc_prob} as the average probability of guessing incorrectly, and we aim to minimise this quantity. In channel discrimination, the robustness $R_\FF(\rho)$ is known to quantify the advantage that a given input state $\rho$ can provide over all free states $\sigma$~\cite{takagi_2019-2}, while in channel exclusion, it is the weight $W_\FF(\rho)$ which quantifies the advantage~\cite{uola_2020-1,ducuara_2020}.
Combining the two, we can show that $\RW(\rho)$ quantifies the advantage when channel discrimination and channel exclusion tasks are considered \emph{at the same time}. 

\begin{boxed}{white}
\begin{theorem}
The projective robustness quantifies the maximal advantage that a given state $\rho$ gives over all free states $\sigma \in \FF$ in simultaneous channel discrimination and exclusion of a fixed channel ensemble, as quantified by the ratio
\begin{equation}\begin{aligned}
  \cfrac{p_{\succ}(\rho, \{p_i, \N_i\}, \{A_i\})}{p_{\succ}(\rho, \{p_i, \N_i\}, \{B_i\})}
\end{aligned}\end{equation}
where we aim to perform channel discrimination with measurement $\{A_i\}$ and channel exclusion with measurement $\{B_i\}$.

Specifically,
\begin{equation}\begin{aligned}\label{eq:channel_disc_statement}
  \sup_{\substack{\{p_i, \N_i\}\\\{A_i\}, \{B_i\}}} \, \cfrac{  \cfrac{p_{\succ}(\rho, \{p_i, \N_i\}, \{A_i\})}{p_{\succ}(\rho, \{p_i, \N_i\}, \{B_i\})} }{ \max_{\sigma \in \FF} \cfrac{p_{\succ}(\sigma, \{p_i, \N_i\}, \{A_i\})}{p_{\succ}(\sigma, \{p_i, \N_i\}, \{B_i\})}  } = \RW(\rho),
\end{aligned}\end{equation}
where the maximisation is over all finite channel ensembles $\{ p_i, \N_i\}_{i=1}^n$ and all POVMs $\{A_i\}_{i=1}^n, \{B_i\}_{i=1}^n$ for which the expression in Eq.~\eqref{eq:channel_disc_statement} is well defined.
\end{theorem}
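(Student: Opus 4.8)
The plan is to prove the two inequalities in \eqref{eq:channel_disc_statement} separately, using the dual formulation of $\RW$ from Theorem~\ref{thm:proj_properties}\eqref{itm:thm_convex_progC} (equivalently \eqref{eq:thm_convex_progB}). The key observation is that the ratio being optimised decomposes nicely: since $p_{\succ}$ is linear in the input state, the numerator ratio for $\rho$ pitted against the best free state in the denominator is exactly of the form $\frac{\langle A,\rho\rangle}{\langle B,\rho\rangle}$ with $A,B \geq 0$ subject to the constraint $\frac{\langle A,\sigma\rangle}{\langle B,\sigma\rangle} \leq 1$ for all $\sigma \in \FF$, once we identify $A = \sum_i p_i \N_i^\dagger(A_i)$ and $B = \sum_i p_i \N_i^\dagger(B_i)$.

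For the inequality ``$\leq$'': given any channel ensemble $\{p_i,\N_i\}$ and POVMs $\{A_i\}, \{B_i\}$ for which the expression is well defined, set $A \coloneqq \sum_i p_i \N_i^\dagger(A_i)$ and $B \coloneqq \sum_i p_i \N_i^\dagger(B_i)$. Both are positive semidefinite since the $\N_i^\dagger$ are positive maps and $A_i, B_i \geq 0$. Then $\langle A, \omega\rangle = p_{\succ}(\omega, \{p_i,\N_i\},\{A_i\})$ and likewise for $B$, for every state $\omega$. The hypothesis that $\max_{\sigma\in\FF} \frac{p_{\succ}(\sigma,\cdot,\{A_i\})}{p_{\succ}(\sigma,\cdot,\{B_i\})}$ appears in the denominator means the whole expression equals $\frac{\langle A,\rho\rangle}{\langle B,\rho\rangle} \Big/ \max_{\sigma\in\FF}\frac{\langle A,\sigma\rangle}{\langle B,\sigma\rangle}$. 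Rescaling $A \mapsto A/\big(\max_{\sigma}\frac{\langle A,\sigma\rangle}{\langle B,\sigma\rangle}\big)$ produces a pair $(A',B)$ satisfying $\frac{\langle A',\sigma\rangle}{\langle B,\sigma\rangle}\leq 1$ for all $\sigma\in\FF$, so by \eqref{eq:thm_convex_progC} the expression is bounded above by $\RW(\rho)$.

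For the converse ``$\geq$'': take near-optimal $A, B \geq 0$ for the program \eqref{eq:thm_convex_progC}, i.e.\ with $\frac{\langle A,\sigma\rangle}{\langle B,\sigma\rangle}\leq 1$ for all $\sigma\in\FF$ and $\frac{\langle A,\rho\rangle}{\langle B,\rho\rangle}$ close to $\RW(\rho)$; by perturbing $B$ with a small multiple of the identity we may assume all relevant inner products are strictly positive, and by rescaling we may assume $A, B \leq \id$ (both are then valid single-outcome effects, which we complete to two-outcome POVMs $\{A, \id - A\}$ and $\{B,\id-B\}$). We then construct a two-channel ensemble. The standard trick is to use the ensemble $\{(\tfrac12,\N_1),(\tfrac12,\N_2)\}$ where $\N_1,\N_2$ are chosen (e.g.\ as measure-and-prepare or erasure-type channels, or simply $\N_1 = \N_2 = \idc$ with a relabelling of outcomes) so that $p_{\succ}(\omega,\cdot,\{A_i\}) \propto \langle A,\omega\rangle$ and $p_{\succ}(\omega,\cdot,\{B_i\}) \propto \langle B,\omega\rangle$ with the same proportionality constant; the simplest realisation is $\N_i = \idc$ for both, $A_1 = A$, $A_2 = \id - A$ for the discrimination POVM and $B_1 = B$, $B_2 = \id - B$ for the exclusion POVM, giving $p_{\succ}(\omega,\cdot,\{A_i\}) = \tfrac12(\langle A,\omega\rangle + \langle \id - A,\omega\rangle\cdot 0)$ up to the choice of which outcome counts as ``correct''. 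Plugging these into the left-hand side of \eqref{eq:channel_disc_statement} and letting the perturbation go to zero recovers $\RW(\rho)$ as a limit, establishing ``$\geq$''.

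The main obstacle I anticipate is the bookkeeping in the converse direction: one must be careful that the constructed ensemble and POVMs keep all the denominators $p_{\succ}(\sigma,\cdot,\{B_i\})$ and $p_{\succ}(\rho,\cdot,\{B_i\})$ strictly positive so that the ratios are well defined, and that the $\max$ over $\sigma\in\FF$ in the denominator of \eqref{eq:channel_disc_statement} genuinely reduces to $\max_\sigma \frac{\langle A,\sigma\rangle}{\langle B,\sigma\rangle}$ rather than something larger — this is why the perturbation of $B$ by $\ve\id$ and a limiting argument are needed, exactly as in the proof of Theorem~\ref{thm:proj_properties}\eqref{itm:dual}. The linearity of $p_{\succ}$ in the state and the fact that $\N_i^\dagger$ preserves positivity do all the real work; everything else is matching the problem to the already-established dual program for $\RW$.
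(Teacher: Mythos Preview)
Your upper bound argument is correct and in fact cleaner than the paper's: you identify $A = \sum_i p_i \N_i^\dagger(A_i)$ and $B = \sum_i p_i \N_i^\dagger(B_i)$ and feed the rescaled pair directly into the dual program~\eqref{eq:thm_convex_progC}, whereas the paper uses the primal decomposition $\rho \leq \lambda\sigma$, $\sigma \leq \mu\rho$ and bounds numerator and denominator separately. Both work; yours makes the connection to the dual more transparent.

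The lower bound, however, has a genuine gap in the explicit construction. With the ensemble $\{(\tfrac12,\idc),(\tfrac12,\idc)\}$ and POVM $\{A,\id-A\}$ you get
\[
  p_\succ(\omega,\{p_i,\N_i\},\{A_i\}) = \tfrac12\<A,\omega\> + \tfrac12\<\id-A,\omega\> = \tfrac12
\]
for every state $\omega$, not $\tfrac12\<A,\omega\>$; the ``$\cdot\,0$'' you wrote in the second term has no justification in the definition $p_\succ = \sum_i p_i \Tr[\N_i(\rho)A_i]$, where the index $i$ already encodes which outcome counts as the correct guess. With identical channels and equal probabilities the success probability is forced to be constant, so the ratio trivialises. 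The fix --- and this is exactly what the paper does --- is to use the degenerate ensemble $p_1 = 1$, $p_2 = 0$ with $\N_1 = \idc$: then $p_\succ(\omega,\cdot,\{A_i\}) = \<A_1,\omega\>$ on the nose, and after normalising $A_1 = A/\norm{A}{\infty}$, $B_1 = B/\norm{B}{\infty}$ the computation goes through. Your instinct that the obstacle lies in keeping denominators positive is misplaced; the real issue is that the success probability cannot be made proportional to $\<A,\omega\>$ unless the ensemble is degenerate (or the channels are engineered far more carefully than you indicate).
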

\end{boxed}
\begin{proof}
We begin by showing that $\RW(\rho)$ always upper bounds the LHS of Eq.~\eqref{eq:channel_disc_statement}, so let us assume that it is finite and take $\rho \leq \lambda \sigma$, $\sigma \leq \mu\rho$ such that $\RW(\rho) = \lambda\mu$. Let $\{p_i, \N_i\}$ be any channel ensemble, and $\{A_i\}, \{B_i\}$ be any feasible POVMs. Then
\begin{equation}\begin{aligned}
  \cfrac{p_{\succ}(\rho, \{p_i, \N_i\}, \{A_i\})}{p_{\succ}(\rho, \{p_i, \N_i\}, \{B_i\})} &\leq  \cfrac{\lambda \, p_{\succ}(\sigma, \{p_i, \N_i\}, \{A_i\})}{\frac{1}{\mu} \, p_{\succ}(\sigma, \{p_i, \N_i\}, \{B_i\})}\\
  &\leq \lambda\mu \, \max_{\sigma' \in \FF} \cfrac{p_{\succ}(\sigma', \{p_i, \N_i\}, \{A_i\})}{p_{\succ}(\sigma', \{p_i, \N_i\}, \{B_i\})},
\end{aligned}\end{equation}
where the first inequality follows by the positivity of $A_i$ and $B_i$ and linearity of $p_{\succ}$ in $\rho$. Since this holds for any feasible channel discrimination task, we get the desired inequality.

To show the other inequality, we will use a binary channel discrimination task based on a construction of Ref.~\cite{takagi_2019}.
Let us then consider the channel ensemble $\{p_i, \N_i\}_{i=1}^2$ where $\N_1$ is the identity channel and $\N_2$ is arbitrary, and the probabilities are chosen as $p_1 = 1, p_2 = 0$. Now take any operators $A, B$ feasible for the dual formulation of $\RW(\rho)$ in Eq.~\eqref{eq:thm_convex_progC}, and define the POVMs $\{A_i\}, \{B_i\}$ as $A_1 = A / \norm{A}{\infty}$, $A_2 = \id - A_1$ and analogously for $B$. This gives
\begin{equation}\begin{aligned}
  \cfrac{  \cfrac{p_{\succ}(\rho, \{p_i, \N_i\}, \{A_i\})}{p_{\succ}(\rho, \{p_i, \N_i\}, \{B_i\})} }{ \max_{\sigma \in \FF} \cfrac{p_{\succ}(\sigma, \{p_i, \N_i\}, \{A_i\})}{p_{\succ}(\sigma, \{p_i, \N_i\}, \{B_i\})}  } &= \cfrac{ \cfrac{ \;\cfrac{\<A, \rho \>}{\norm{A}{\infty}}\; }{ \cfrac{\<B, \rho \>}{\norm{B}{\infty}} } }{ \max_{\sigma \in \FF} \cfrac{ \;\cfrac{\<A, \sigma \>}{\norm{A}{\infty}}\; }{ \cfrac{\<B, \sigma \>}{\norm{B}{\infty}}} }
  = \frac{ \<A, \rho \> }{ \< B, \rho \> } \min_{\sigma \in \FF} \frac{ \<B, \sigma \> }{ \< A, \sigma \> }
  \geq \frac{ \<A, \rho \> }{ \< B, \rho \> },
\end{aligned}\end{equation}
where the last inequality holds since any feasible $A, B$ satisfy $\< A, \sigma \> \leq \< B, \sigma \> \; \forall \sigma \in \FF$. Here we have constrained ourselves to operators $A, B$ such that $\< A, \rho \> \neq 0 \neq \< B, \rho \>$ and $\< A, \sigma \> \neq 0 \neq \< B, \sigma \> \; \forall \sigma \in \FF$ to ensure that Eq.~\eqref{eq:channel_disc_statement} is well defined. 
Since the supremum over all feasible $A, B$ is precisely $\RW(\rho)$, this establishes the projective robustness as a lower bound on the LHS of Eq.~\eqref{eq:channel_disc_statement}, concluding the proof.
\end{proof}


\end{document}